\newtheorem{lemma}{Lemma}
\newtheorem{corollary}{Corollary}
\newtheorem{proposition}{Proposition}
\newtheorem{theorem}{Theorem}
\newtheorem{definition}{Definition}
\theoremstyle{definition}
\newtheorem{remark}{Remark}
\newtheorem{example}{Example}
\newtheorem{assumption}{Assumption}
\newtheorem{condition}{Condition}
\DeclareMathOperator*\diag{diag}
\DeclarePairedDelimiter{\norm}{\lVert}{\rVert}
\newcommand{\ind}{\mathbbm{1}}
\newcommand{\cA}{\mathcal{A}}
\newcommand{\sA}{\mathsf{A}}
\newcommand{\sB}{\mathsf{B}}
\newcommand{\sC}{\mathsf{C}}
\newcommand{\cE}{\mathcal{E}}
\newcommand{\bbE}{\mathbb{E}}
\newcommand{\cF}{\mathcal{F}}
\newcommand{\cG}{\mathcal{G}}
\newcommand{\bI}{\bm{I}}
\newcommand{\bK}{\bm{K}}
\newcommand{\cM}{\mathcal{M}}
\newcommand{\cN}{\mathcal{N}}
\newcommand{\bbN}{\mathbb{N}}
\newcommand{\cO}{\mathcal{O}}
\newcommand{\bP}{\bm{P}}
\newcommand{\bQ}{\bm{Q}}
\newcommand{\bR}{\bm{R}}
\newcommand{\bbR}{\mathbb{R}}
\newcommand{\fR}{\mathfrak{R}}
\newcommand{\cS}{\mathcal{S}}
\newcommand{\sS}{\mathsf{S}}
\newcommand{\cT}{\mathcal{T}}
\newcommand{\cX}{\mathcal{X}}
\newcommand{\sX}{\mathsf{X}}
\newcommand{\sY}{\mathsf{Y}}
\DeclarePairedDelimiter{\TV}{\lVert}{\rVert_{\mathrm{TV}}}
\def\adds{\cN_{\mathrm{a}}}  
\def\dels{\cN_{\mathrm{d}}} 
\def\swaps{\cN_{\mathrm{s}}}  
\def\nstar{\cN_{\star}}
\def\wadd{w_{\mathrm{a}}}
\def\wdel{w_{\mathrm{d}}}
\def\wswap{w_{\mathrm{s}}}
\def\wstar{w_{\star}}
\def\Zadd{Z_{\mathrm{a}}}
\def\Zdel{Z_{\mathrm{d}}}
\def\Zswap{Z_{\mathrm{s}}}
\def\Zstar{Z_{\star}}
\def\hadd{h_{\mathrm{a}}}
\def\hdel{h_{\mathrm{d}}}
\def\hswap{h_{\mathrm{s}}}
\def\hstar{h_{\star}}
\def\true{\gamma_{\rm{true}}}
\def\PR{B}     
\def\PP{\pi_n}
\def\LOT{LIT-MH}  
\def\RW{RW-MH}
\def\LIB{LB-MH}
\def\KRW{\bK_{\mathrm{rw}}}
\def\KLOT{\bK_{\mathrm{lit}}}
\def\Tmix{T_{\mathrm{mix}}}
\def\MHP{\bm{P}_{\rm{lit}}}
\def\KZ{\bK_{\mathrm{lb}}}
\def\KLIB{\bK_{\mathrm{lb}}}
\def\acc{\mathrm{acc}}
\def\fR{\mathfrak{R}}
\def\ev{\sigma^2_z}
\def\X{X}
\def\ty{y_{\rm{s}}}
\def\tX{\tilde{\mathsf{X}}}
\def\PJ{P} 
\def\OPJ{P^{\perp}} 
\def\Cbeta{\beta_{\rm{min}}}
\def\Ga{\gamma^* \setminus \gamma}
\def\Gb{\gamma \setminus \gamma^*}
\def\tgamma{\tilde{\gamma}}
\def\YWJ{1}
\def\N{\mathrm{MN}}
\def\P{\mathbb{P}}
\def\E{\mathbb{E}}
\def\CC{\mathfrak{C}}
\newcommand{\cond}[1]{(\YWJ{}\ref{#1})}
\newcites{s}{References}
\def\TITLE{Dimension-free Mixing for High-dimensional Bayesian Variable Selection}
\begin{document}

\begin{frontmatter} 
\title{\TITLE{}}
\runtitle{Dimension-free mixing of LIT-MH} 

\begin{aug} 
\author[A]{\fnms{Quan} \snm{Zhou}\ead[label=e1]{quan@stat.tamu.edu}}, 
\author[B]{\fnms{Jun} \snm{Yang}\ead[label=e2]{jun.yang@stats.ox.ac.uk}}, 
\author[C]{\fnms{Dootika} \snm{Vats}\ead[label=e3]{dootika@iitk.ac.in}}, 
\author[D]{\fnms{Gareth O.} \snm{Roberts}\ead[label=e4]{Gareth.O.Roberts@warwick.ac.uk}}

\and 
\author[E]{\fnms{Jeffrey S.} \snm{Rosenthal}\ead[label=e5]{jeff@math.toronto.edu}}
\address[A]{Department of Statistics, Texas A\&M University}
\address[B]{Department of Statistics, University of Oxford}
\address[C]{Department of Mathematics and Statistics, Indian Institute of Technology Kanpur}
\address[D]{Department of Statistics, University of Warwick}
\address[E]{Department of Statistical Sciences, University of Toronto}
\end{aug}

\begin{abstract} 
\citet{yang2016computational} proved that the symmetric random walk Metropolis--Hastings algorithm for Bayesian variable selection is rapidly mixing under mild high-dimensional assumptions. 
We propose a novel MCMC sampler using an informed proposal scheme, which we prove achieves a much faster mixing time that is independent of the number of covariates, under the assumptions of~\citet{yang2016computational}. To the best of our knowledge, this is the first high-dimensional result which rigorously shows that the mixing rate of informed MCMC methods can be fast enough to offset the computational cost of local posterior evaluation. 
Motivated by the theoretical analysis of our sampler, we further propose a new approach called ``two-stage drift condition'' to studying convergence rates of Markov chains on general state spaces, which can be useful for obtaining tight complexity bounds in high-dimensional settings.
The practical advantages of our algorithm are illustrated by both simulation studies and real data analysis. 
\end{abstract}

\begin{keyword}
\kwd{add-delete-swap sampler}
\kwd{drift condition}
\kwd{finite Markov chain}
\kwd{genome-wide association study}
\kwd{informed MCMC} 
\kwd{rapid mixing}
\end{keyword}
\end{frontmatter}

\section{Introduction} \label{sec:intro}
Consider a variable selection problem where we observe an $n \times p$ design matrix $\X$ and a response vector $y$; each column of $\X$ represents a covariate. 
The goal is to identify the set of all ``influential'' covariates which have non-negligible effects on $y$; we denote this set by $\gamma$. 
We are mostly interested in a high-dimensional setting where $p$ is much larger than the sample size $n$ but most of the covariates have either zero or negligible effects. 
Due to this sparsity assumption, we can choose some threshold $s_0$, which may grow with $n$, and assume that the unknown parameter $\gamma$ takes value in the space 
\begin{align*}
\cM(s_0) = \{  \gamma \subseteq \{1, 2, \dots, p\} \colon |\gamma| \leq s_0  \}, 
\end{align*}
where $|\cdot|$ denotes the cardinality of a set. 
By assigning a prior distribution on $\cM(s_0)$ and then updating it using the data, we can compute the posterior distribution of $\gamma$, denoted by $\PP(\gamma)$~\citep{chipman2001practical}. 
One advantage of the Bayesian approach is that we can make inferences by averaging over $\PP$, a property known as  model averaging~\citep{kass1995bayes}.  This is different from methods such as penalized regression,  where we aim to find a single best model that minimizes some loss function.   
For theoretical results on Bayesian variable selection in high-dimensional settings, see~\citet{johnson2012bayesian, narisetty2014bayesian, castillo2015bayesian, jeong2021unified}, among many others. 

\subsection{Background and main contributions of this work}\label{sec:contributions}
The calculation of $\PP$ is usually performed by  Markov chain Monte Carlo (MCMC) sampling, including both Metropolis-Hastings (MH) and Gibbs algorithms~\citep{george1993variable, george1997approaches, brown1998multivariate, guan2011bayesian}; see~\citet{o2009review} for a review. 
For problems with extremely large $p$, the efficiency of the MCMC sampler largely depends on how we propose the next state given current state $\gamma$.  
\citet{zanella2020informed} considered the so-called ``locally informed'' proposal schemes on general discrete state spaces, which assign a proposal weight to each neighboring state $\gamma'$ using some function of $\PP(\gamma') / \PP(\gamma)$. Though variable selection was not discussed explicitly in~\citet{zanella2020informed}, similar ideas are utilized in most state-of-the-art MCMC methods for variable selection. 
Examples include the tempered Gibbs sampler of~\citet{zanella2019scalable} and the ASI (adaptively scaled individual adaptation) proposal of~\citet{griffin2021search}, both of which require calculating $\PP(\gamma') $ (up to the normalizing constant) for each $\gamma' \in \cN_1 (\gamma) = \{ \gamma' \colon |\gamma \triangle \gamma' | = 1 \}$, where $\triangle$ denotes the symmetric set difference. 
In a similar spirit, the Hamming ball sampler of~\citet{titsias2017hamming} performs an exact sampling according to $\PP$ within a randomly selected subset of the neighborhood of $\gamma$. 
For non-MCMC algorithms, we note that the design of the shotgun stochastic search~\citep{hans2007shotgun, shin2018scalable} bears a striking resemblance to informed proposals. 

Intuitively, informed proposals rely on the following idea: avoid visiting states with low posterior probabilities by carefully tuning the proposal probabilities. Though this seems very appealing, it naturally comes at the computational cost of evaluating the local posterior landscape around the current state.
For instance, an informed proposal that draws the next state from $\cN_1 (\cdot)$ has complexity linear in $p$. 
Whether such local evaluation of $\PP$ is worthwhile is theoretically unclear, and convergence analysis of informed sampling algorithms (for variable selection) is very challenging because the landscape of $\PP$ is hard to characterize, especially in high-dimensional asymptotic regimes. Indeed, even for the ``uninformed'' random-walk MH algorithm (denoted by \RW{} henceforth), its mixing rate has only been obtained recently by~\citet{yang2016computational} under mild high-dimensional assumptions. The order of their upper bound on the mixing time is approximately $p n s_0^2 \log p$ (see Remark~\ref{rmk:mix}), which shows that \RW{} is rapidly mixing (i.e., the mixing time is polynomial in $n$ and $p$). 
Then, the question is whether informed MCMC methods can achieve sufficiently fast mixing rates that can at least offset the additional computation cost.  

In this work, we rigorously derive a positive answer to the above question. 
We consider a novel informed MH algorithm, named \LOT{}  (Metropolis--Hastings with Locally Informed and Thresholded proposal distributions), which assigns bounded proposal weights to the standard add-delete-swap moves. Under the high-dimensional assumptions made in~\citet{yang2016computational}, \LOT{} achieves a mixing time that does not depend on $p$. To the best of our knowledge, this is the first dimension-free mixing time result for an informed MCMC algorithm in a ``general'' high-dimensional setting. (There exist similar results for special cases where the posterior distribution has independent coordinates or the design matrix is orthogonal, which are not very useful for real high-dimensional problems; see, e.g.,~\citet{zanella2019scalable} and~\citet{griffin2021search}.) To prove the mixing rate of \LOT{}, unlike most existing approaches based on path methods, we propose a ``two-stage drift condition'' method, which provides theoretical insights into the behavior of MCMC methods for variable selection. General results for the two-stage drift condition are derived, which can be useful to other problems where multiple drift conditions hold on different parts of the state space. 
Simulation studies show that \LOT{} can efficiently explore the posterior distribution under various settings. A real data example is also provided, where five genetic variants associated with cup-to-disk ratio are identified. 

\subsection{Motivation for the \LOT{} algorithm}\label{sec:intro.lit} 
One may expect that by using an informed proposal scheme that assigns larger proposal probabilities to states with larger posterior, the resulting MH algorithm requires less iterations than \RW{} to find high posterior regions. 
This is not always true, and surprisingly, it is even possible that such an informed MH algorithm is slowly mixing while \RW{} is rapidly mixing.   

Consider MH algorithms for variable selection that always propose the next state from $\cN_1(\cdot)$; that is, we can either add or remove a covariate (we will consider swap moves later in Section~\ref{sec:model}).  
Suppose we assign proposal weight $ \PP(\gamma')^\nu$ to each $\gamma' \in \cN_1(\gamma)$ for some constant $\nu \geq  0$. That is, we can express the proposal matrix $\bK_\nu$ as 
\begin{equation}\label{eq:def.Kv}
\bK_\nu (\gamma, \gamma') = \frac{  \PP(\gamma')^\nu  }{ \sum_{\tgamma \in \cN_1(\gamma)} \PP(\tgamma)^\nu }  \ind_{\cN_1(\gamma)}(\gamma'),
\end{equation} 
where $\ind$ denotes the indicator function. 
When $\nu = 0$, $\bK_\nu(\gamma, \cdot)$ becomes the uniform distribution on the set $\cN_1(\gamma)$, which is uninformed. It seems desirable to choose some $\nu > 0$  so that with high probability we propose adding an influential covariate or removing a non-influential one. 
We give a toy low-dimensional example below, which shows that for any $\nu > 0$,  the MH algorithm using $\bK_\nu$ as the proposal can fail to work well when the sample size is sufficiently large. 

\begin{example}\label{ex1} 
Suppose that there are only two influential covariates, $X_1$ and $X_2$, and $\PP(\{1, 2\}) \gg \PP(\{i\}) \gg \PP(\emptyset) \gg \PP(\{j\})$ for $i = 1, 2$ and $3 \leq j \leq p$. 
Thus, if we start an MH algorithm at the null model, we want the chain to first move to $\{1\}$ or $\{2\}$ and then move to $\{1, 2\}$. 
By using some $\nu > 0$, we can make the proposal probability $\bK_\nu(\emptyset, \{1\} \cup \{2\})$  close to 1. 
Let $\bP_\nu$ denote the transition matrix of the MH algorithm with proposal $\bK_\nu$ given in~\eqref{eq:def.Kv}.   
To bound the transition probability from   $\emptyset$ to $\{1\}$, observe that $\bK_\nu (\{1\}, \emptyset) \leq \PP(\emptyset)^\nu / \PP( \{1, 2\})^\nu$,  since $\{1, 2\}$ is a neighbor of $\{1\}$.  
It then follows from the Metropolis rule that 
\begin{equation}\label{eq:ex1}
\begin{aligned} 
   \bP_\nu (\emptyset, \{1\}) =\;& \bK_\nu (\emptyset, \{1\})  \min \left\{ 1 ,    \,  \frac{ \PP(\{1\})}{ \PP(\emptyset )}  \frac{ \bK_\nu (\{1\}, \emptyset) }{ \bK_\nu (\emptyset, \{1\})}  \right\}\\
   \leq \;& 
   \frac{ \PP(\{1\})}{ \PP(\emptyset )}   \bK_\nu (\{1\}, \emptyset) \leq 
   \left\{ \frac{ \PP( \{1\} ) }{ \PP(\emptyset)} \right\}^{1 - \nu}  \left\{ \frac{ \PP( \{1\}) }{ \PP( \{1, 2\}) } \right\}^{ \nu}. 
\end{aligned} 
\end{equation}  
An analogous bound holds for $\bP_\nu (\emptyset, \{2\})$.  
It is clear from~\eqref{eq:ex1} that if $\nu > 1$, $\bP_\nu (\emptyset, \{1\})$ can  be exceedingly small. 

Next, we construct a concrete example to show that even if $\nu \in (0, 1]$, $\bP_\nu$ may still have very poor mixing. 
Fix some $\nu \in (0, 1]$. 
Let $X_j$ denote the $j$-th column of $\X$. 
Suppose  the design matrix satisfies $X_j^\top X_j = n$ for each $j \in [p]$, $X_1^\top X_2 = (\nu - 1) n$, and $X_i^\top X_j = 0$ for any other $i < j$.  
Assume that the response vector $y$ is generated by $y = X_1 + X_2 + z $  where $z$ is a deterministic error vector such that $z^\top z  = n$ and $\X^\top z = 0$. 
Choose some $s_0 \geq 2$, and let the posterior distribution be given by~\eqref{eq:post} with hyperparameters $\kappa, g > 0$ (see Section~\ref{sec:model} for details). Fix $p, \nu, \kappa, g$ and let $n$ tends to infinity. 
In Section~\ref{proof.ex1} in the supplement, we show that 
\begin{align*}
    \bK_\nu (\emptyset, \{1\} \cup \{2\}) = 1 - O(e^{- a_1 n}), \quad \quad   
  \bP_\nu (\emptyset, \{1\} \cup \{2\})  = O(e^{ -a_2 n}). 
\end{align*}
where $a_1, a_2 > 0$ are some constants that only depend on $\nu$ and $g$. Hence, the chain must be slowly mixing since $\bP_\nu(\emptyset,  \emptyset) = 1 - O(e^{ -a_1 n}) - O(e^{ -a_2 n})$. That is,  the informed MH chain can get stuck at the null model for exponentially many iterations, where we keep proposing adding $X_1$ or $X_2$ but getting rejected.  
In contrast, one can use the path method of~\citet{yang2016computational} to show that \RW{} is rapidly mixing (proof is omitted). 
\end{example} 

This toy example reveals that the real challenge in developing informed MH algorithms is to bound the acceptance probability of informed proposals. From~\eqref{eq:ex1}, we can see that in order to make $\bP_\nu (\emptyset, \{1\})$ large, we need $\bK_\nu (\emptyset, \{1\})$ to be sufficiently large and $\bK_\nu (\{1\}, \emptyset) $ not to be too small. 
This motivates us to use proposal weights that are bounded both from above and from below so that the proposal probability of any neighboring state is bounded as well. 
Further, we partition the neighborhood of each $\gamma$ according to the proposal type (e.g. addition, deletion or swap) and then perform proposal weighting in each subset separately, which also helps control the acceptance probability of informed proposal moves. 

\subsection{Two-stage drift condition}\label{sec:intro.drift}
Drift-and-minorization methods have been used to show rapid mixing of various MCMC algorithms \citep{rosenthal1995minorization, fort2003geometric,roy2007convergence, vats2017geometric,johndrow2020scalable, yang2017complexity,qin2019convergence}; see~\citet{jones2001honest} for a review. These methods are particularly useful for studying Gibbs sampling algorithms on continuous state spaces. 
One possible reason is that to establish the drift condition, we need to bound the expected change in the drift function in the next MCMC iteration, which is easier if the next sample is drawn from a smooth full conditional distribution.  
For problems like high-dimensional variable selection,  the posterior landscape is highly irregular and difficult to characterize.  The convergence analysis becomes even more challenging for informed MH algorithms since the proposal distribution usually involves normalizing constants that do not admit simple expressions. 

We prove the dimension-free mixing rate of the \LOT{} algorithm using a novel drift condition. 
But unlike traditional drift-and-minorization methods which only involve a single drift condition, we establish two drift conditions on two disjoint subsets of the state space separately. 
Our method is motivated by the forward-backward stepwise  selection~\citep{an2008stepwise} and the insights obtained in~\citet{yang2016computational}. 
Let $\gamma^*$ denote the model consisting of all influential covariates. 
We say a model $\gamma$ is overfitted if $\gamma^* \subseteq \gamma$; otherwise, we say $\gamma$ is underfitted. 
Under certain mild conditions, we expect that the posterior probability mass will concentrate on $\gamma^*$. 
It is then tempting to use  a single drift condition that measures the distance between $\gamma$ and $\gamma^*$.  Unfortunately, this approach may not work. 
The most important reason is that for an underfitted $\gamma$,  non-influential covariates may appear to be influential due to the correlation with some truly influential covariate(s)  missing in $\gamma$ (and similarly, some influential covariates may appear to be non-influential).
Nevertheless, as in the stepwise variable selection, once the model becomes overfitted, we expect that all non-influential covariates can be easily removed.  
This observation suggests that we can partition $\cM(s_0)$ into underfitted and overfitted models. 
On the set of overfitted models, we may construct a drift function using the distance from $\gamma^*$, and the corresponding drift condition should reflect that the chain tends to move towards $\gamma^*$ by removing non-influential covariates. 
On the set of underfitted models, we need a different drift condition capturing the tendency of the chain to add (possibly truly non-influential) covariates, in order to explain the variation in the response variable. 

We propose to use this two-stage drift condition as a general method for convergence analysis of Markov chains; all related results will be derived for general state spaces in Section~\ref{sec:main}.
The flexibility of this approach could be useful to other problems where the state space has a complex topological structure. 
To derive a bound on the mixing time using the two-stage drift condition, we use regeneration theory as in the classical drift-and-minorization methods~\citep{roberts1999bounds}, but it is more difficult in our case to bound the tail probability of the regeneration time. 
In our proof, we split the path of the Markov chain into disjoint segments using an auxiliary sequence of geometric random variables and then apply a union bound argument of~\citet{rosenthal1995minorization}. 
 
The use of the two-stage drift condition is critical to proving the dimension-free mixing of \LOT{}. 
In~\citet{yang2016computational},  the convergence rate of \RW{} is analyzed by using canonical paths~\citep{sinclair1992improved}, a method widely used for Markov chains on discrete spaces~\citep[Chapter 14]{levin2017markov}. 
A key step of their proof is to identify, for any $\gamma \neq \gamma^*$,  a ``high-probability'' path from $\gamma$ to $\gamma^*$ (``high-probability'' means that each step of the path has a sufficiently large transition probability).  
A potential limitation of this approach is that for some $\gamma \neq \gamma^*$, there may exist a large number of ``high-probability'' paths leading to $\gamma^*$, and if we only consider one of them,  the resulting mixing time bound may be loose.  
This is indeed the case for our \LOT{} algorithm. 
In order to obtain a sharp bound on the mixing time, we need to invoke the drift condition to take into account all possible moves, and the method of canonical paths will fail to yield a dimension-free estimate for the mixing time of \LOT{}.  
 
\subsection{Organization of the paper}
In Section~\ref{sec:model} we formally introduce the Bayesian variable selection problem. 
Key results of~\citet{yang2016computational} for the \RW{} algorithm are reviewed in Section~\ref{sec:ywj}, and our \LOT{} algorithm is introduced in Section~\ref{sec:informed}. 
In Section~\ref{sec:ss.two}, we construct two drift conditions for \LOT{} and then derive the mixing time bound  in Theorem~\ref{th:mh.mix}. 
In Section~\ref{sec:main}, we consider the two-stage drift condition in a general setting, for which the main result is presented in Theorem~\ref{th:drift}. 
Simulation studies are presented in Section~\ref{sec:sim}, with some results provided in Section~\ref{supp:sim} in the supplement. 
A real data example is provided in Section~\ref{sec:gwas}, where we apply the \LOT{} algorithm to genome-wide association studies on glaucoma. 
Section~\ref{sec:discuss} concludes the paper with some discussion on the implementation and generalization of \LOT{} and its differences from other MCMC methods. 
All technical proofs are relegated to the supplement.

\section{\RW{} and \LOT{} algorithms for  variable selection}\label{sec:selection} 
We first define some notation. Let $[p] = \{1, 2, \dots, p\}$. 
For  $\gamma \subseteq [p]$, let $\X_\gamma$  denote the submatrix of $X$ with columns indexed by $\gamma$, and $\beta_\gamma$  denote the subvector with entries indexed by $\gamma$.
Recall  that $|\cdot|$ denotes the cardinality of a set. 
  
\subsection{Model, prior and local proposals}\label{sec:model}
Consider a sparse linear regression model,
\begin{align*}
y = \X_\gamma \beta_\gamma + e,  \quad e \sim \N(0, \, \phi^{-1} I_n), 
\end{align*} 
where  $\N$ denotes the multivariate normal distribution and $I_n$ is the identity matrix. 
Hence, $\gamma$ can be understood as the set of nonzero entries of $\beta$. 
We follow~\citet{yang2016computational} to consider the following prior: 
\begin{equation} \label{eq:priors}
\begin{array}{ll}
\text{(g-prior)}      &  \beta_\gamma \mid \gamma \sim \N(0, \, g \phi^{-1} (\X_\gamma^\top \X_\gamma)^{-1} ), \\
\text{(precision prior)}  \quad \quad    & \pi_0(\phi) \propto \phi^{-1}, \\
\text{(sparsity prior)}      &  \pi_0(\gamma) \propto p^{- \kappa_0 |\gamma|} \ind_{\cM(s_0)} (\gamma),  \\
\text{(choice of $g$)}      &  1 + g =  p^{2 \kappa_1}, 
\end{array}
\end{equation}
where $\kappa_0, \kappa_1 > 0$ are hyperparameters,  $\pi_0$ denotes the prior probability density/mass function, and we recall $s_0$ is the maximum model size we allow. 
After integrating out $\beta$, the marginal posterior probability of $\gamma \subseteq [p]$ can be computed by  
\begin{equation}\label{eq:post}
\PP( \gamma) \propto p^{- \kappa |\gamma|} 
\left( g^{-1} y^\top y  +  y^\top  \OPJ_\gamma   y   \right)^{-n/2} \ind_{\cM(s_0)} (\gamma), 
\end{equation}
where $\kappa = \kappa_0 + \kappa_1$ and  $\OPJ_\gamma $ denotes the projection matrix: 
\begin{equation*} 
 \OPJ_\gamma = I_n - \PJ_\gamma,  \quad \quad  \PJ_\gamma = \X_\gamma (\X_\gamma^\top \X_\gamma)^{-1} \X_\gamma^\top. 
\end{equation*}
For two models $\gamma, \gamma'$, let $\PR(\gamma, \gamma')$ denote their posterior probability ratio.
It follows from~\eqref{eq:post} that 
\begin{equation}\label{eq:ppr}
\PR(\gamma, \gamma') = \frac{\PP(\gamma' )}{ \PP(\gamma)} = p^{\kappa (|\gamma| - |\gamma'|)} \left\{ 1 + \frac{y^\top (\PJ_{\gamma} - \PJ_{\gamma'})y}{g^{-1} y^\top y + y^\top \OPJ_\gamma y}  \right\}^{-n/2}. 
\end{equation} 

For MH algorithms on the space $\cM(s_0)$, the most common approach is to use a proposal scheme consisting of three types of local moves, ``addition'', ``deletion'' and ``swap'',  which induces an irreducible Markov chain on $\cM(s_0)$. Explicitly, for every $\gamma \in \cM(s_0)$, define the addition, deletion and swap neighborhoods of $\gamma$ by 
\begin{equation}\label{eq:def.neighbor}
\begin{aligned} 
\adds(\gamma) =\;&  \{ \gamma' \in \cM(s_0)  \colon  \gamma' = \gamma \cup \{j\} \text{ for some } j \notin \gamma  \},   \\
\dels(\gamma) = \;&   \{ \gamma' \in \cM(s_0)  \colon  \gamma' = \gamma \setminus \{k\} \text{ for some } k \in \gamma  \}, \\
\swaps(\gamma) =\;&  \{  \gamma' \in \cM(s_0)  \colon \gamma' = (\gamma \cup \{j\} ) \setminus \{k\}  \text{ for some } j \notin\gamma, \, k \in \gamma \}. 
\end{aligned}
\end{equation} 
The three sets are disjoint, and for each $\gamma \in \cM(s_0)$, $|\adds(\gamma) \cup \dels(\gamma)| \leq p$ (the equality holds if $|\gamma| < s_0$) and $|\swaps(\gamma)| \leq p s_0$. The definitions of these neighborhoods can be generalized by allowing changing more covariates at one time. 
Let $\cN(\gamma) = \adds(\gamma) \cup \dels(\gamma) \cup \swaps(\gamma)$ denote the union of the three neighborhoods.  

\subsection{Rapid mixing of the \RW{} algorithm }\label{sec:ywj}  
Consider a proposal scheme defined by a transition probability matrix $\KRW\colon \cM(s_0) \times \cM(s_0) \rightarrow [0, 1]$ such that 
\begin{equation}\label{eq:KRW}
\KRW(\gamma, \gamma') =   \frac{\hadd(\gamma) \ind_{\adds(\gamma) } (\gamma')    }{ |\adds(\gamma) | } + 
 \frac{\hdel(\gamma) \ind_{ \dels(\gamma)} (\gamma') }{ |\dels(\gamma)| } + \frac{ \hswap(\gamma) \ind_{\swaps(\gamma)}(\gamma') }{  |\swaps(\gamma)| },
\end{equation}
where $\hadd(\gamma), \hdel(\gamma), \hswap(\gamma)$ are non-negative constants that sum to $1$; thus, $\hadd(\gamma)$ is the probability of proposing an addition move given current state $\gamma$.  
When $\hadd, \hdel, \hswap$ are all constants independent of $\gamma$, we refer to the resulting MH algorithm as asymmetric \RW{}. 
If $\hadd(\gamma) = | \adds(\gamma) | / 2p, \hdel (\gamma) = |\dels(\gamma)| / 2p$ and $\hswap(\gamma) = 1/2$, the resulting MH algorithm is called symmetric \RW{}, since  $\KRW(\gamma, \gamma') = \KRW(\gamma', \gamma)$ for any $\gamma, \gamma' \in \cM(s_0)$.  
We will now explain the main idea of the proof for the rapid mixing of \RW{} given in~\citet{yang2016computational}, which will be useful later for studying the mixing time of \LOT{}. Note that though~\citet{yang2016computational} only considered symmetric \RW{},  their argument can be easily extended to prove the rapid mixing of asymmetric \RW{}. 

A key step in the mixing time analysis of \RW{} is to characterize the shape of $\PP$. To this end, suppose that the true model is given by $y = X \beta^* + z$ where $z \sim \N(0,  \ev I_n)$ and define  
\begin{equation}\label{eq:beta.min}
\gamma^* = \{ j \in [p]: \, | \beta^*_j | \geq  \Cbeta \}, \quad \quad s^* = |\gamma^*|,  
\end{equation}
where $\Cbeta > 0$ is some threshold.  Covariates in $\gamma^*$ are called ``influential'', and we assume $s^* \leq s_0$ so that identification of $\gamma^*$ is possible. 
We are interested in a high-dimensional asymptotic regime where $p$ can grow much faster than $n$ but $s_0 \log p = O(n)$;  $s_0, s^*, \gamma^*, \Cbeta$ are also allowed to depend on $n$. 

Assume the entries of $\beta^*$ corresponding to non-influential covariates are very close to zero so that $y - \X_{\gamma^*} \beta^*_{\gamma^*}$ can be effectively treated as the noise. Denote by $\ty = \X_{\gamma^*} \beta^*_{\gamma^*}$  the signal part of $y$.  
For an overfitted model $\gamma$, 
the variation of the signal $\ty$ is fully explained, so non-influential covariates in $\gamma$ tend to be ``useless'':  they may happen to explain some noise, but the degree can be controlled by concentration inequalities and mild eigenvalue conditions on the design matrix $\X$. 
Provided that the penalty on the model size is sufficiently large, we should be able to remove  non-influential covariates from an overfitted model. 
If $\gamma$ is underfitted (i.e., $\gamma^* \not\subseteq \gamma$), the analysis becomes much more complicated due to the correlation among the covariates. 
Suppose for some $j \in \gamma^*$, $|\beta^*_j|$ is very large but $X_j$  is not selected in the model $\gamma$. 
Then, non-influential covariates in $\gamma^c \setminus \gamma^*$ that are slightly correlated with $X_j$ may be added to the model, and similarly, 
influential covariates in $\gamma^* \cap \gamma$ may be removed from $\gamma$ due to correlation with $X_j$. 
However, a known result from forward-backward stepwise selection~\citep{an2008stepwise} guarantees that an underfitted model will eventually become overfitted in order to fully explain the signal,  as long as the sample size $n$ and $\Cbeta$ in~\eqref{eq:beta.min} are sufficiently large relative to the collinearity in $\X$. 

The above reasoning implies that the following condition, under certain mild assumptions,  would hold true with high probability for a sufficiently large sample size,  which was proved in~\citet[Lemma 4]{yang2016computational}.  

\begin{condition}\label{cond:ywj}
There exist $\gamma^* \in \cM(s_0)$ and constants $c_0, c_1 > 0$ (not depending on $\gamma$) such that the following three conditions are satisfied. (We say $\gamma$ is overfitted if $\gamma^* \subseteq \gamma$, and underfitted if $\gamma^* \not\subseteq \gamma$.)
\begin{enumerate}[({1}a)] 
\item For any overfitted $\gamma \in \cM(s_0)$ and $j \notin \gamma$,  $\PR(\gamma, \gamma\cup\{j\}) \leq p^{- c_0}$.  \label{c1}
\item For any underfitted $\gamma \in \cM(s_0)$,  there exists some $j \in \Ga$, which may not be unique, such that $\PR(\gamma, \gamma\cup\{j\}) \geq p^{c_1}$.  \label{c2}
\item For any underfitted $\gamma$ with $|\gamma| = s_0$,  there exist some $j \in \Ga$ and $k \in \Gb$, which may not be unique, such that $\PR(\gamma, (\gamma \cup \{j\}) \setminus \{k\}) \geq p^{c_1}$.   \label{c3}  
\end{enumerate}
\end{condition}

Though in the statement of Lemma 4 of~\citet{yang2016computational}, they set $c_0 = 2$ and $c_1 = 3$, their argument actually proved Condition~\ref{cond:ywj} for at least $c_0 = 2$ and $c_1 = 4$, which suffices for the analysis to be conducted in later sections. Indeed, by modifying the universal constants  in their assumptions,  the  same argument can  prove the claim for  any fixed $c_0, c_1 > 0$; see Section~\ref{sec:cond.ywj} in the supplement, where we state this result as Theorem~\ref{th:ywj} and provide a sketch of the proof. 
In the proof, we treat the design matrix as fixed and do not make assumptions on how the columns of $\X$ are generated. In particular, the design matrix may include interaction terms which can account for potentially non-linear relationship between the  response and explanatory variables. 

If Condition~\ref{cond:ywj} holds, given any $\gamma \neq \gamma^*$, we can increase the posterior probability by a local move: if $\gamma$ is overfitted, we can remove a non-influential covariate by Condition~\cond{c1}; if $\gamma$ is underfitted, we can find an addition or swap move according to Condition~\cond{c2} and~\cond{c3}. Thus, Condition~\ref{cond:ywj} essentially assumes that $\PP$ is unimodal on $\cM(s_0)$ with respect to the add-delete-swap neighborhood relation; see the definition below.  

\begin{definition}
Given a function $\cN \colon \cM(s_0) \rightarrow 2^{\cM(s_0)}$,  we say $\gamma$ is a local mode (w.r.t. $\cN$) if $\PP(\gamma) \geq \max_{\gamma' \in \cN(\gamma)} \PP(\gamma)$, and we say $\PP$ is unimodal (w.r.t. $\cN$)  if there is only one local mode w.r.t. $\cN$. 
\end{definition}

Another important consequence of Condition~\ref{cond:ywj} is that, as long as  $c_1 > 2 c_0 > 2$, tails of $\PP$ ``decay fast'', since for any integer $k \geq 1$, we have $\PP( \cS_k ) \leq p^{1 - c_0} \PP(\cS_{k - 1} )$ where $\cS_k = \{ \gamma \in \cM(s_0) \colon  | \gamma \triangle \gamma^*| = k \}$ denotes the set of all models that have a Hamming distance of $k$ from $\gamma^*$.  This fact is a byproduct of the rapid mixing proof and implies that as $n$ tends to infinity, $\PP(\gamma^*) \rightarrow 1$ in probability with respect to the true data-generating probability measure, a property that is often known as  strong model selection consistency and has been proved for other spike-and-slab priors~\citep{narisetty2014bayesian}. 
To prove the rapid mixing of \RW{},  we define an operator $\cT \colon \cM(s_0) \rightarrow \cM(s_0)$ such that $\cT(\gamma^*) = \gamma^*$, and for any $\gamma \in \cM(s_0) \setminus \{ \gamma^* \}$, $\cT(\gamma) \in  \cN(\gamma)$ and $\PR(\gamma,  \cT(\gamma)) \geq p^{ c_0 \wedge c_1}$.  
Then, as shown in~\citet{yang2016computational}, one can construct a canonical path ensemble, which yields a bound on the spectral gap of the transition matrix of the  \RW{} chain~\citep{diaconis1991geometric, sinclair1992improved}. 
It is noteworthy that $\PP$ can still be highly ``irregular'' under Condition~\ref{cond:ywj} in the sense that its $p$ coordinates may have a very complicated dependence structure due to the collinearity in the design matrix.

\subsection{The \LOT{} algorithm}\label{sec:informed}
The proposal distribution of the \RW{} algorithm is not ``informed'' in the sense that it is constructed without using  information from $\PP$. 
But as explained in Section~\ref{sec:intro.lit},  a naive informed proposal scheme may lead to worse performance due to exceedingly small acceptance probabilities.  

We consider a more general setup where the proposal weighting can be performed for each type of proposal separately.  By  modifying the transition matrix in~\eqref{eq:KRW}, define $\KLOT \colon \cM(s_0) \times \cM(s_0) \rightarrow [0, 1]$ by 
\begin{equation}\label{eq:KI}
\begin{aligned}
   \KLOT(\gamma, \gamma') =\;& \sum_{\star = \text{`a', `d', `s'}} \frac{ \hstar(\gamma)  \wstar(\gamma' \mid \gamma) }{ \Zstar(\gamma)}   \ind_{\nstar(\gamma)}(\gamma') , \\
   \Zstar(\gamma) =\;& \sum_{\tgamma \in \nstar(\gamma) }  \wstar(\tgamma \mid \gamma), 
\end{aligned}
\end{equation}
where $\wstar(\gamma' \mid \gamma) \in [0, \infty)$ denotes the proposal weight of $\gamma' \in \nstar(\gamma)$ given current state $\gamma$. 
In words, we first sample the type of move with probabilities given by $\hadd(\gamma), \hdel(\gamma)$ and $\hswap(\gamma)$. 
If an addition move is to be proposed, we sample a state  $\gamma' \in \adds(\gamma)$ with weight $\wadd(\gamma' \mid \gamma)$. 
We propose to use 
\begin{equation}\label{eq:def.weight}
\wstar (\gamma' \mid \gamma) =    p^{\ell_\star}  \vee \PR(\gamma, \gamma')  \wedge p^{L_\star}, \quad  \text{ for }\star = \text{`a', `d', `s'}, 
\end{equation}
where $-\infty \leq  \ell_\star \leq  L_\star \leq \infty$ are some constants that may depend on the type of move. 
This proposal scheme has two desirable properties. First, states with larger posterior probabilities are more likely to be proposed. Second, for any $\gamma' \in \nstar(\gamma)$, we can bound its proposal probability from below by 
\begin{align*}
\KLOT(\gamma, \gamma') =  \frac{ \hstar(\gamma) \wstar(\gamma' \mid \gamma ) }{\Zstar(\gamma)}
\geq \frac{   \hstar(\gamma) }{  |\nstar(\gamma)| }p^{\ell_\star - L_\star}. 
\end{align*}   
More generally, these two properties still hold if we replace $\PR(\gamma, \gamma')$ in~\eqref{eq:def.weight} with $f( \PR(\gamma, \gamma') )$ for any monotone function $f \colon (0, \infty) \rightarrow (0, \infty)$; some related discussion will be given in Section~\ref{sec:zanella}.  

Calculating  the normalizing constant $\Zstar$ requires evaluating $\PR(\gamma, \gamma')$ for every $\gamma' \in \nstar(\gamma)$. We use the method described in~\citet[Supplement B]{zanella2019scalable}, and we note that the Cholesky decomposition of $\X_\gamma^\top \X_\gamma$ can be obtained by efficient updating algorithms, which only have complexity $O(|\gamma|^2)$~\citep{smith1996nonparametric, george1997approaches}.  
Assuming that $\X^\top \X$ and $\X^\top y$ are precomputed, the complexity of each addition or deletion move has complexity $O(p|\gamma|^2)$ for \LOT{} and complexity $O(|\gamma|^2)$ for \RW{}. We will discuss the implementation of swap moves in Section~\ref{sec:swap}.   
For extremely large $p$, one may first use marginal regression (i.e., simple linear regression of $y$ against $X_j$ for each $j$) to select a subset of  potentially influential covariates~\citep{fan2008sure}. Denoting this subset by $S$,  we then replace the weighting function $\wadd$ in~\eqref{eq:def.weight} by 
\begin{equation}\label{eq:def.twadd}
\begin{aligned}
\tilde{w}_{\rm{a}} (\gamma' \mid \gamma) = \left\{ 
\begin{array}{cc}
 p^{\ell_{\rm{a}} } \vee \PR(\gamma, \gamma')   \wedge p^{L_{\rm{a}}},  \quad &  \text{ if } \gamma' \in \adds(\gamma), \, \gamma'\setminus \gamma \in  S, \\  
 p^{\ell_{\rm{a}}}, &  \text{ if } \gamma' \in \adds(\gamma), \, \gamma'\setminus \gamma \notin  S. \\  
\end{array}
\right. 
\end{aligned}
\end{equation} 
The function $\tilde{w}_{\rm{s}}$ can be  defined similarly. Note that the calculation of $\wdel$ is much easier since $|\dels(\gamma)| = |\gamma| \leq s_0$. 
In Section~\ref{sec:gwas}, we will see that such a practical implementation of the \LOT{} algorithm works well for a real data set with $p = 328,129$. 

\section{Dimension-free mixing of \LOT{} }\label{sec:ss.two}
In this section, we prove that, if the parameters of \LOT{} are properly chosen, the algorithm can achieve a dimension-free mixing rate  when $\PP$ satisfies Condition~\ref{cond:ywj} (the actual data-generation mechanism and the interpretation of $\gamma^*$ as the set of all influential covariates as given in~\eqref{eq:beta.min} are  irrelevant to our proof as long as Condition~\ref{cond:ywj} is satisfied).  
To simplify the analysis, we  only consider swaps when $|\gamma| = s_0$. 
For any $\gamma$ with $|\gamma| < s_0$, with probability $1/2$ we propose to add a covariate, and with probability $1/2$ we propose to remove one; that is, we let $\hadd(\gamma) = \hdel(\gamma) = 1/2$,  $\hswap(\gamma) = 0$ for  the proposal matrix $\KLOT$ given in~\eqref{eq:KI}.
If $|\gamma| = s_0$, we let $\hswap(\gamma) = \hdel(\gamma) = 1/2$ and $\hadd(\gamma) = 0$. 
Thus, $\KLOT$ can be written as 
\begin{equation}\label{eq:def.K}
\begin{aligned}
\KLOT(\gamma,    \gamma') = \frac{\wadd( \gamma \mid \gamma')}{2 \Zadd(\gamma)} \ind_{  \adds(\gamma)  }(\gamma')   +  \frac{ \wdel( \gamma \mid \gamma') }{2 \Zdel(\gamma)} \ind_{  \dels(\gamma)  }(\gamma')   \quad &  \text{ if } |\gamma| < s_0,  \\
\KLOT(\gamma,  \gamma')  = \frac{ \wswap( \gamma \mid \gamma') }{2 \Zswap(\gamma)} \ind_{  \swaps(\gamma)  }(\gamma')   +  \frac{ \wdel( \gamma \mid \gamma') }{2 \Zdel(\gamma)} \ind_{  \dels(\gamma)  }(\gamma') ,  \quad&   \text{ if } |\gamma| = s_0. 
\end{aligned}
\end{equation}
This is different from  the symmetric \RW{} algorithm, where the probability of proposing a deletion move is only $O(s_0 / p)$ since states in $\adds(\gamma) \cup \dels(\gamma)$ are proposed randomly with equal probability. 

For the weighting functions $\wadd, \wdel, \wswap$, we assume 
\begin{equation}\label{eq:def.weights}
\begin{aligned}
\wadd(\gamma' \mid \gamma) = \;&  \PR(\gamma, \gamma')  \wedge p^{c_1},  \\ 
\wdel(\gamma' \mid \gamma)  = \;&   1 \vee \PR(\gamma, \gamma')   \wedge p^{c_0},  \\  
\wswap(\gamma' \mid \gamma)  = \;&  p s_0 \vee \PR(\gamma, \gamma')   \wedge p^{c_1}, 
\end{aligned}
\end{equation} 
where $\PR$ is the posterior probability ratio given in~\eqref{eq:ppr} and $c_0, c_1$ are constants given in Condition~\ref{cond:ywj}. Clearly, they are special cases of the general form given in~\eqref{eq:def.weight}. 
Other choices of the threshold values may also yield the same mixing rate. 
For example, one may use $\wadd(\gamma' \mid \gamma) = \PR(\gamma, \gamma')$,  and the proof will be essentially the same.  However, for $\wdel$ and $\wswap$, the use of  two-sided thresholds is critical. 
 
\subsection{Two-stage drift condition for \LOT{}}\label{sec:drifts}
Let $\KLOT$ be as defined in~\eqref{eq:def.K} and~\eqref{eq:def.weights}  and $\MHP$ denote the corresponding transition matrix, which is given by 
\begin{equation}\label{eq:MH}
\MHP( \gamma, \gamma') = \left\{\begin{array}{cc}
\KLOT(\gamma, \gamma') \min \left\{ 1, \, \frac{\PP(\gamma' ) \KLOT(\gamma', \gamma)}{ \PP(\gamma) \KLOT(\gamma, \gamma')} \right\}, & \text{ if } \gamma' \neq \gamma,  \\
1 - \sum_{\gamma' \neq \gamma} \MHP(\gamma, \gamma'),  & \text{ if } \gamma' = \gamma. 
\end{array}
\right.
\end{equation}
For any function $f$,  let $(\MHP  f )(\gamma) = \sum_{\gamma'} f(\gamma') \MHP(\gamma, \gamma') $. 
If for some set $\sA \subset \cM(s_0)$, function $V\colon  \cM(s_0) \rightarrow [1, \infty)$ and constant $\lambda \in (0, 1)$,
\begin{equation}\label{eq:def.drift.lit}
  (\MHP V)(\gamma) \leq \lambda V(\gamma), \quad \forall \,  \gamma \in \sA. 
\end{equation}
we say the \LOT{} chain satisfies a drift condition on $\sA$, which implies that  the entry time of the \LOT{} chain into $\sA^c$ has a ``thin-tailed'' distribution (see Lemma~\ref{lm:drift0}). 
 
To analyze the convergence rate of the \LOT{} algorithm, we will establish two drift conditions, one for underfitted models and the other for overfitted models.  The two conditions jointly imply that, if initialized at an underfitted model, the \LOT{} chain tends to first move to some overfitted model and then move to  $\gamma^*$. 
Then,  general results for the two-stage drift condition to be proved in Section~\ref{sec:main} (see Theorem~\ref{th:drift} and Corollary~\ref{coro:hd})  can be used to derive a bound on the mixing time of \LOT{}. 
Define 
\begin{equation*}\label{eq:def.O}
\cO = \cO(\gamma^*, s_0) = \{ \gamma \in \cM(s_0) \colon  \gamma^* \subseteq \gamma \}, 
\end{equation*} 
which denotes the set of all overfitted models in $\cM(s_0)$. 
The two drift functions we choose are given by 
\begin{equation}\label{eq:v12}
V_1 (\gamma) =    \left( 1 +  \frac{    y^\top  \OPJ_\gamma   y }{  g^{-1} y^\top y } \right)^{1/ \log(1 + g)},  \quad \quad 
V_2 (\gamma) =  e^{| \gamma \setminus \gamma^*| / s_0}, 
\end{equation}
where we recall   $1 + g = p^{2 \kappa_1}$ defined in~\eqref{eq:priors}.  
If the current model $\gamma \notin \cO$, we expect that $V_1(\gamma)$ tends to decrease in the next iteration  since some covariates can be added to explain the variation of the signal. 
If $\gamma \in \cO \setminus \{\gamma^*\}$, $V_2(\gamma)$ tends to decrease   since non-influential covariates in $\gamma$ can be removed. For convenience, we introduce the notation 
\begin{equation*} 
R_i(\gamma, \gamma') = \frac{V_i(\gamma')}{V_i(\gamma)} - 1, \quad i = 1, 2. 
\end{equation*}
We summarize the properties of functions $V_1, V_2, R_1, R_2$ in the following lemma.
\begin{lemma}\label{lm:R1}
Assume $s_0 \geq 1$. For any $\gamma, \gamma' \in \cM(s_0)$, the following statements hold. 
\begin{enumerate}[(i)]
\item $1 \leq V_1 (\gamma) \leq e$ and $1 \leq V_2 (\gamma) \leq e$. \label{lm1.1}
\item For any  $j \notin \gamma$,  $R_1(\gamma, \gamma\cup\{j\}) \leq 0$; 
for any $k \in \gamma$, $R_1(\gamma, \gamma \setminus \{k\}) \geq 0$.  \label{lm1.2}
\item For any $j \in (\gamma \cup \gamma^*)^c$ and $k \in \Gb$,  \label{lm1.3}
\begin{align*}
R_2(\gamma, \gamma \cup \{j\}) \leq \frac{2}{s_0}, \quad  R_2(\gamma, \gamma \setminus \{k\}) \leq - \frac{1}{2 s_0}. 
\end{align*} 
\end{enumerate}
\end{lemma}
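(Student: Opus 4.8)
The plan is to treat the three parts separately, since each one reduces to an elementary monotonicity or convexity fact and there is no genuinely hard step. For part \eqref{lm1.1}, I would first note that $\OPJ_\gamma$ is an orthogonal projection, so $0 \le y^\top \OPJ_\gamma y \le y^\top y$; feeding these two extremes into the definition of $V_1$ and recalling $1+g = p^{2\kappa_1}$ gives $1 \le V_1(\gamma) \le (1+g)^{1/\log(1+g)} = e$. For $V_2$ the bound is immediate from $0 \le |\gamma \setminus \gamma^*| \le |\gamma| \le s_0$, which yields $1 \le V_2(\gamma) \le e$.

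For part \eqref{lm1.2}, the key observation is that appending the column $X_j$ enlarges the span of $\X_\gamma$, so the associated orthogonal projections are ordered, $\PJ_\gamma \preceq \PJ_{\gamma\cup\{j\}}$, equivalently $y^\top \OPJ_{\gamma\cup\{j\}} y \le y^\top \OPJ_\gamma y$. Since $t \mapsto (1 + t/(g^{-1} y^\top y))^{1/\log(1+g)}$ is nondecreasing on $[0,\infty)$, this forces $V_1(\gamma\cup\{j\}) \le V_1(\gamma)$, i.e.\ $R_1(\gamma,\gamma\cup\{j\}) \le 0$; the deletion claim $R_1(\gamma,\gamma\setminus\{k\}) \ge 0$ is the same statement with $\gamma$ and $\gamma\setminus\{k\}$ interchanged.

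For part \eqref{lm1.3}, I would use the cardinality identities $|(\gamma\cup\{j\})\setminus\gamma^*| = |\gamma\setminus\gamma^*| + 1$ when $j \notin \gamma\cup\gamma^*$, and $|(\gamma\setminus\{k\})\setminus\gamma^*| = |\gamma\setminus\gamma^*| - 1$ when $k \in \gamma\setminus\gamma^*$. These turn the two bounds into $e^{1/s_0} - 1 \le 2/s_0$ and $e^{-1/s_0} - 1 \le -1/(2s_0)$. Writing $x = 1/s_0 \in (0,1]$ (which is exactly where the hypothesis $s_0 \ge 1$ is used), both follow from standard estimates on $[0,1]$: by convexity $e^x \le 1 + (e-1)x \le 1 + 2x$, and $1 - e^{-x} \ge x - x^2/2 \ge x/2$.

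There is no real obstacle here. The only points that require a little care are justifying the projection ordering in part \eqref{lm1.2} (equivalently, that $y^\top \OPJ_\gamma y$ cannot increase when a column is adjoined), and the bookkeeping in part \eqref{lm1.3} of which symmetric-difference cardinality changes by $\pm 1$ under the relevant move; both are routine.
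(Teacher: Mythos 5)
Your proposal is correct and follows essentially the same route as the paper: part (i) via $0 \le y^\top \OPJ_\gamma y \le y^\top y$ and $(1+g)^{1/\log(1+g)}=e$, part (ii) via the monotonicity of $\gamma \mapsto y^\top \OPJ_\gamma y$ under adding a column, and part (iii) via elementary bounds on $e^{\pm x}$ for $x=1/s_0 \in (0,1]$ (the paper uses $e^{x}\le 1+2x$ and $e^{-x}\le 1-x/2$, which your chord/Taylor estimates reproduce). No gaps.
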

\begin{proof}
See Section~\ref{proof.lemmas}. 
\end{proof}

Since $\MHP(\gamma, \gamma) = 1 - \sum_{\gamma' \neq \gamma} \MHP(\gamma, \gamma')$, some algebra yields that, for $i = 1, 2$, 
\begin{equation}\label{eq:pv2}
\begin{aligned}
\frac{ (\MHP V_i) (\gamma) }{ V_i(\gamma)}  
=\;&    1 + \sum_{\gamma' \neq \gamma} R_i (\gamma, \gamma') \MHP(\gamma, \gamma'),  \\
=\;&   1 + \sum_{\star = \text{`a', `d', `s'}} \, \sum_{\gamma' \in \nstar(\gamma)} R_i (\gamma, \gamma') \MHP(\gamma, \gamma'). 
\end{aligned}
\end{equation}
Therefore, we only need to bound the sum of  $R_i (\gamma, \gamma') \MHP(\gamma, \gamma')$ for three types of proposals separately. 
Since by~\eqref{eq:def.K}, the proposal probability of any move is bounded by $1/2$,  we have, for any $\gamma' \neq \gamma$,  
\begin{equation}\label{eq:mh2}
\MHP( \gamma, \gamma')  
= \min \left\{ \KLOT(\gamma, \gamma'),  \,  \PR(\gamma, \gamma') \KLOT(\gamma', \gamma) \right\} \leq  \PR(\gamma, \gamma')/2. 
\end{equation}  

Consider the case of overfitted models first.   Let $\gamma \in \cM(s_0)$ be overfitted. 
By Condition~\cond{c1}, if we remove any non-influential covariate from $\gamma$, we will get a model with a much larger posterior probability; by Condition~\cond{c2}, if we remove any influential covariate from $\gamma$, we will get an underfitted model with a much smaller posterior probability. 
As a result,  when a deletion move is proposed, the covariate to be removed will be non-influential with high probability. The resulting change in $V_2$ can be bounded using Lemma~\ref{lm:R1}\eqref{lm1.3}. Note that we also need to bound the probability of adding this covariate back so that we can show the acceptance probability of the desired deletion move is large. 
The case of adding a non-influential covariate is easier to handle; one just needs to use Lemma~\ref{lm:R1}\eqref{lm1.3} and the inequality in~\eqref{eq:mh2}. 
The argument for swap moves is essentially a combination of those for addition and deletion moves. 
The bounds we find for the summation term in~\eqref{eq:pv2} are given in Lemma~\ref{lm:overfit.all}, from which we obtain the drift condition for overfitted models. 

\begin{proposition}\label{th:overfit} 
Suppose Condition~\ref{cond:ywj} holds for some $c_0 \geq 2$ and $c_1 \geq 1$. 
For any overfitted model $\gamma$ such that $\gamma \neq \gamma^*$ and $|\gamma| \leq s_0$, 
\begin{align*}
\frac{ (\MHP V_2) (\gamma) }{ V_2(\gamma)}   = 1 - \frac{1}{4 s_0} + O\left(  \frac{1}{p s_0} \right).
\end{align*}
\end{proposition}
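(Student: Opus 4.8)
The plan is to expand $(\MHP V_2)(\gamma)/V_2(\gamma)$ via~\eqref{eq:pv2} and bound, one proposal type at a time, the sums $\sum_{\gamma'\in\nstar(\gamma)}R_2(\gamma,\gamma')\MHP(\gamma,\gamma')$ (this is the content of Lemma~\ref{lm:overfit.all}, and the proposition follows by addition). Write $m=|\gamma\setminus\gamma^*|\ge 1$. When $|\gamma|<s_0$ only additions and deletions are live; when $|\gamma|=s_0$ only deletions and swaps; in both regimes a deletion is proposed with probability $1/2$, and this common deletion term carries the leading $-1/(4s_0)$, while the other moves contribute only $O(1/(ps_0))$.

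\textbf{Deletion moves.} By part (1a) of Condition~\ref{cond:ywj}, deleting a non-influential covariate $k\in\gamma\setminus\gamma^*$ from the overfitted $\gamma$ multiplies the posterior by at least $p^{c_0}$, so $\wdel(\gamma\setminus\{k\}\mid\gamma)$ hits its upper threshold $p^{c_0}$; by part (1b), applied to the underfitted $\gamma\setminus\{k'\}$ with $k'\in\gamma^*$, the only covariate whose re-addition restores overfittedness is $k'$ itself, so $\PR(\gamma,\gamma\setminus\{k'\})\le p^{-c_1}<1$ and $\wdel(\gamma\setminus\{k'\}\mid\gamma)=1$. Hence $\Zdel(\gamma)=mp^{c_0}+s^*$. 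For a non-influential deletion the reverse add-back move starts from the overfitted $\gamma\setminus\{k\}$, whose add-normalizer is at most $p\cdot p^{-c_0}$ again by part (1a); inserting this into the Metropolis ratio~\eqref{eq:MH} makes the acceptance probability equal to $1$ (one place where $c_0\ge 2$ is convenient), so $\MHP(\gamma,\gamma\setminus\{k\})=p^{c_0}/\{2(mp^{c_0}+s^*)\}$ for each of the $m$ non-influential $k$, while $R_2(\gamma,\gamma\setminus\{k'\})=0$ for influential $k'$. Combining with $R_2(\gamma,\gamma\setminus\{k\})\le -1/(2s_0)$ from Lemma~\ref{lm:R1}\eqref{lm1.3},
\begin{equation*}
\sum_{\gamma'\in\dels(\gamma)}R_2(\gamma,\gamma')\,\MHP(\gamma,\gamma')\ \le\ -\frac{1}{2s_0}\cdot\frac{mp^{c_0}}{2(mp^{c_0}+s^*)}\ =\ -\frac{1}{4s_0}+O\!\left(\frac{1}{ps_0}\right),
\end{equation*}
using $s^*\le s_0\le p$ and $c_0\ge 2$; the exact value $R_2(\gamma,\gamma\setminus\{k\})=e^{-1/s_0}-1$ confirms this term is of order $1/s_0$.

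\textbf{Addition and swap moves.} If $|\gamma|<s_0$, every addition $\gamma\cup\{j\}$ is again overfitted, so $\PR(\gamma,\gamma\cup\{j\})\le p^{-c_0}$ by part (1a) and $\MHP(\gamma,\gamma\cup\{j\})\le\PR(\gamma,\gamma\cup\{j\})/2$ by~\eqref{eq:mh2}; with $0\le R_2(\gamma,\gamma\cup\{j\})\le 2/s_0$ (Lemma~\ref{lm:R1}\eqref{lm1.3}) and at most $p$ such moves, the addition sum lies in $[0,\,p^{1-c_0}/s_0]=[0,\,O(1/(ps_0))]$ since $c_0\ge 2$. If $|\gamma|=s_0$, a swap removes $k\in\gamma$ and adds $j\notin\gamma$ (so $j\notin\gamma^*$); when $k\notin\gamma^*$ the target is overfitted with the same non-influential count, so $R_2=0$ and nothing is contributed, while when $k\in\gamma^*$ the target $\gamma'=(\gamma\setminus\{k\})\cup\{j\}$ is underfitted with $R_2(\gamma,\gamma')=e^{1/s_0}-1=O(1/s_0)$. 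For these last moves one shows $\PR(\gamma,\gamma')\le p^{-c_1}$ (the posterior-raising swap that part (1c) furnishes out of $\gamma'$ is forced to land on $\gamma$ or on an overfitted size-$s_0$ model whose posterior is at most $\PP(\gamma)$), and then bounds $\MHP(\gamma,\gamma')\le\PR(\gamma,\gamma')\,\KLOT(\gamma',\gamma)$: the reverse swap weight $\wswap(\gamma\mid\gamma')$ is pinned at $p^{c_1}$ since $\PR(\gamma',\gamma)\ge p^{c_1}$, while $\Zswap(\gamma')\ge ps_0\,|\swaps(\gamma')|=ps_0^2(p-s_0)$ from the lower thresholds in~\eqref{eq:def.weights}; the factors $p^{\pm c_1}$ cancel, leaving $\MHP(\gamma,\gamma')=O\big(1/(p^2 s_0^2)\big)$, and summing over the $\le s^*(p-s_0)\le ps_0$ such targets weighted by $R_2=O(1/s_0)$ gives $O(1/(ps_0))$. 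Adding the deletion term to whichever of these is active yields $(\MHP V_2)(\gamma)/V_2(\gamma)=1-1/(4s_0)+O(1/(ps_0))$.

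\textbf{Main obstacle.} Everything but the swap estimate is mechanical bookkeeping of proposal weights, normalizers and Metropolis ratios. The delicate point is $\PR(\gamma,\gamma')\le p^{-c_1}$ when $\gamma'$ is obtained by swapping an influential covariate out of an overfitted $\gamma$: the swap supplied by part (1c) of Condition~\ref{cond:ywj} need not return to $\gamma$, and ruling out that it lands on an overfitted model of strictly larger posterior is exactly where the forward--backward structure of the problem---and with it the two-sided thresholds on $\wdel$ and $\wswap$ and the requirement $c_0\ge 2$---has to be used with care.
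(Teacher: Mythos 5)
Your decomposition into move types, the deletion analysis (including the exact computation $\Zdel(\gamma)=mp^{c_0}+s^*$ and the acceptance-probability-one argument via $\Zadd(\gamma\setminus\{k\})\leq p^{1-c_0}$), and the addition bound all coincide with the paper's proof of Lemma~\ref{lm:overfit.all}. The one place you deviate is the swap case, and that is where there is a genuine gap.

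You need $\PR(\gamma,\gamma')$ to be small when $\gamma'=(\gamma\cup\{j\})\setminus\{k\}$ removes an influential $k\in\gamma^*$, and you try to get this by applying Condition~\cond{c3} to the underfitted $\gamma'$: the posterior-raising swap it furnishes must re-insert $k$ (correct, since $\gamma^*\setminus\gamma'=\{k\}$) and remove some $k''\in\gamma'\setminus\gamma^*$, landing either on $\gamma$ (if $k''=j$) or on $\tilde\gamma=(\gamma\cup\{j\})\setminus\{k''\}$. Your parenthetical claim that $\PP(\tilde\gamma)\leq\PP(\gamma)$ is not justified and cannot be derived from Condition~\ref{cond:ywj}: $\tilde\gamma$ and $\gamma$ are two overfitted models of the same size differing by a swap of one non-influential covariate for another, and the condition places no ordering on them --- decomposing through $\gamma\cup\{j\}$ only gives $\PP(\tilde\gamma)/\PP(\gamma)=\PR(\gamma,\gamma\cup\{j\})\,\PR(\gamma\cup\{j\},\tilde\gamma)$, where the first factor is $\leq p^{-c_0}$ but the second is only bounded \emph{below} by $p^{c_0}$ (and above by $p^{\kappa}$ from~\eqref{eq:ppr}), so the ratio can be as large as $p^{\kappa-c_0}$. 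Your route to $\PR(\gamma,\gamma')\leq p^{-c_1}$ therefore does not close, as you yourself flag in the final paragraph.

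The fix is the paper's Lemma~\ref{lm:overfit1}(iv), which avoids Condition~\cond{c3} entirely: write
\begin{equation*}
\PR\big(\gamma,(\gamma\cup\{j\})\setminus\{k\}\big)=\PR(\gamma,\gamma\cup\{j\})\cdot\PR\big(\gamma\cup\{j\},(\gamma\cup\{j\})\setminus\{k\}\big)\leq p^{-c_0}\cdot p^{-c_1},
\end{equation*}
where the first factor uses Condition~\cond{c1} on the overfitted $\gamma$, and the second uses Condition~\cond{c2} applied to the underfitted $(\gamma\cup\{j\})\setminus\{k\}$, which is missing exactly the one influential covariate $k$, so the existential there forces $\PR\big((\gamma\cup\{j\})\setminus\{k\},\gamma\cup\{j\}\big)\geq p^{c_1}$. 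With this stronger bound $p^{-(c_0+c_1)}$ the crude estimate $\MHP(\gamma,\gamma')\leq\PR(\gamma,\gamma')/2$ from~\eqref{eq:mh2} already yields a swap contribution of at most $p\,s_0\cdot(1/s_0)\cdot p^{-(c_0+c_1)}\leq p^{-c_0}=O(1/(ps_0))$, so your more elaborate bound via $\Zswap(\gamma')\geq ps_0\,|\swaps(\gamma')|$ (which you only needed because your $\PR$ bound was weaker) becomes unnecessary.
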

\begin{proof} 
It follows from~\eqref{eq:pv2} and the bounds provided in Lemma~\ref{lm:overfit.all}. 
\end{proof}

\begin{remark}\label{rmk:over.sharp}
By Proposition~\ref{th:overfit} and Lemma~\ref{lm:R1}\eqref{lm1.1},  if we consider the \LOT{} chain restricted to the set $\cO$, the mixing time has order at most $s_0$. 
The order of this bound is sharp. 
Consider the worst case where $\gamma^* = \emptyset$ and $| \gamma | = s_0$. Then we need approximately $2 s_0$ steps to remove all the covariates in $\gamma$.   
\end{remark}
 
Next, consider the set of all underfitted models. 
Comparing the expression of $V_1$ in~\eqref{eq:v12} with that of $\PP$ in~\eqref{eq:post}, we see that a lower bound on $\PR(\gamma, \gamma')$ can yield an upper bound on $R_1(\gamma, \gamma')$. This is proven in Lemma~\ref{lm:ineq.R1} in Section~\ref{sec:proof.sel}. 
Just like in the analysis of overfitted models, we will bound $R_1 (\gamma, \gamma') \MHP(\gamma, \gamma')$ for three types of proposals separately.  
In particular, by Lemma~\ref{lm:R1}\eqref{lm1.2}, we need to bound the increase in $V_1$ when we remove any covariate and show that the expected decrease in $V_1$ is sufficiently large when we use the addition move (or swap move). 
However, the calculation is much more complicated than in the overfitted case. 
By Condition~\cond{c2}, we know that there exists at least one model in $\adds(\gamma)$ which has a much smaller value of $V_1$; 
denote this model by $\gamma \cup \{j^*\}$.
But in an extreme case, we may have $\PR(\gamma,  \gamma \cup \{j\} ) \geq p^{c_1}$ for every $j \notin \gamma$. This happens when, for some $j^* \in [p]$,  $|\beta^*_{j^*}|$ is extremely large and every non-influential covariate is slightly correlated with $X_{j^*}$. 
Hence,  the proposal probability,  $\KLOT(\gamma, \gamma \cup \{j^*\})$, may be as small as $O ( p^{-1} )$, and if we only consider the best addition move, the  bound on the mixing time will have a factor of $p$. 
This is the main reason why the path method used by~\citet{yang2016computational} is unable to yield a dimension-free mixing time bound for~\LOT{}. 
To overcome this problem, we will directly bound the sum of $R_1 (\gamma, \gamma') \MHP(\gamma, \gamma')$ over all possible addition moves and take into account   ``good'' moves other than $\gamma \cup \{j^*\}$. 
The same technique is also needed for the analysis of swap moves. 
The following proposition gives the drift condition for underfitted models,  where we recall $\kappa = \kappa_0 + \kappa_1$. 

\begin{proposition}\label{th:underfit}  
Suppose that  $n = O(p)$, $s_0 \log p = O(n)$, $\kappa = O(s_0)$,  and Condition~\ref{cond:ywj} holds for some  $c_1$ such that 
$$  (c_0 + 1) \vee 4 \leq c_1 \leq n \kappa_1 - \kappa. $$ 
For any  underfitted model $\gamma \in \cM(s_0)$, 
\begin{align*}
\frac{ (\MHP V_1) (\gamma) }{ V_1(\gamma)}   \leq  1 - \frac{ c_1  }{8 n \kappa_1}  + o \left( \frac{1}{n \kappa_1} \right). 
\end{align*} 
\end{proposition}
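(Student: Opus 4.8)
Looking at this proposition, I need to establish a drift condition $(\MHP V_1)(\gamma)/V_1(\gamma) \leq 1 - c_1/(8n\kappa_1) + o(1/(n\kappa_1))$ for underfitted models. Let me sketch the proof strategy.

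\textbf{Overall strategy.} The plan is to use the decomposition~\eqref{eq:pv2}, which expresses $(\MHP V_1)(\gamma)/V_1(\gamma) - 1$ as a sum of $R_1(\gamma,\gamma')\MHP(\gamma,\gamma')$ over the three neighborhood types (addition, deletion, swap). Since by Lemma~\ref{lm:R1}\eqref{lm1.2} we have $R_1(\gamma,\gamma\cup\{j\}) \leq 0$ for addition moves and $R_1(\gamma,\gamma\setminus\{k\}) \geq 0$ for deletion moves, the addition (and swap-in) contributions are the ``good'' negative terms that drive the drift, while deletion contributions must be shown to be negligible. I would handle the three move types in separate lemmas (presumably Lemma~\ref{lm:ineq.R1} and companion lemmas referenced in the text): first bound $R_1(\gamma,\gamma')$ from above in terms of $\PR(\gamma,\gamma')$ by comparing the form of $V_1$ in~\eqref{eq:v12} with $\PP$ in~\eqref{eq:post}, getting something like $R_1(\gamma,\gamma') \lesssim -\{1 - \PR(\gamma,\gamma')^{-1/\log(1+g)}\}$ for an addition move; second, use the Metropolis bound~\eqref{eq:mh2}, $\MHP(\gamma,\gamma') = \min\{\KLOT(\gamma,\gamma'),\PR(\gamma,\gamma')\KLOT(\gamma',\gamma)\}$, together with the lower bound on $\KLOT$ from the thresholded weights.

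\textbf{Key steps for the addition moves.} This is the heart of the argument. For an underfitted $\gamma$, Condition~\cond{c2} guarantees at least one $j^* \in \gamma^*\setminus\gamma$ with $\PR(\gamma,\gamma\cup\{j^*\}) \geq p^{c_1}$. Rather than using only this single move (which, as the text warns, can have proposal probability as low as $O(p^{-1})$ and would introduce a spurious factor of $p$), I would lower-bound the \emph{total} contribution $\sum_{\gamma'\in\adds(\gamma)} R_1(\gamma,\gamma')\MHP(\gamma,\gamma')$ directly. The idea: partition $\adds(\gamma)$ into the set $G$ of covariates $j$ with $\PR(\gamma,\gamma\cup\{j\}) \geq p^{c_1}$ (which is nonempty) and its complement. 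For $j \in G$, the weight is capped: $\wadd(\gamma\cup\{j\}\mid\gamma) = p^{c_1}$, so $\KLOT(\gamma,\gamma\cup\{j\}) = \tfrac12 p^{c_1}/\Zadd(\gamma) = \tfrac12 p^{c_1}/\big(\sum_{j'} \wadd(\gamma\cup\{j'\}\mid\gamma)\big)$, and since every weight is at most $p^{c_1}$ this is at least $\tfrac{|G|}{2p}\cdot\frac{p^{c_1}}{|G|p^{c_1}+\cdots}$ — I need to be careful here — actually the cleanest route is: $\Zadd(\gamma) \leq p\cdot p^{c_1}$ since there are at most $p$ addition neighbors each with weight $\leq p^{c_1}$, so $\KLOT(\gamma,\gamma\cup\{j\}) \geq \tfrac12 p^{c_1-1}/\!\!\!$ no — $\geq \tfrac12 p^{c_1}/(p\cdot p^{c_1}) = 1/(2p)$ for $j \in G$, but summing over $j\in G$ recovers a total of order $|G|/(2p)$, and crucially $R_1$ for each such move is bounded above by roughly $-(1 - p^{-c_1/\log(1+g)})$, which since $1+g = p^{2\kappa_1}$ equals $-(1 - p^{-c_1/(2\kappa_1\log p)}) \approx -c_1/(2n\kappa_1)$ after also using the acceptance-probability analysis. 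The acceptance side requires bounding $\KLOT(\gamma\cup\{j\},\gamma)$ from below: since deletion is proposed with probability $1/2$ at $\gamma\cup\{j\}$ (as $|\gamma\cup\{j\}|\leq s_0$) and $\wdel$ is bounded below by $1$, this is $\geq \tfrac12/(|\gamma|+1) \geq \tfrac1{2(s_0+1)}$, hence $\PR(\gamma,\gamma\cup\{j\})\KLOT(\gamma\cup\{j\},\gamma) \geq p^{c_1}/(2(s_0+1)) \geq \KLOT(\gamma,\gamma\cup\{j\})$ comfortably, so the move is accepted with probability $1$ and $\MHP(\gamma,\gamma\cup\{j\}) = \KLOT(\gamma,\gamma\cup\{j\})$. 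Summing, $\sum_{j\in G} R_1(\gamma,\gamma\cup\{j\})\KLOT(\gamma,\gamma\cup\{j\}) \leq -\tfrac{c_1}{2n\kappa_1}\cdot\tfrac{|G|}{2p}\cdot$ (something) — and here one must verify that even $|G|=1$ suffices because then $\KLOT(\gamma,\gamma\cup\{j^*\}) \geq \tfrac12 p^{c_1}/\Zadd$ where now $\Zadd \leq p^{c_1} + (p-1)p^{c_1}$ — \emph{no}, the point the paper makes is that if \emph{every} $j$ is in $G$ then each has proposal prob only $\sim 1/p$ but there are $p$ of them, so the sum is $\Theta(1)$; and if only few are in $G$ then those few have proposal probability $\Theta(1/|G|)$... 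I realize the bookkeeping must be: $\sum_{j\in G}\KLOT(\gamma,\gamma\cup\{j\}) = \tfrac12 |G|p^{c_1}/\Zadd(\gamma) \geq \tfrac12 |G|p^{c_1}/(|G|p^{c_1} + (p-|G|)\cdot p^{c_1}) $ — wait non-$G$ weights are $<p^{c_1}$ so this is $> |G|/(2p)$; that still has a $1/p$. The resolution must instead come from non-$G$ moves \emph{also} contributing negatively when their $\PR$ is moderately large, or from the fact that the bound $1/2\cdot 1$ (full mass) is attained once $|G| = p$. I would structure the addition lemma to cover both regimes and take the worst case, which yields the clean $-c_1/(8n\kappa_1)$ after accounting for the factor-of-2 losses.

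\textbf{Deletion and swap moves, and assembly.} For deletion moves I would show $\sum_{k\in\gamma} R_1(\gamma,\gamma\setminus\{k\})\MHP(\gamma,\gamma\setminus\{k\})$ is only $o(1/(n\kappa_1))$: each term is nonnegative, but $R_1(\gamma,\gamma\setminus\{k\}) = O(1/\log(1+g)) = O(1/(n\kappa_1))$ by the same $V_1$-versus-$\PP$ comparison (removing a covariate multiplies the bracket in $V_1$ by at most a bounded factor while the exponent is $1/\log(1+g)$), and $\MHP(\gamma,\gamma\setminus\{k\}) \leq \PR(\gamma,\gamma\setminus\{k\})/2$; combining with the hypothesis $\kappa = O(s_0)$, $s_0\log p = O(n)$ and $c_1 \leq n\kappa_1 - \kappa$ one bounds the total by something like $O(s_0/(n\kappa_1\log p)) = o(1/(n\kappa_1))$ — the constraint $c_1 \leq n\kappa_1 - \kappa$ is exactly what keeps the capped weights $p^{c_1}$ from swamping the $p^{-\kappa|\gamma|}$ sparsity penalty in $\PR$, so that deletion acceptance probabilities stay small. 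Swap moves are handled by Condition~\cond{c3} (only relevant when $|\gamma| = s_0$) and mirror the addition analysis: a swap that exchanges a spurious covariate for a truly influential one decreases $V_1$, and one sums over all good swaps as with additions, using $|\swaps(\gamma)| \leq ps_0$ and the threshold $p s_0 \vee \PR \wedge p^{c_1}$ in $\wswap$. Finally I would plug the three bounds into~\eqref{eq:pv2}: the addition (or swap) term contributes $\leq -c_1/(8n\kappa_1)$, the deletion term contributes $o(1/(n\kappa_1))$, giving the stated inequality. \textbf{The main obstacle} is the addition-move analysis — specifically, getting a lower bound on $\sum_{j\in G}\MHP(\gamma,\gamma\cup\{j\})$ that is bounded below by an absolute constant (not $O(1/p)$) uniformly over the two regimes ``$|G|$ small'' versus ``$|G|$ large,'' since this is precisely the place where the naive single-path argument of~\citet{yang2016computational} fails and where the proposal-weight thresholds $p^{\ell_\star}, p^{L_\star}$ must be exploited in full.
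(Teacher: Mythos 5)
Your overall architecture matches the paper's (decompose via~\eqref{eq:pv2}, convert lower bounds on $\PR$ into bounds on $R_1$ via a lemma like Lemma~\ref{lm:ineq.R1}, use~\eqref{eq:mh2} and the thresholded weights, and sum over \emph{all} good addition moves rather than a single canonical path), but the one step you yourself flag as ``the main obstacle'' is exactly the step you do not resolve, and it is the heart of the proof. With your good set $G=\{j: \PR(\gamma,\gamma\cup\{j\})\geq p^{c_1}\}$ the proposal mass of $G$ can indeed be $\Theta(|G|/p)$, and neither of your suggested escapes (``take the worst case over the two regimes'' or ``the non-$G$ moves also contribute'') is carried out; the second is in fact the right idea, but it only works if you make it quantitative. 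The paper's device is to define the good set with a threshold one power of $p$ \emph{below} the cap, $\cG=\{\gamma'\in\adds(\gamma):\PR(\gamma,\gamma')\geq p^{c_1-1}\}$: then every bad move has weight $<p^{c_1-1}$, so the at most $p$ bad moves contribute at most $p^{c_1}$ to $\Zadd(\gamma)$, i.e.\ $\Zadd(\gamma)\leq W+2p^{c_1}$ where $W$ is the total weight of $\cG\setminus\{\cT(\gamma)\}$ and $\cT(\gamma)$ is the move guaranteed by Condition~\cond{c2}. Combined with $-R_1\geq A/2$ on all of $\cG$ (where $A=(\kappa+c_1)/(2n\kappa_1)$, using $c_1\geq 2$) and $\MHP(\gamma,\gamma')\geq \wadd(\gamma'\mid\gamma)/(2\Zadd(\gamma))$ for all additions (which needs $\Zadd(\gamma)\geq p^{c_1}\geq s_0p^{c_0}$, i.e.\ $c_1\geq c_0+1$, together with $\KLOT(\gamma',\gamma)\geq 1/(2s_0p^{c_0})$ --- note your intermediate bound $\KLOT(\gamma\cup\{j\},\gamma)\geq \tfrac12/(|\gamma|+1)$ is wrong since $\Zdel$ can be as large as $(|\gamma|+1)p^{c_0}$, though your conclusion survives), one gets a weighted average $\geq A(2p^{c_1}+W)/(4\Zadd(\gamma))\geq A/4$ with no case analysis on $|\cG|$ at all. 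Without this threshold trick your addition bound is not established.

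There is a second genuine gap in your deletion analysis. For an underfitted $\gamma$ and a spurious $k\in\gamma$, the ratio $\PR(\gamma,\gamma\setminus\{k\})$ can exceed $1$ (up to $p^{\kappa}$), so $\MHP\leq\PR/2$ gives nothing there, and your claim that $R_1(\gamma,\gamma\setminus\{k\})=O(1/(n\kappa_1))$ is false in general: removing an influential covariate can inflate the bracket in $V_1$ by a factor up to $1+g=p^{2\kappa_1}$, making $R_1$ as large as $e-1$. Even the repaired version (using Lemma~\ref{lm:ineq.R1} when $\PR\geq 1$, giving $R_1\leq 2\kappa/(n\kappa_1)$) only yields a total of order $s_0\kappa/(n\kappa_1)$, which is not $o(1/(n\kappa_1))$ under the stated assumptions. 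The paper instead exploits the \emph{reverse informed proposal}: since $\gamma\setminus\{k\}$ is still underfitted, $\Zadd(\gamma\setminus\{k\})\geq p^{c_1}$ by Condition~\cond{c2}, and $\PR(\gamma,\gamma\setminus\{k\})\,\wadd(\gamma\mid\gamma\setminus\{k\})\leq 1$, so $\MHP(\gamma,\gamma\setminus\{k\})\leq 1/(2p^{c_1})$; then the crude bound $R_1\leq e-1$ suffices to make the deletion term $O(s_0/p^{c_1})=o(1/(n\kappa_1))$. Your swap sketch mirrors the addition case correctly in spirit, but it inherits the same unresolved normalization issue and additionally needs the observation that swaps with $\PR<1$ lead to models that are still underfitted (hence have $\Zswap\geq p^{c_1}$), which is what controls their acceptance probability.
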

\begin{proof} 
It follows from~\eqref{eq:pv2} and Lemma~\ref{lm:underfit.all}. 
\end{proof}
  
\subsection{Mixing time of the \LOT{} algorithm}\label{sec:mix}
The remaining challenge is to find a mixing time bound for the \LOT{} chain by combining the two drift conditions derived in Propositions~\ref{th:overfit} and~\ref{th:underfit}. 
This is a very interesting problem in its own right and will be investigated in full generality in the next section. 
Applying Corollary~\ref{coro:hd} (which will be presented in Section~\ref{sec:hd}), we find the following mixing time bound for the \LOT{} algorithm.  

\begin{theorem}\label{th:mh.mix}
Consider the Markov chain \LOT{} defined by~\eqref{eq:def.K},~\eqref{eq:def.weights} and~\eqref{eq:MH} with stationary distribution $\PP$ given in~\eqref{eq:post}. 
Define the mixing time of \LOT{} by
\begin{align*}
\Tmix = \sup_{\gamma \in \cM(s_0)} \min\{t \geq 0:  \TV{ \MHP^t( \gamma, \cdot) -   \PP (\cdot ) }  \leq 1/4 \}, 
\end{align*}
where $\TV{\cdot}$ denotes the total variation distance. 
Suppose that $n = O(p)$, $s_0 \log p = O(n)$ and $\kappa = O(s_0)$.  If Condition~\ref{cond:ywj} holds for $c_0 = 2$ and $4 \leq c_1 \leq n \kappa_1 - \kappa$,  then, for sufficiently large $n$, we have 
$$\Tmix \leq  800   \max\left\{ \frac{    n \kappa_1  }{ c_1   }, \; 3  s_0      \right\}.$$
\end{theorem}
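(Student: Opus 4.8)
\emph{Proof plan.} The strategy is to read Propositions~\ref{th:overfit} and~\ref{th:underfit} as a pair of drift conditions of the form~\eqref{eq:def.drift.lit} holding on complementary regions of $\cM(s_0)\setminus\{\gamma^*\}$, and then to feed them into the general two-stage result, Corollary~\ref{coro:hd}, taking the singleton $\{\gamma^*\}$ as the distinguished ``small'' set. Put $\sA_1=\{\gamma\in\cM(s_0)\colon \gamma^*\not\subseteq\gamma\}$ and $\sA_2=\cO\setminus\{\gamma^*\}$, so that $\sA_1$, $\sA_2$ and $\{\gamma^*\}$ partition $\cM(s_0)$. Proposition~\ref{th:underfit} gives $(\MHP V_1)(\gamma)\le\lambda_1 V_1(\gamma)$ for all $\gamma\in\sA_1$ with
\begin{align*}
\lambda_1 = 1-\frac{c_1}{8n\kappa_1}+o\!\left(\frac{1}{n\kappa_1}\right),
\end{align*}
Proposition~\ref{th:overfit} gives $(\MHP V_2)(\gamma)\le\lambda_2 V_2(\gamma)$ for all $\gamma\in\sA_2$ with
\begin{align*}
\lambda_2 = 1-\frac{1}{4s_0}+O\!\left(\frac{1}{ps_0}\right),
\end{align*}
and Lemma~\ref{lm:R1}\eqref{lm1.1} guarantees $1\le V_1,V_2\le e$. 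This pair fits the hypotheses of Corollary~\ref{coro:hd}: on $\sA_1$ the \LOT{} chain drifts (in $V_1$) into $\cO=\sA_1^c$; on $\sA_2\subset\cO$ it drifts (in $V_2$) toward $\{\gamma^*\}$; and since $V_2\le e$ everywhere, the value of $V_2$ at the moment the chain first enters $\cO$ is automatically bounded, so there is no extra work in ``handing off'' between the two stages.

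With the two drift conditions recorded, the remainder is quantitative. Because $n=O(p)$ and $s_0\log p=O(n)$ force $p\to\infty$ as $n\to\infty$, for $n$ large the lower-order terms above are negligible and
\begin{align*}
\frac{1}{1-\lambda_1}\le\frac{9n\kappa_1}{c_1},\qquad \frac{1}{1-\lambda_2}\le 5 s_0 .
\end{align*}
Since $V_1,V_2\le e$, every factor of $\sup V_i$ or $\log\sup V_i$ appearing in Corollary~\ref{coro:hd} is $O(1)$, and the minorization on the singleton target is immediate once one checks that $\gamma^*$ is hit from any of its neighbours inside $\cO$ with probability bounded below by a constant multiple of $1/s_0$; this follows from Condition~\cond{c1} (deleting the single excess covariate has the maximal capped weight $p^{c_0}$ and is accepted outright), the normalisation bound $\Zdel\le s_0 p^{c_0}$, and the uniform lower bound on proposal probabilities recorded just after~\eqref{eq:def.weight}. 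Plugging $\max\{1/(1-\lambda_1),\,1/(1-\lambda_2)\}\asymp\max\{n\kappa_1/c_1,\,s_0\}$ into Corollary~\ref{coro:hd} and carrying the universal constants generated by the regeneration/union-bound argument of Section~\ref{sec:main} through to the end gives, for all sufficiently large $n$,
\begin{align*}
\Tmix\le 800\,\max\!\left\{\frac{n\kappa_1}{c_1},\;3 s_0\right\};
\end{align*}
the choice $c_0=2$ is legitimate by the discussion following Condition~\ref{cond:ywj}, and the restriction $4\le c_1\le n\kappa_1-\kappa$ is exactly what Propositions~\ref{th:overfit}--\ref{th:underfit} require.

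I expect the real difficulty to lie upstream, in proving Theorem~\ref{th:drift} and its Corollary~\ref{coro:hd}; granting those, the present reduction is largely bookkeeping. The one conceptual point worth emphasising is \emph{why} a single drift function will not do: $V_1$ only certifies movement toward $\cO$ and may actually increase once the chain is overfitted (Lemma~\ref{lm:R1}\eqref{lm1.2}), whereas $V_2$ is informative only on $\cO$; concatenating ``reach $\cO$'' with ``within $\cO$, reach $\gamma^*$'' is precisely what the two-stage formalism supplies. The genuine obstacles in writing this up are then (i) confirming that the $o(1/(n\kappa_1))$ and $O(1/(ps_0))$ errors are dominated under the stated growth conditions, so that $1-\lambda_1$ and $1-\lambda_2$ are bounded below by the claimed amounts, and (ii) propagating the numerical constants so that the clean bound with the factors $800$ and $3$ emerges. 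That the $s_0$ term cannot be improved along this route is already visible from the worst case in Remark~\ref{rmk:over.sharp} ($\gamma^*=\emptyset$, $|\gamma|=s_0$), where roughly $2 s_0$ steps are needed just to strip $\gamma$ down to $\gamma^*$.
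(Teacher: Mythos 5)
Your overall route is the paper's: take Propositions~\ref{th:overfit} and~\ref{th:underfit} as the two drift conditions (on underfitted and overfitted models, respectively) and feed them into Corollary~\ref{coro:hd} with $\gamma^*$ as the distinguished state. However, two concrete steps are missing. First, Corollary~\ref{coro:hd} (through Theorem~\ref{th:drift}) is stated under Assumption~\ref{ass:mc}, which demands not only reversibility but non-negative spectrum; $\MHP$ itself is not known to satisfy this, so the paper applies the machinery to the lazy chain $\bP_{\rm{lazy}} = (\MHP + \bI)/2$. This halves the drift rates, giving $1-\lambda_1 \geq c_1/(20\, n\kappa_1)$ and $1-\lambda_2 \geq 1/(10\, s_0)$ for large $n$, and it is these lazy rates together with the explicit choices $K=e$, $M=2e$, $C=6/e$ in Corollary~\ref{th:hd} that produce exactly the constants $800$ and $3$ in the statement. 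You work with the non-lazy rates ($9n\kappa_1/c_1$ and $5s_0$), never address the spectrum requirement, and then assert the final bound; with your numbers the "bookkeeping" would not reproduce the claimed constants, and without the laziness step the general theorem does not apply at all.

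Second, you never verify condition (v) of Theorem~\ref{th:drift}: for every $\gamma$ in the overfitted set $\cO$, the escape probability $\MHP(\gamma,\cO^c)$ must be bounded by some $q < \min\{1-\lambda_1,\,(1-\lambda_2)/K\}$. This is not a technicality — Remark~\ref{rmk:absorb} explains that without it the number of excursions out of $\cO$ is uncontrolled and the two-stage argument collapses. The paper checks it using Condition~\cond{c2} and Lemma~\ref{lm:overfit1}\eqref{o4}, which give $q \leq p^{-c_0} \leq p^{-2} = o\bigl((1-\lambda_1)\wedge(1-\lambda_2)\bigr)$ since $\kappa_1 \leq \kappa = O(s_0) = o(p)$. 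In its place you introduce a "minorization at $\gamma^*$" with hitting probability of order $1/s_0$; Corollary~\ref{coro:hd} requires no such minorization — reaching $\gamma^*$ is already handled by the $V_2$ drift — so that step is both unnecessary and not a substitute for (v). Finally, the hand-off conditions (iii)--(iv) that you wave through do need one small check beyond $V_2 \leq e$, namely that $V_2$ cannot decrease upon exiting $\cO$; this holds because leaving $\cO$ removes (or swaps out) an influential covariate and hence cannot reduce $|\gamma\setminus\gamma^*|$.
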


\begin{proof}
See Section~\ref{proof.mix}. 
\end{proof}

\begin{remark}\label{rmk:mix}
The assumptions we have made in Theorem~\ref{th:mh.mix} are mild and are essentially the same  as those of~\citet{yang2016computational}. 
First, it is known that $s_0 \log p = O(n)$ is a necessary condition for estimation consistency in high-dimensional sparse regression models; \citet{yang2016computational} assumed that $s_0 \log p \leq n / 32$. 
Second, the prior parameter choice $\kappa = O(s_0)$ is very reasonable. Indeed, if $\kappa$ grows faster than $n / \log p$, for the data-generating model considered in Section~\ref{sec:ywj}, the threshold $\Cbeta$ in~\eqref{eq:beta.min}  needs to go to infinity for consistent model selection, which would be of little  interest in most applications.  
Note that we can always let $\kappa_1$ be a fixed positive constant, and then the mixing time bound in  Theorem~\ref{th:mh.mix} is  at most $O(n)$. 
One can even prove Condition~\ref{cond:ywj} for some $c_1$ growing with $n$ (e.g. by letting $\Cbeta$ in~\eqref{eq:beta.min} be sufficiently large), 
in which case it is possible for our bound to only grow at rate $s_0$. 
For comparison, the upper bound on the mixing time of the symmetric \RW{} algorithm given in~\citet[Theorem 2]{yang2016computational} is $O( p   s_0^2 (n \kappa_1 + s_0 \kappa_1 +  s_0 \kappa_0 ) \log p )$.  
\end{remark}

\section{General results for the two-stage drift condition}\label{sec:main}
For the \LOT{} algorithm, we have established two drift conditions, one for underfitted models and the other for overfitted models. In this section, we derive some general results for using such a two-stage drift condition to bound the mixing time of a Markov chain (not necessarily the \LOT{} chain),  which we denote by $(\sX_t)_{t \in \bbN}$ where $\bbN = \{0, 1, 2, \dots \}$.   
We only need to require the following assumption on  $(\sX_t)_{t \in \bbN}$, and note that the underlying state space may not be discrete. 

\renewcommand{\theassumption}{\Alph{assumption}}
\begin{assumption}\label{ass:mc}
$(\sX_t)_{t \in \bbN}$ is a Markov chain defined on a state space $(\cX, \cE)$ where the $\sigma$-algebra $\cE$ is countably generated. The transition kernel $\bP$  is reversible with respect to a stationary distribution $\pi$, and $\bP$ has non-negative spectrum.  
\end{assumption}

\begin{remark}
There is little loss of generality by assuming reversibility and non-negative spectrum for MCMC algorithms. 
First, both Metropolis--Hastings and random-scan Gibbs algorithms (in the classical sense) are always reversible, though some non-reversible versions have been proposed in recent years~\citep{bierkens2016non, fearnhead2018piecewise, bouchard2018bouncy, gagnon2019non, bierkens2019zig}. 
Second, for any transition kernel $\bP$,  its lazy version $\bP_{\rm{lazy}} = (\bP + \bI)/2$ always has non-negative spectrum. 
As noted in~\citet{baxendale2005renewal}, these two assumptions can yield better bounds on the convergence rates. 
\end{remark}

For any non-negative measurable function $f$,  let $(\bP^t f )(x) = \E_x[  f(\sX_t) ]$, where $\E_x$ denotes the expectation with respect to the probability measure for $(\sX_t)_{t \in \bbN}$ with   $\sX_0  = x$. 
For a non-empty measurable set $\sC \subset \cX$, we say  $(\sX_t)$ satisfies a drift condition on $\cX \setminus \sC$ if 
\begin{equation}\label{eq:def.drift}
  (\bP V)(x) \leq \lambda V(x), \quad \forall \,  x \notin \sC, 
\end{equation}
for some  function $V\colon  \cX \rightarrow [1, \infty)$ and constant $\lambda \in (0, 1)$. 
If  $\sC = \{ x^* \}$ is a singleton set, \eqref{eq:def.drift} will be referred to as a ``single element'' drift condition.   This happens when the state $x^*$ has a large stationary probability mass and the chain has a tendency to move towards $x^*$.   

\subsection{Convergence rates with the two-stage drift condition}\label{sec:two}
Motivated by the variable selection problem, we consider a setting where $(\sX_t)_{t \in \bbN}$ satisfies two ``nested'' drift conditions. Let $\sA$ be a measurable subset of $\cX$ and $x^*$ be a point in $\sA$. 
First, we use a drift condition on  $\sA^c$ to describe the tendency of the chain to move towards $\sA$, if it is currently outside $\sA$. Second, we assume once the chain enters $\sA$, it drifts towards $x^*$, which can be described by a single element drift condition on $\sA \setminus \{x^*\}$.  
We refer to such a construction as a two-stage drift condition, for which the main result is provided in the following theorem. 
  
\begin{theorem}\label{th:drift}
Let $(\sX_t)_{t \in \bbN}, \cX, \bP, \pi$ be as given in Assumption~\ref{ass:mc}. 
Suppose that there exist two drift functions $V_1, V_2\colon \cX \rightarrow [1, \infty)$,  
constants $\lambda_1, \lambda_2 \in (0, 1)$,  a set $\sA  \in \cE$ and a point $x^* \in \sA$ such that 
\begin{enumerate}[(i)]
    \item $\bP V_1 \leq \lambda_1 V_1$ on $\sA^c$, and \label{d1}
    \item $\bP V_2 \leq \lambda_2 V_2$ on $\sA\setminus \{x^*\}.$ \label{d2}
\end{enumerate}
Further, suppose that   $\sA$ satisfies the following conditions for some finite constants $M \geq 2$ and $K\geq 1$. 
\begin{enumerate}[(i)]\setcounter{enumi}{2}
\item For any $x \in \sA$,  $V_1(x) = 1$, and if $\bP(x, \sA^c) > 0$, $\E_x [ V_1(\sX_1) \mid  \sX_1 \in \sA^c ]  \leq  M/2$. \label{d3}
\item For any $x \in \sA$,  $V_2(x) \leq K$, and if $\bP(x, \sA^c) > 0$, $\E_x [ V_2(\sX_1) \mid \sX_1 \in \sA^c  ] \geq V_2(x)$.  \label{d4}
\item  For any $x \in \sA$,   $\bP(x, \sA^c)  \leq q$ for some constant $q <  \min\{ 1 - \lambda_1, \, (1 - \lambda_2) / K \}$.  \label{d5}
\end{enumerate}
Then, for every $x \in \cX$ and $t \in \bbN$, we have 
\begin{equation*}\label{eq:tv}
\TV{ \bP^t (x, \cdot ) - \pi}   \leq 4  \alpha^{  t + 1  }     \left( 1  + \frac{  V_1(x) }{  M } \right), 
\end{equation*}
where $\alpha$ is a constant in $(1 - q/4, 1)$ and can be computed by 
\begin{align*}
\alpha = \frac{1 + \rho^r}{2} = \frac{1 + M^r / u}{2},  \quad \rho = \frac{q K}{1 - \lambda_2},  \quad   u = \frac{1}{1 - q/2}, \quad r = \frac{ \log u}{ \log (M / \rho) }. 
\end{align*}
\end{theorem}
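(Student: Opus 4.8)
The plan is to reduce the statement to a geometric tail bound on the hitting time $\tau_{x^*} = \inf\{t \in \bbN : \sX_t = x^*\}$ and then pass from that to the total variation bound. First I would note that $\{x^*\}$ is trivially a small set for $\bP$ with minorization constant $1$, the minorizing measure being $\bP(x^*,\cdot)$; combined with Assumption~\ref{ass:mc} (reversibility and non-negative spectrum are exactly what let one convert hitting-time/return-probability estimates at a point into genuine total variation estimates, cf.\ \citet{baxendale2005renewal}), any bound of the shape $\P_x(\tau_{x^*} > t) \le C\,\beta^{t}\,(1 + V_1(x)/M)$ with $\beta \in (0,1)$ upgrades, through the regeneration argument of \citet{roberts1999bounds} and \citet{rosenthal1995minorization}, to a bound of the stated form. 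So the real content is the tail of $\tau_{x^*}$, and this is where the two drift conditions are used.

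To bound that tail I would decompose the trajectory of $(\sX_t)$ into alternating blocks: ``$\sA$-sojourns'', maximal runs with $\sX_t \in \sA$, and ``$\sA^{c}$-excursions'', maximal runs with $\sX_t \in \sA^{c}$. Two estimates drive the argument. For an $\sA^{c}$-excursion, condition~\ref{d1} is a drift condition for $V_1$ off $\sA$, so by a standard first-passage estimate (the geometric entry-time tail of Lemma~\ref{lm:drift0}) its length $\ell$ has a geometric tail with parameter $\lambda_1$ and prefactor the value of $V_1$ at the point where $\sA^{c}$ was entered, whose conditional expectation is at most $M/2$ by condition~\ref{d3}; since condition~\ref{d5} supplies the slack $1-\lambda_1 > q$, one may inflate the parameter up to $1-q/2 = 1/u$, and the resulting exponential-moment estimate is $\E[u^{\ell}] \le M/2 + 1 \le M$, the last inequality using $M \ge 2$. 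This substitution of the crude parameter $1-q$ for $\lambda_1$ is exactly why $\lambda_1$ does not appear in $\alpha$. For an $\sA$-sojourn started at $x \in \sA\setminus\{x^*\}$, condition~\ref{d2} makes $V_2(\sX_{n})\lambda_2^{-n}$ a supermartingale up to the first exit time $\sigma$ from $\sA\setminus\{x^*\}$, so optional stopping gives $\E[V_2(\sX_{\sigma})\lambda_2^{-\sigma}] \le V_2(x) \le K$ by condition~\ref{d4}; combining this with the per-step escape probability $\bP(\cdot,\sA^{c}) \le q$ on $\sA$ (condition~\ref{d5}) and with the non-collapse of $V_2$ upon leaving $\sA$ (again condition~\ref{d4}, which keeps the last inequality informative when $\sX_\sigma \in \sA^{c}$ rather than $=x^*$) shows that the sojourn contains a visit to $x^*$ with probability at least $1-\rho$, where $\rho = qK/(1-\lambda_2) < 1$ by condition~\ref{d5}; by the strong Markov property the failure events of successive sojourns are dominated by independent $\mathrm{Bernoulli}(\rho)$ trials, and each sojourn length also has a geometric tail by the same supermartingale.

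The synthesis is a counting argument in the style of \citet{rosenthal1995minorization}: the event $\{\tau_{x^*} > t\}$ forces either (i) the first $m$ sojourns all to fail, of probability $\le \rho^{m}$, or (ii) the first $m$ sojourn-plus-excursion blocks together to use more than $t$ steps. Stochastically dominating the block lengths by the geometric variables from the two estimates above (the ``auxiliary sequence of geometric random variables'') and bounding (ii) by an exponential Markov inequality at exponent $\log u$ --- legitimate since $u\lambda_1 < 1$ and $u\lambda_2 < 1$, both forced by condition~\ref{d5} --- yields, with the number of blocks set to $m = \lceil rt\rceil$ for a free parameter $r$, a bound $\rho^{rt} + M^{rt}u^{-t}$ up to constants absorbed using $M \ge 2$. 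Balancing the two terms is the break-even relation $\rho^{r} = M^{r}/u$, i.e.\ $r = \log u/\log(M/\rho)$, which gives the common rate $\rho^{r}$; pushing the leftover constants and the rounding into a slightly larger rate (via $\zeta^{t}\,\mathrm{poly}(t) \le C((1+\zeta)/2)^{t}$) produces $\alpha = (1+\rho^{r})/2$, and $\alpha > 1 - q/4$ because $\rho^{r} = M^{r}/u > 1/u = 1-q/2$. The prefactor $1 + V_1(x)/M$ accounts for the possible initial $\sA^{c}$-excursion when $x \notin \sA$, controlled by conditions~\ref{d1} and~\ref{d3} as in the first estimate, and it is $1 + 1/M$ when $x \in \sA$ by condition~\ref{d3}.

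I expect the main obstacle to be this synthesis rather than the two drift estimates. Because the chain may cross $\partial\sA$ an unbounded number of times, there is no single Lyapunov function giving the sharp rate --- the naive candidate $V_1 + bV_2$ only yields a rate that deteriorates with $M$ --- so one must treat the two block types separately and glue them with the geometric-splitting union bound, and the delicate points are: making the $V_2$-supermartingale argument rigorous at its two ``bad'' endpoints, $x^*$ (where no drift for $V_2$ is assumed) and re-entry into $\sA^{c}$ (where condition~\ref{d4} is invoked); arranging that excursion lengths are controlled through the margin $q$ rather than through $\lambda_1$; and tuning the split in the union bound so that the two error terms balance exactly into the claimed closed form. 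The final conversion of the $\tau_{x^*}$-tail into total variation is routine given Assumption~\ref{ass:mc}, but has to be carried out carefully enough that the constant $4$, the exponent $t+1$, and the factor $1 + V_1(x)/M$ emerge as stated.
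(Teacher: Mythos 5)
Your core argument is essentially the paper's: split the trajectory into alternating $\sA$-blocks and $\sA^c$-excursions, show each visit to $\sA$ reaches $x^*$ with probability at least $1-\rho$ where $\rho = qK/(1-\lambda_2)$, give every block length an exponential moment at $u = 1/(1-q/2)$ (excursions via \eqref{d1} and \eqref{d3}, with the prefactor $V_1(x)$ coming from a possible initial excursion), and then run the Rosenthal-type union bound with $m\approx rt$ blocks and balance $\rho^r = M^r/u$. The one genuine difference is the block construction. You terminate an $\sA$-block at the actual exit time $\sigma$ from $\sA\setminus\{x^*\}$ and get the failure probability $\rho$ from the per-step escape bound $q$ combined with the $V_2$-supermartingale tail; the paper instead splits the kernel on $\sA$ as $\bP = q\bR + (1-q)\bQ$, with $\bQ$ the kernel conditioned to stay in $\sA$, and terminates a block when an auxiliary Bernoulli$(q)$ coin succeeds, so the $\sA$-block length is exactly geometric (its $u$-moment is exactly $2$) and the second part of \eqref{d4} is used precisely to show $\bQ V_2\le \lambda_2 V_2$. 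Your variant is viable: under \eqref{d5} one can check $\E_y[u^{\sigma}] \le 1 + (u-1)K/(1-u\lambda_2) \le 2$ for all $y\in\sA$, so the per-pair factor is again $M$ and the stated $r$, $\alpha$ and prefactor are recoverable; but you did not carry out this bookkeeping (your "constants absorbed using $M\ge2$" would, done naively, perturb $r$ and $\alpha$), and in your route the second part of \eqref{d4} plays no real role in the failure-probability bound, contrary to what you suggest.

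The genuine gap is the conversion of the hitting-time estimate into total variation, which you call routine and attribute to the regeneration arguments of \citet{rosenthal1995minorization} and \citet{roberts1999bounds} at the trivial small set $\{x^*\}$. Those arguments couple two copies of the chain and require the two copies to be at $x^*$ simultaneously; a univariate geometric tail for $\tau^*$ from a fixed starting point does not control that bivariate event, so the step as you describe it would not go through. The paper's conversion is Theorem~\ref{th:j2} (due to Jerison), which takes as input a bound on the generating function $\E_x[\alpha^{-\tau^*}]$ and outputs $\TV{\bP^t(x,\cdot)-\pi}\le 2\,\E_x[\alpha^{-\tau^*}]\,\alpha^{t+1}$; reversibility and non-negative spectrum enter exactly there, through an intertwining/monotone-coupling argument that makes the stationary chain's hitting of $x^*$ force coalescence, not through a Baxendale-type renewal bound. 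Your tail bound does yield the required generating-function bound after the same halving step you describe (the paper obtains $\E_x[\alpha^{-\tau^*}]\le 2+2V_1(x)/M$, whence the constant $4$, the exponent $t+1$, and $\alpha=(1+\rho^r)/2$), so the skeleton is sound once Theorem~\ref{th:j2} is invoked in place of the tools you name.
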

\begin{proof}
See Section~\ref{sec:proof}. 
\end{proof}

\begin{remark}\label{rmk:t1}
To interpret the two drift conditions, (i) and (ii), it may help to  think of $ \log V_1$  as the   ``distance''  to the set $\sA$, and $ \log V_2$  as the  ``distance''  to the point $x^*$. 
Both conditions (iii) and (iv) then become natural. 
Indeed, there is no loss of generality by assuming that $V_1 = 1$ on $\sA$.
Given any other drift function $V_1'$ which satisfies (i) on $\sA^c$, we can always define $V_1$ by letting $V_1 = 1$ on $\sA$ and $V_1 = V_1'$ on $\sA^c$,  which still satisfies all the assumptions made in the theorem. 
The constant $M$ can be simply chosen to be $2 \sup_{x \in \cX} V_1(x)$, if it is finite. 
\end{remark}

\begin{remark}\label{rmk:absorb}
Consider the distribution of the hitting time $\tau^* = \min\{t \geq 0 \colon \sX_t = x^*\}$, which, by Theorem~\ref{th:j2}, can be used to bound the mixing time of the chain. 
The sample path from an arbitrary point $x \in \sA^c$ to $x^*$ can be broken into disjoint segments in $\sA^c$ and $\sA$. Though  the length of each segment has a finite expectation due to the two drift conditions,  the number of these segments largely depends on the parameter $q$, and $\E[ \tau^*]$ may be infinite if the chain can easily escape from the set $\sA$. This is why we need condition (v).  
Consider some $x \in \sA$ and $y \in \sA^c$ such that $\bP(x, y) > 0$.
For Markov chains that move locally, $\bP(x, y) > 0$ implies that $x$ and $y$ are very ``close''. 
By the reversibility of the chain, we have $\bP(x, y) \leq \pi(y) / \pi(x)$. Hence, to check condition (v), it suffices to bound the ratio $\pi(y) / \pi(x)$, which is often straightforward for neighboring states $x, y$.
When applying Theorem~\ref{th:drift}, we should be careful with the choice of $q$. 
Even if $\bP(x, \sA^c) = 0$ for all $x \in \sA$, we should try other positive values of $q$ so that $1 - \alpha$ can be maximized. 
As a rule of thumb, we can choose some $q$  that has the same order as $\min\{ 1 - \lambda_1, \, (1 - \lambda_2) / K \}$; 
see Corollary~\ref{th:hd}. 
\end{remark}

\subsection{Convergence rates in the high-dimensional setting}\label{sec:hd}
For MCMC algorithms,  $\cX$ is the parameter space and its dimension is conventionally denoted by $p$. 
For high-dimensional problems, $p = p(n)$ grows to infinity and typically,  we have the drifting parameters $\lambda_1, \lambda_2 \uparrow 1$ and the convergence rate $(1 - \alpha) \downarrow 0$, where $\lambda_1, \lambda_2, \alpha$ are as given in Theorem~\ref{th:drift}. 
To show the chain is rapidly mixing, we need to find a finite constant $c > 0$ such that $p^{-c} = O(1 - \alpha)$. 
The following result extends Theorem~\ref{th:drift} to the high-dimensional setting. 

\begin{corollary}\label{th:hd}
Consider a sequence of Markov chains  where each $(\sX_t)_{t \in \bbN}$ (implicitly indexed by $n$) satisfies the assumptions in Theorem~\ref{th:drift}. 
Assume that $\lambda_1, \lambda_2 \rightarrow 1$  and  $ q \leq \min\{ 1 - \lambda_1,   (1 - \lambda_2) / C K\}$ for some universal constant  $C > 1$. 
Then, we have 
\begin{equation*} 
\TV{ \bP^t (x, \cdot ) - \pi}   \leq 4  \alpha^{ t + 1}     \left(1  + \frac{   V_1(x) }{  M } \right), 
\end{equation*}
for some $\alpha$ such that  ($\sim$ denotes asymptotic equivalence)
\begin{equation*}
1 - \alpha \sim  \frac{ (1 - \lambda^*) \log C }{4  \log (M C)}, \quad \text{ where }1 - \lambda^* =  \min \left\{ 1 - \lambda_1,  \; \frac{ 1 - \lambda_2}{CK}  \right\}. 
\end{equation*}
\end{corollary}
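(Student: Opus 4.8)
This corollary is essentially a computation: the plan is to apply Theorem~\ref{th:drift} verbatim and then Taylor-expand the resulting closed-form expression for $\alpha$ in the regime $n\to\infty$. First I would check the hypotheses of Theorem~\ref{th:drift}: conditions (i)--(iv) are assumed directly, and since $C>1$ the bound $q\le\min\{1-\lambda_1,\,(1-\lambda_2)/(CK)\}$ forces $q<\min\{1-\lambda_1,\,(1-\lambda_2)/K\}$, so condition (v) holds as well; following Remark~\ref{rmk:absorb}, one takes $q$ as large as this constraint permits, i.e. $q=1-\lambda^*=\min\{1-\lambda_1,\,(1-\lambda_2)/(CK)\}$. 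Theorem~\ref{th:drift} then gives exactly the stated total variation bound with $\alpha=(1+M^r/u)/2$, $\rho=qK/(1-\lambda_2)$, $u=(1-q/2)^{-1}$, $r=\log u/\log(M/\rho)$, so the whole task reduces to estimating $1-\alpha=\tfrac12\big(1-M^r/u\big)$ as $n\to\infty$.

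Next I would simplify $M^r/u$. Since $r\log M=\log M\cdot\log u/\log(M/\rho)$, one gets $M^r/u=u^{\,\log M/\log(M/\rho)-1}=u^{-\log(1/\rho)/\log(M/\rho)}$, using $\log(M/\rho)=\log M+\log(1/\rho)$. Write $\beta:=\log(1/\rho)$; by hypothesis $\rho\le1/C$, so $\beta\ge\log C>0$ and $\beta/\log(M/\rho)=\beta/(\log M+\beta)\in(0,1)$, bounded below by $\log C/\log(MC)$. Because $q\le1-\lambda_1\to0$ we have $q\to0$, hence $u\to1$ and $\log u=-\log(1-q/2)=q/2+O(q^2)$. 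The exponent $-\beta\log u/\log(M/\rho)$ therefore tends to $0$, and the expansion $1-e^{-t}=t+O(t^2)$ as $t\to0$ yields
\[
1-\frac{M^r}{u}=\frac{\beta}{\log(M/\rho)}\,\log u+o(\log u)=\frac{\beta}{\log(M/\rho)}\cdot\frac{q}{2}+o(q),
\]
so that $1-\alpha=\dfrac{\beta\,q}{4\,\log(M/\rho)}+o(q)$; note this is consistent with the assertion $\alpha\in(1-q/4,1)$ in Theorem~\ref{th:drift} since $\beta/\log(M/\rho)<1$, and shows $\alpha<1$ for large $n$. Plugging $q=1-\lambda^*$ and observing that $\beta\mapsto\beta/(\log M+\beta)$ is increasing, so $\rho=1/C$ is the extremal case, gives $1-\alpha\sim(1-\lambda^*)\log C/(4\log(MC))$; when $\rho<1/C$ the constant only improves, which is harmless for the rapid-mixing conclusion.

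The step needing genuine care is the uniformity of the asymptotics across the sequence: the hypothesis keeps $\rho$ bounded away from $1$ (so $\beta$ is bounded away from $0$), but it does \emph{not} prevent $\rho\to0$, i.e. $\beta\to\infty$; nevertheless the exponent $t=\beta\log u/\log(M/\rho)$ still tends to $0$ because $\beta/\log(M/\rho)\le1$, so $1-e^{-t}=t+O(t^2)$ applies uniformly and no extra assumption is required. A secondary, cosmetic point is the reading of ``$\sim$'': it is an exact equivalence precisely when $\rho$ remains comparable to a constant (e.g. when $q$ is chosen so $\rho=1/C$), and otherwise should be understood as the matching lower bound $1-\alpha\gtrsim(1-\lambda^*)\log C/(4\log(MC))$, which is exactly what Theorem~\ref{th:mh.mix} and the other high-dimensional applications use. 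Along the way I would also verify the trivial facts used ($-\log(1-q/2)=q/2+O(q^2)$, monotonicity of $t\mapsto t/(a+t)$) and that the displayed $1-\lambda^*$ agrees with $\min\{1-\lambda_1,(1-\lambda_2)/(CK)\}$.
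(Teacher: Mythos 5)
Your proposal is correct and follows essentially the same route as the paper: take $q=1-\lambda^*$, compute $\rho$, $u$, $r$ from Theorem~\ref{th:drift}, and expand $1-\alpha=\tfrac12\bigl(1-M^r/u\bigr)\sim \tfrac{q}{4}\,\log(1/\rho)/\log(M/\rho)$. The only difference is that the paper disposes of your uniformity worry (the case $\rho\to 0$) by a without-loss-of-generality normalization: since a drift condition with a given $\lambda$ also holds for any larger $\lambda$, one may inflate $\lambda_1$ or $\lambda_2$ so that $1-\lambda_1=(1-\lambda_2)/(CK)=1-\lambda^*$, which forces $\rho=C^{-1}$ exactly and makes the stated asymptotic equivalence exact rather than the one-sided reading you fall back on.
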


\begin{proof}
Observe that without loss of generality we can assume  $1 - \lambda_1 = (1 - \lambda_2)/CK = 1 - \lambda^*$ and $q = 1 - \lambda^* = o(1)$. 
Then, the constants defined in Theorem~\ref{th:drift} satisfy $\rho = C^{-1}$ and $r \sim q / (2 \log (MC) )$,
from which the result follows. 
\end{proof}

\begin{corollary}\label{coro:hd}
For $\epsilon \in (0, 1/2)$,  define the $\epsilon$-mixing time of the chain $(\sX_t)_{t \in \bbN}$ by 
$$\Tmix(\epsilon) = \sup_{x \in \cX} \min\{t \geq 0:  \TV{ \bP^t( x, \cdot) -   \pi (\cdot ) }  \leq \epsilon \}.$$ 
In the setting of Corollary~\ref{th:hd} with $M = 2 \sup_{x \in \cX} V_1(x)$,  for sufficiently large $n$, we have 
\begin{align*}
\Tmix(\epsilon) \lesssim \frac{  4 \log (6 / \epsilon)   }{ \log C }  \log (C M)
\max\left\{ \frac{1}{1 - \lambda_1}, \; \frac{C K}{1 - \lambda_2}  \right\}. 
\end{align*}
\end{corollary}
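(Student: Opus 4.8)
The plan is to read the mixing time off the total-variation estimate of Corollary~\ref{th:hd} by the usual inversion. First I would take $M = 2\sup_{x\in\cX} V_1(x)$ (finite by hypothesis — and in the \LOT{} application $V_1 \le e$ by Lemma~\ref{lm:R1}\eqref{lm1.1}), which gives $V_1(x)/M \le 1/2$ for every $x$, so Corollary~\ref{th:hd} yields the uniform bound
\begin{equation*}
\TV{ \bP^t(x,\cdot) - \pi } \;\le\; 4\,\alpha^{t+1}\bigl(1 + \tfrac12\bigr) \;=\; 6\,\alpha^{t+1}, \qquad \forall\, x\in\cX.
\end{equation*}
The right-hand side is decreasing in $t$ and is $\le \epsilon$ as soon as $t+1 \ge \log(6/\epsilon)/\log(1/\alpha)$; since $6/\epsilon > 1$ and $\alpha<1$, this threshold is positive, and a ceiling argument gives, directly from the definition of $\Tmix(\epsilon)$,
\begin{equation*}
\Tmix(\epsilon) \;\le\; \frac{\log(6/\epsilon)}{\log(1/\alpha)}.
\end{equation*}

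It then remains to insert the asymptotics of $\alpha$. Because $\lambda_1,\lambda_2 \to 1$ forces $1-\lambda^* \to 0$, Corollary~\ref{th:hd} gives $1-\alpha \sim (1-\lambda^*)\log C / \bigl(4\log(MC)\bigr) \to 0$, hence $\alpha \to 1$ and $\log(1/\alpha) = -\log\bigl(1-(1-\alpha)\bigr) \sim 1-\alpha$. Combining the last two displays, for all sufficiently large $n$,
\begin{equation*}
\Tmix(\epsilon) \;\le\; \bigl(1+o(1)\bigr)\,\frac{4\,\log(6/\epsilon)\,\log(MC)}{(1-\lambda^*)\,\log C}.
\end{equation*}
Recalling $1/(1-\lambda^*) = \max\{1/(1-\lambda_1),\, CK/(1-\lambda_2)\}$ and absorbing the $1+o(1)$ into the $\lesssim$ (legitimate since $1/(1-\lambda^*)\to\infty$) then gives precisely the asserted bound.

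I do not anticipate a genuine obstacle: the substance is already packed into Theorem~\ref{th:drift} and Corollary~\ref{th:hd}, and the proof is just bookkeeping. The only points that need a little care are the finiteness of $\sup_{x\in\cX} V_1(x)$ — this is what makes the $x$-free constant $6$ available — and the fact that the high-dimensional regime $\lambda_1,\lambda_2\to1$ makes both $1-\alpha$ and $\log(1/\alpha)$ vanish, which is what licenses the equivalence $\log(1/\alpha)\sim 1-\alpha$ and the absorption of the lower-order $1+o(1)$ factor into $\lesssim$.
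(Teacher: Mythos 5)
Your proof is correct and follows the same route as the paper, which simply inverts the total-variation bound of Corollary~\ref{th:hd} and uses $-\log \alpha \sim 1-\alpha$ together with the asymptotic formula for $1-\alpha$. The extra care you take with the uniform constant $6$ (via $V_1(x)/M \le 1/2$) and with absorbing the $1+o(1)$ factor is exactly the "straightforward calculation" the paper leaves implicit.
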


\begin{proof} 
This follows from a  straightforward calculation using    $- \log( \alpha) \sim 1 - \alpha$.  
\end{proof}

\begin{remark}
In Corollaries~\ref{th:hd} and~\ref{coro:hd},  we do not make assumptions on the growth rates of $M$ and $K$.  In particular, if $M = p^c$ for some constant $c \geq 0$, it will only introduce an additional $\log p$ factor to the the mixing time.  
\end{remark}

\subsection{Comparison with drift-and-minorization methods}
The two-stage drift condition can be seen as a generalization of the single element drift condition since eventually the chain will arrive at the central state $x^*$.  
But, from a different angle, it  also resembles the classical drift-and-minorization methods, which assume that 
  there exist a drift function $V \colon \cX \rightarrow [1, \infty)$, a ``small'' set $\sS \in \cE$, a probability measure $\psi$ on $(\cX, \cE)$, constants $\lambda \in (0, 1)$, $\xi  > 0$ and $b < \infty$ such that 
\begin{align*}
\text{(drift condition) } \quad  & \bP V  \leq \lambda V \ind_{\sS^c} + b \ind_{\sS},   \\ 
\text{ (minorization condition) }  \quad  & \bP(x,  \cdot) \geq \xi \psi(\cdot) \, \text{ for } x \in \sS.   
\end{align*}
Both coupling arguments and regeneration theory can be used to compute a bound on $\TV{ \bP^t(x, \cdot) - \pi }$; see, for example,~\citet{rosenthal1995minorization, roberts1999bounds}. 
By the minorization condition,  each time the chain is in $\sS$, we can let the whole process regenerate according to $\psi$ with probability $\xi$. By the drift condition,   the return times into $\sS$ have geometrically decreasing tails. Jointly, the two conditions imply that the first time that the chain regenerates has a ``thin-tailed'' distribution.  
The proof for our result with the two-stage drift condition uses a  similar  idea. 
The set $\sA$ in Theorem~\ref{th:drift} can be seen as the small set, and each time the chain enters $\sA$, there is some positive probability  that the chain will hit $x^*$ and thus regenerates before leaving $\sA$. 
Essentially,  we have  replaced the minorization condition on the small set with another drift condition, which is still used to bound the regeneration probability when the chain visits the small set. 

The above comparison between the two-stage drift condition and drift-and-minorization method suggests that one may want to consider the following more general setting.  
Suppose there exist a sequence of sets $\cX = \sA_0 \supseteq \sA_1 \supseteq \cdots \supseteq \sA_k$ such that for each $i = 0, 1, \dots, k - 1$, a drift condition holds on $\sA_i \setminus \sA_{i+1}$ showing that the chain tends to drift from  $\sA_i \setminus \sA_{i+1}$ into $\sA_{i + 1}$. 
When $\sA_k$ is a singleton set, one can mimic the proof of Theorem~\ref{th:drift} to combine the $k$ drift conditions and derive a quantitative bound on the mixing time. 
When $\sA_k$ is a set on which we can establish a minorization condition or we have a mixing time bound for the Markov chain restricted to $\sA_k$, the main idea of our proof still applies, though some details may need nontrivial modification. 
For a concrete example, consider the posterior distribution of $\beta$ in our variable selection problem described in Section~\ref{sec:model}.  
To sample from $\PP(\beta)$, we just need to modify any MH algorithm targeting $\PP(\gamma)$ by sampling  from the conditional posterior distribution $\PP(\beta \mid \gamma)$ at the end of each iteration. 
Let $\norm{\beta}_0$ denote the number of nonzero entries of $\beta$. 
Then, assuming Condition~\ref{cond:ywj} holds, we can define $\cA_0 = \{ \beta \in \bbR^p \colon \norm{\beta}_0 \leq s_0 \}$, 
$\cA_1 = \{\beta \in \cA_0 \colon \forall  j \in \gamma^*, \, \beta_j \neq 0  \}$ (the set of all possible values of $\beta$ for an overfitted model), 
and $\cA_2 = \{ \beta \in \cA_1 \colon \forall j \notin \gamma^*, \, \beta_j = 0  \}$ (the set of all possible values of $\beta$ for the model $\gamma^*$). 
The two drift conditions proved in Section~\ref{sec:drifts} show that the MH algorithm tends to drift from $\cA_0 \setminus \cA_1$ into $\cA_1$  and from $\cA_1 \setminus \cA_2$ into $\cA_2$. By combining them with standard results in the literature for the mixing time of an MH algorithm targeting $\PP(\beta \mid \gamma = \gamma^*)$ (which is just a multivariate normal distribution), one can derive the mixing time bound for the MH chain. 

\subsection{Applications of the two-stage drift condition}\label{sec:disc.two.stage}
Our use of the two-stage drift condition in Section~\ref{sec:ss.two} is largely motivated by Condition~\ref{cond:ywj}, which characterizes the different behaviors of underfitted and overfitted models. Though this makes the two-stage drift condition look very specific to the variable selection problem, there are actually many discrete-state-space problems other than variable selection where the mixing time of MH algorithms can be conveniently analyzed by using multiple drift conditions. 
 
First, many model selection problems can be written as a set of sparse linear regression models. For example, in structure learning problems, the goal is to infer the underlying Bayesian network (i.e., directed acyclic graph) of a $p$-variate distribution. These problems are often formulated as structural equation models where the causal relationships among $p$ coordinate variables are described by $p$ sparse linear regression models. 
When the ordering is known, structure learning of Bayesian networks becomes very similar to variable selection, and one can extend and prove Condition~\ref{cond:ywj} in a way completely analogous to that in~\citet{yang2016computational}. The application of the two-stage drift condition and the construction of \LOT{} are also straightforward. 
When the ordering is unknown, the problem becomes much more complicated due to the existence of Markov equivalent Bayesian networks, but it is still possible to generalize Condition~\ref{cond:ywj}~\citep{zhou2021complexity}, and the two-stage drift condition can be constructed accordingly. 

Moreover, in recent years,  general high-dimensional consistency results have been obtained for a large variety of Bayesian model selection problems. \citet{gao2020general} proved optimal posterior contraction rates for a general class of structured linear models, including as special cases stochastic block model,   multi-task learning, dictionary learning and wavelet estimation. All these examples share common features with the variable selection problem; for example, one can naturally define a model to be underfitted or overfitted according as it has the best model nested within it.  It seems very promising that the methodology developed in~\citet{gao2020general} can be used to prove results similar to Condition~\ref{cond:ywj} in general settings, from which we may further establish a two-stage drift condition, one for underfitted models and the other for overfitted ones. 
 
In addition to ``underfitted/overfitted'' schemes, for some problems, we may partition the state space using a different strategy. For example, consider a change-point detection problem where we need to infer both the number and locations of change points. Suppose that we use an informative prior which favors models with equal-sized segments. Then, if we incorrectly infer the number of change points, the locations of change points cannot be accurately estimated either due to the prior. In such cases, a possible approach to mixing time analysis is to construct one drift condition showing that the chain first drifts towards models with true number of change points and another drift condition showing that once the number of change points is correctly inferred, the chain is able to tune the locations of change points towards their true values. 
We note that similar ideas may also be applied to more complicated spatial clustering models, such as those based on spanning trees~\citep{luo2021bayesian, lee2021t}.
Compared with using a single drift condition on the whole space, the two-stage approach often leads to an easier and more constructive proof.

\section{Simulation studies}\label{sec:sim}
For our simulation studies, we implement the \RW{} and \LOT{} algorithms as follows. Assume the proposal scheme has the form given in~\eqref{eq:KI}.   
In each iteration we propose an addition, deletion or swap move with fixed probabilities $0.4, 0.4$ and $0.2$,  respectively; that is, we set $\hadd(\gamma) = \hdel(\gamma) = 0.4$, $\hswap(\gamma) = 0.2$ in~\eqref{eq:KI}. This is slightly different from the setting in Section~\ref{sec:ss.two}, but our mixing time bound still applies up to some constant factor. 
We consider four choices of the weighting functions $\wadd$ and $\wdel$.  
\begin{align*}
\begin{array}{lll}
\text{\RW{}:}   &  \wadd( \gamma' \mid \gamma) = 1, & \wdel(\gamma' \mid \gamma) = 1. \\  
\text{\LOT{}-1:}   &  \wadd( \gamma' \mid \gamma)  = p^{-1}  \vee \PR(\gamma, \gamma') \wedge p, &  \wdel(\gamma' \mid \gamma) =  p^{-1}  \vee \PR(\gamma, \gamma') \wedge 1 , \\  
\text{\LOT{}-2:}  \; &  \wadd( \gamma' \mid \gamma)  = p^{-2}  \vee \PR(\gamma, \gamma') \wedge p^2, \; &  \wdel(\gamma' \mid \gamma) =  p^{-2}  \vee \PR(\gamma, \gamma') \wedge p, \\
\text{\LIB{}-1:}   &  \wadd( \gamma' \mid \gamma)  = \sqrt{  \PR(\gamma, \gamma')}, &  \wdel(\gamma' \mid \gamma) =  \sqrt{  \PR(\gamma, \gamma')}. 
\end{array}
\end{align*} 
\LOT{}-2 is   more aggressive   than \LOT{}-1 in the sense that the proposal distribution is more concentrated on the neighboring states with very large posterior probabilities.  
The last one is inspired by the locally balanced proposals of~\citet{zanella2020informed}. Since the proposal weights are unbounded in \LIB{}-1, we expect its acceptance probability to be smaller than that of \LOT{} algorithms.  
For comparison, we also consider the original locally balanced MH algorithm of~\citet{zanella2020informed} with proposal 
\begin{align*}
    \KZ(\gamma, \gamma') = \frac{ \sqrt{  \PR(\gamma, \gamma')}  }{ Z(\gamma) } \ind_{\adds(\gamma) \cup \dels(\gamma)}(\gamma'), \text{ where } Z(\gamma) =  \sum_{\tgamma \in \adds(\gamma) \cup \dels(\gamma) } \sqrt{  \PR(\gamma, \tgamma)}.
\end{align*}
Denote this algorithm by \LIB{}-2. It differs from \LIB{}-1 in that we do not distinguish types of proposal moves when calculating proposal weights. 
We will discuss \LIB{}-1 and \LIB{}-2 in Sections~\ref{sec:zanella} and~\ref{sec:compare.lb}.

The use of the parameter $s_0$ is unnecessary in our simulation studies since all sampled models have size much smaller than $n$. 
For computational convenience, we do not consider swap moves for \LIB{}-2 and, for the other algorithms, we implement swap moves by compounding one addition and one deletion move, which makes the swap proposal only ``partially informed''; details are given in Section~\ref{sec:swap}. When we describe the modality of $\PP$ in simulation results, we are always referring to the ``single-flip'' neighborhood relation $\adds(\cdot) \cup \dels(\cdot)$. 
The code is written in \texttt{C++} in order to maximize the computational efficiency; see Section~\ref{sec:data}. 

\subsection{Finding models with high posterior probabilities}\label{sec:sim1}
For the first simulation study, we consider the settings used in~\citet{yang2016computational} with random design matrices. Let all rows of $\X$ be i.i.d., and the $i$-th row vector, $x_{(i)}$, be generated in the following two ways. 
\begin{align*}
\text{Independent design: }  &  x_{(i)} \overset{i.i.d.}{\sim} \N(0, I_p), \\
\text{Correlated  design: }  &  x_{(i)} \overset{i.i.d.}{\sim} \N(0, \Sigma),  \, \Sigma_{jk} = e^{-|j - k|}. 
\end{align*}
The response vector $y$ is simulated by $y = X \beta^* + z$ with $z \sim \N(0, I_n)$. 
The first 10 entries of $\beta^*$ are given by 
\begin{align*}
\beta^*_{[10]} = \mathrm{SNR} \sqrt{ \frac{\log p}{ n } } (2, -3, 2, 2, -3, 3, -2, 3, -2, 3),  
\end{align*}
where $\mathrm{SNR} > 0$ denotes the signal-to-noise ratio.  All the other entries of $\beta^*$ are set to zero. 
The un-normalized posterior probability of a model $\gamma$ is calculated using~\eqref{eq:post} with $\kappa_0 = 2$ and $\kappa_1 = 3/2$. 
Simulation experiments are conducted for  $\rm{SNR} = 1, 2, 3$ and $(n, p) = (500, 1000)$ or $(1000, 5000)$. 
For each setting, we simulate 100 data sets, and for each data set, we run \RW{} for $10^5$ iterations and each informed algorithm for $2,000$ iterations. All algorithms are initialized with a randomly generated model $\gamma^{(0)}$ with $|\gamma^{(0)}| = 10$. 
Let $\true = \{1, 2, \dots, 10\}$ denote the true set of covariates with nonzero regression coefficients, and let $\hat{\gamma}_{\rm{max}}$ be the model with the largest posterior probability that has been sampled by any of the five algorithms. 
If an algorithm has never sampled $\hat{\gamma}_{\rm{max}}$, the run is counted as a failure. 

\begin{table} 
\caption{\label{table1} Simulation study I.  For each setting, we simulate 100 data sets. ``Time'' is the average wall time usage measured in seconds.  ``Success'' is the number of successful runs; a run is successful if $\hat{\gamma}_{\rm{max}}$ has been sampled (see the main text). $H_{\rm{max}}$ is the median number of iterations needed to sample $\hat{\gamma}_{\rm{max}}$; the number in the parenthesis is the 95\% quantile. $t_{\rm{max}}$ is the median wall time (in seconds) needed to sample $\hat{\gamma}_{\rm{max}}$. 
The number in parentheses shown below the SNR value is the number of replicates out of 100 where $\hat{\gamma}_{\rm{max}} = \true$.}
\centering
{\scriptsize
\begin{tabular}{cccccccc}
\toprule
 &  &  &  \RW{} & \LOT{}-1 & \LOT{}-2 & \LIB{}-1 & \LIB{}-2 \\
\multicolumn{3}{c}{Number of iterations} & 100,000 & 2,000 & 2,000 & 2,000 & 2,000 \\
\midrule 
\multirow{12}{*}{\shortstack{$n = 500$,\\ $p = 1000$, \\ independent \\ design}} & \multirow{4}{*}{ \shortstack{ SNR = 3 \\ (100) } }  & Time  & 7.8 & 1.3 & 1.3 & 1.2 & 1.2\\
  &  & Success  & 100 & 100 & 100 & 100 & 12\\
  &  & $H_{\rm{max}}$  & 8004(13773) & 20(30) & 210(578) & 39(53) & 2000(2000+)\\
  &  & $t_{\rm{max}}$  & 0.62 & 0.012 & 0.13 & 0.023 & 1.1\\
 \cmidrule{2-8}
 & \multirow{4}{*}{ \shortstack{ SNR = 2 \\ (100) } }  & Time  & 7.8 & 1.3 & 1.3 & 1.2 & 1.6\\
  &  & Success  & 100 & 100 & 100 & 100 & 95\\
  &  & $H_{\rm{max}}$  & 9316(17884) & 20(30) & 60(385) & 38(53) & 94(1972)\\
  &  & $t_{\rm{max}}$  & 0.73 & 0.013 & 0.038 & 0.023 & 0.077\\
 \cmidrule{2-8}
 & \multirow{4}{*}{ \shortstack{ SNR = 1 \\ (0) } }  & Time  & 5.3 & 0.96 & 0.95 & 0.9 & 1.2\\
  &  & Success  & 100 & 100 & 100 & 100 & 100\\
  &  & $H_{\rm{max}}$  & 33(9326) & 22(35) & 21(39) & 21(36) & 9(15)\\
  &  & $t_{\rm{max}}$  & 0.0014 & 0.010 & 0.0095 & 0.0088 & 0.0060 \\
\midrule
\multirow{12}{*}{\shortstack{$n = 500$,\\ $p = 1000$, \\ correlated \\ design}} & \multirow{4}{*}{ \shortstack{ SNR = 3 \\ (98) } }  & Time  & 7.7 & 1.3 & 1.3 & 1.2 & 1.2\\
  &  & Success  & 89 & 99 & 98 & 98 & 4\\
  &  & $H_{\rm{max}}$  & 17230($10^5+$) & 29(726) & 50(390) & 45(689) & 2000(2000+)\\
  &  & $t_{\rm{max}}$  & 1.3 & 0.018 & 0.032 & 0.028 & 1.1\\
 \cmidrule{2-8}
 & \multirow{4}{*}{ \shortstack{ SNR = 2 \\ (42) } }  & Time  & 7.4 & 1.2 & 1.2 & 1.1 & 1.2\\
  &  & Success  & 57 & 79 & 81 & 85 & 57\\
  &  & $H_{\rm{max}}$  & 62308($10^5+$) & 94(2000+) & 72(2000+) & 124(2000+) & 1200(2000+)\\
  &  & $t_{\rm{max}}$  & 4.5 & 0.061 & 0.046 & 0.083 & 0.84\\
 \cmidrule{2-8}
 & \multirow{4}{*}{ \shortstack{ SNR = 1 \\ (0) } }  & Time  & 3.7 & 0.8 & 0.79 & 0.76 & 2\\
  &  & Success  & 100 & 100 & 100 & 100 & 100\\
  &  & $H_{\rm{max}}$  & 26(4722) & 22(32) & 21(30) & 22(34) & 9(10)\\
  &  & $t_{\rm{max}}$  & 0.00074 & 0.0085 & 0.0084 & 0.0081 & 0.010\\
\midrule
\multirow{12}{*}{\shortstack{$n = 1000$,\\ $p = 5000$, \\ independent \\ design}} & \multirow{4}{*}{ \shortstack{ SNR = 3 \\ (100) } }  & Time  & 39 & 9.2 & 9.3 & 8.9 & 13\\
  &  & Success  & 99 & 100 & 100 & 100 & 4\\
  &  & $H_{\rm{max}}$  & 35291(65746) & 20(32) & 36(53) & 43(53) & 2000(2000+)\\
  &  & $t_{\rm{max}}$  & 14 & 0.091 & 0.17 & 0.19 & 12\\
 \cmidrule{2-8}
 & \multirow{4}{*}{ \shortstack{ SNR = 2 \\ (100) } }  & Time  & 39 & 9.2 & 9.4 & 8.9 & 13\\
  &  & Success  & 98 & 100 & 91 & 100 & 93\\
  &  & $H_{\rm{max}}$  & 39746(70458) & 20(31) & 469(2000+) & 33(48) & 160(2000+)\\
  &  & $t_{\rm{max}}$  & 15 & 0.089 & 2.2 & 0.15 & 1.1\\
 \cmidrule{2-8}
 & \multirow{4}{*}{ \shortstack{ SNR = 1 \\ (0) } }  & Time  & 34 & 8.5 & 8.6 & 8.3 & 8.3\\
  &  & Success  & 97 & 100 & 100 & 100 & 100\\
  &  & $H_{\rm{max}}$  & 20140(76960) & 21(44) & 20(44) & 20(43) & 12(22)\\
  &  & $t_{\rm{max}}$  & 7.5 & 0.089 & 0.086 & 0.081 & 0.056\\
\midrule
\multirow{12}{*}{\shortstack{$n = 1000$,\\ $p = 5000$, \\ correlated \\ design}} & \multirow{4}{*}{ \shortstack{ SNR = 3 \\ (100) } }  & Time  & 41 & 10 & 10 & 9.9 & 8.9\\
  &  & Success  & 94 & 100 & 99 & 100 & 0\\
  &  & $H_{\rm{max}}$  & 51178($10^5+$) & 20(32) & 65(1181) & 41(54) & 2000(2000+)\\
  &  & $t_{\rm{max}}$  & 21 & 0.10 & 0.33 & 0.20 & 8.7\\
 \cmidrule{2-8}
 & \multirow{4}{*}{ \shortstack{ SNR = 2 \\ (84) } }  & Time  & 41 & 10 & 10 & 9.8 & 10\\
  &  & Success  & 42 & 89 & 94 & 97 & 20\\
  &  & $H_{\rm{max}}$  & $10^5+$($10^5+$) & 37(2000+) & 64(2000+) & 50(1194) & 2000(2000+)\\
  &  & $t_{\rm{max}}$  & 38 & 0.19 & 0.33 & 0.24 & 8.5\\
 \cmidrule{2-8}
 & \multirow{4}{*}{ \shortstack{ SNR = 1 \\ (0) } }  & Time  & 20 & 7.2 & 7.2 & 7 & 16\\
  &  & Success  & 100 & 100 & 100 & 100 & 100\\
  &  & $H_{\rm{max}}$  & 25(13474) & 23(36) & 22(33) & 22(32) & 9(10)\\
  &  & $t_{\rm{max}}$  & 0.0041 & 0.082 & 0.078 & 0.076 & 0.069\\
\bottomrule
\end{tabular}
}
\end{table}

Results are summarized in Table~\ref{table1}. 
We first note that when $\rm{SNR} = 3$, \LIB{}-2 performs much worse than \RW{} and almost always gets stuck at some sub-optimal local mode. This is consistent with the observation made in Example~\ref{ex1}, which will be further discussed in Section~\ref{sec:zanella}. Due to its poor performance, we exclude \LIB{}-2 from all the remaining numerical experiments. 
For $\rm{SNR}=2$ and $3$, informed algorithms always find the model $\hat{\gamma}_{\rm{max}}$ much faster than \RW{}. Remarkably, the median wall time needed for \LOT{}-1 to sample $\hat{\gamma}_{\rm{max}}$ (denoted by $t_{\rm{max}}$ in the table) is less than $0.2$ second in all scenarios. 
When $\rm{SNR}=1$, the best model is often the null model, in which case \RW{} can also find $\hat{\gamma}_{\rm{max}}$ easily (since we fix $\hdel(\gamma) = 0.4$, it only takes \RW{} about 25 iterations to propose removing the 10 covariates in $\gamma^{(0)}$). 
When the signal-to-noise ratio is either very strong ($\rm{SNR} = 3$) or very weak ($\rm{SNR} = 1$), all algorithms  except \LIB{}-2 can identify $\hat{\gamma}_{\rm{max}}$ in most runs. Similar findings were made in~\citet{yang2016computational}, and this is because when SNR is very large,  Condition~\ref{cond:ywj} is likely to be satisfied  with $\gamma^* = \true$, and when SNR is very small, Condition~\ref{cond:ywj} is likely to be satisfied  with $\gamma^* = \emptyset$.  
For correlated designs with $\rm{SNR} = 2$, the posterior landscape tends to be  multimodal, and all algorithms may get stuck at local modes. But the informed algorithms still have much better performance than \RW{}: each informed algorithm,  except \LIB{}-2, is able to sample $\hat{\gamma}_{\rm{max}}$ in $\geq 80\%$ of the runs, while \RW{} has a much larger failure rate and finds $\hat{\gamma}_{\rm{max}}$ much more slowly. 
We also observe that in most settings, \LOT{}-1 is (significantly) more efficient than \LOT{}-2 and \LIB{}-1. This suggests that it is helpful to truncate the weighting function $\wadd$ and $\wdel$  to a relatively small range, which is consistent with our theory. 
Since $|\gamma^{(0)} | = |\true| = 10$ (and the two sets are disjoint in most cases),  it takes at least 20 addition and deletion proposals to move from $\gamma^{(0)}$ to $\true$. 
According to Table~\ref{table1}, in high SNR settings,  $\hat{\gamma}_{\rm{max}}$ usually coincides with  $\true$, and the median number of iterations needed by \LOT{}-1 to reach $\hat{\gamma}_{\rm{max}}$ is about $20$ (except in the correlated design case with $n=500$, $p=1000$),  suggesting that the performance of \LOT{}-1 is close to being ``optimal'' in the sense that any other local MH sampler cannot find $\hat{\gamma}_{\rm{max}}$ in a smaller number of iterations.

Since to implement informed MCMC algorithms, we need to evaluate $\PP$ for every possible addition or deletion move in each iteration, we can use the MCMC sample paths to empirically study to what extent Condition~\ref{cond:ywj} is satisfied. 
As predicted by the theory developed in~\citet{yang2016computational}, we find that Condition~\ref{cond:ywj} is more likely to be violated when SNR is small or the design matrix contains highly correlated covariates. See Sections~\ref{supp:sim.local} and~\ref{supp:cond.ywj} in the supplement for details.

\subsection{Rao-Blackwellization for \LOT{} }\label{sec:sim1.rb}
During MCMC, given the current model $\gamma$, we can estimate $\beta$ using the conditional posterior mean $\hat{\beta}(\gamma) = \bbE[  \beta \mid \gamma, y ]$. 
For \LOT{} algorithms, we can obtain a Rao-Blackwellized estimator, $\hat{\beta}_{\rm{RB}}(\gamma)$,  as follows. 
For each $j \in [p]$, let $\gamma_j = \ind_{ \gamma } (j)$ indicate whether covariate $j$ is selected in $\gamma$, and let $\gamma_{-j} = (\gamma_1, \dots, \gamma_{j-1}, \gamma_{j+1}, \dots, \gamma_p)$ denote the status of all the other $p-1$ covariates.  
By the law of total expectation, $\bbE [\beta_j \mid y] = \bbE[  \, \bbE[ \beta_j \mid \gamma_{-j}, y ] \,  ]$. 
For informed proposal schemes, we can get $ \bbE[ \beta_j \mid \gamma_{-j}, y ]$ for every $j \in [p]$ with little additional computational cost, since 
\begin{align*}
\bbE[ \beta_j \mid \gamma_{-j}, y ]  =    \frac{ \PP(  \gamma \cup \{j\} ) \, \bbE[ \beta_j \mid \gamma_{-j}, y, \gamma_j = 1]   }{   \PP(  \gamma \cup \{j\} )  +  \PP(  \gamma \setminus \{j\} )  }, 
\end{align*} 
and all the three terms on the right-hand side (one of them is just $\PP(\gamma)$)  have already been obtained when we calculate the normalizing constants,  $\Zadd(\gamma)$ and $\Zdel(\gamma)$. 
The estimator $\hat{\beta}_{\rm{RB}}(\gamma)$ is then obtained by estimating  each entry  using $ \bbE[ \beta_j \mid \gamma_{-j}, y ]$.  

Let $\mathrm{MSE}(\hat{\beta}) = p^{-1}  \norm{\hat{\beta} - \beta^*}_2^2$  denote the error of an estimator $\hat{\beta}$. For the simulation study described in Section~\ref{sec:sim1}, we observe that $\mathrm{MSE}(\hat{\beta}_{\rm{RB}}(\gamma))$   always decreases to a nearly optimum level within a few iterations. 
See Figure~\ref{figure1} for the trajectories of $\mathrm{MSE}(\hat{\beta}_{\rm{RB}}(\gamma))$ of  \LOT{}-1 and $\mathrm{MSE}(\hat{\beta}(\gamma))$ of \RW{}, averaged over 100 data sets. (Trajectories of \LOT{}-2 and \LIB{}-1 are omitted since they are very similar to that of \LOT{}-1.) 
The advantage of  \LOT{} over \RW{} becomes even more substantial. More investigation is needed to justify the use of $\hat{\beta}_{\rm{RB}}(\gamma)$, but our analysis at least shows that the computation of the proposal weights can be made use of in multiple ways. 

\begin{figure} 
\begin{center}
\includegraphics[width=0.9\linewidth]{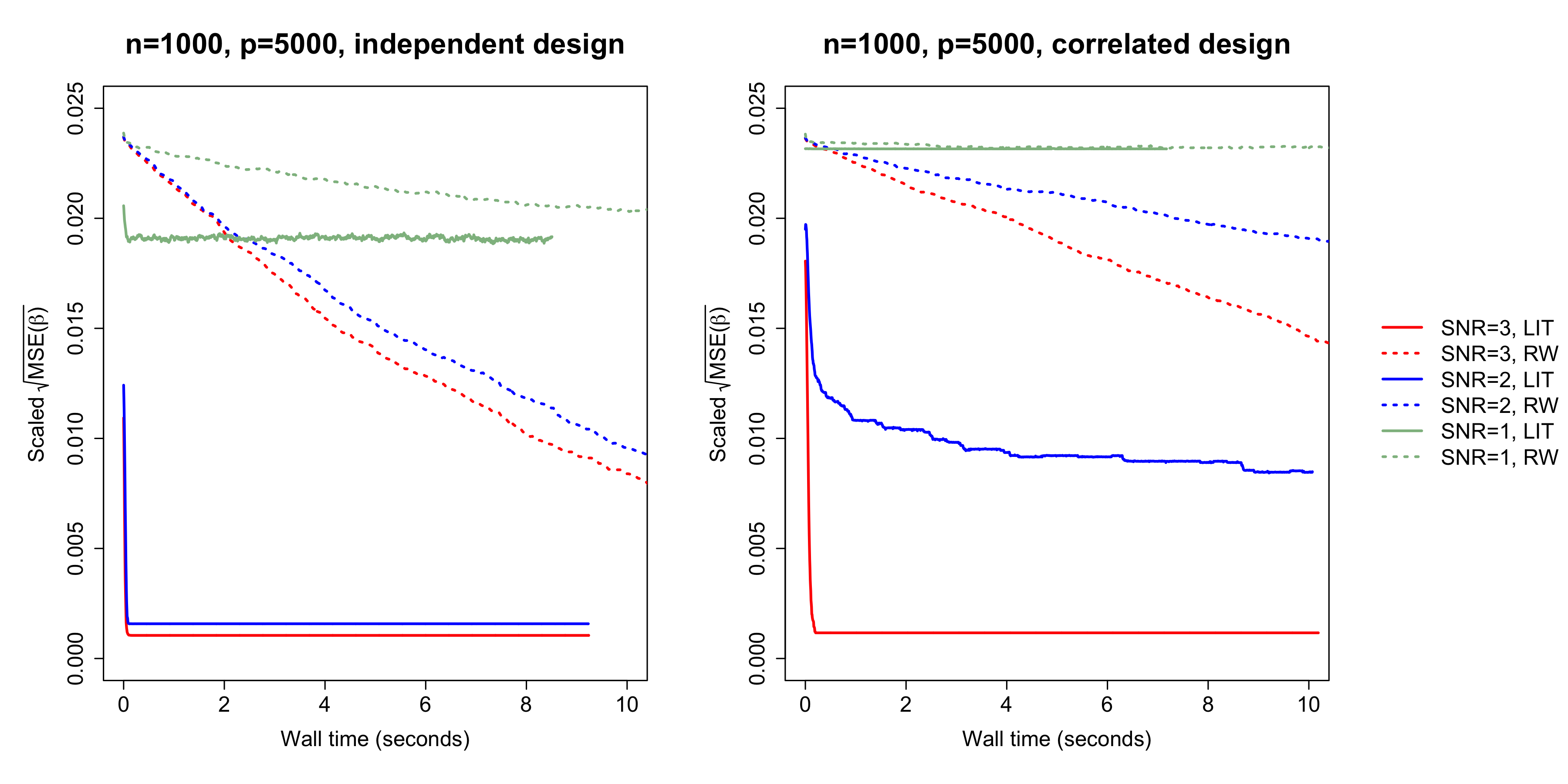}  
\caption{\label{figure1} Trajectories of $\mathrm{MSE}(\hat{\beta})$.  Solid lines represent the \LOT{}-1 algorithm, which uses $\hat{\beta} = \hat{\beta}_{\rm{RB}}(\gamma)$; dotted lines represent the \RW{} algorithm, which uses $\hat{\beta} = \hat{\beta} (\gamma)$. The y-axis is $  \{ \rm{MSE} ( \hat{\beta} ) \}^{1/2} / \rm{SNR}$ averaged over 100 data sets.}
\end{center}
\end{figure}

\subsection{Exploring multimodal posterior distributions}\label{sec:sim.ess}

\begin{table} 
 \caption{\label{table2} Simulation study II.  ``Mean model size'' is the mean size of $\gamma$ sampled by \LOT{}-1 (which is almost the same as that for the other three samplers). 
 ``Time'' is the wall time usage measured in seconds. ``Local modes'' is the number of unique local modes  sampled in the MCMC; we say $\gamma$ is a local mode if   $\PP(\gamma) > \PP(\gamma')$ for any $\gamma'  \in \adds(\gamma) \cup \dels(\gamma)$. 
 ``Acc. rate'' is the acceptance rate.  ESS($T_1$)  and ESS($T_2$) are the estimated effective sample sizes of $T_1$ and $T_2$ (see the main text for details). All statistics are averaged over 20 data sets. }
 {\scriptsize
\begin{tabular}{cccccc}
\toprule
  &  &  \RW{} & \LOT{}-1 & \LOT{}-2 & \LIB{}-1   \\
& Number of iterations &  200,000 & 2,000 & 2,000 & 2,000 \\ 
\midrule
 \multirow{5}{*}{ \shortstack{ $\sigma_\beta = 0.1$ \\ Mean model size = 6.1 } }& Time  & 78.1 & 9.95 & 10.0 & 9.30\\ 
  & Local modes  & 1.85 & 2.30 & 2.15 & 2.00\\ 
  & Acc. Rate  & 0.012 & 0.495 & 0.566 & 0.228\\ 
  & ESS($T_1$)/Time  & 0.706 & 16.8 & 15.1 & 6.46\\ 
  & ESS($T_2$)/Time  & 4.83 & 34.5 & 29.3 & 11.6\\ 
\cmidrule{1-6} 
 \multirow{5}{*}{ \shortstack{ $\sigma_\beta = 0.2$ \\ Mean model size = 26.4 } }& Time  & 79.1 & 16.0 & 15.9 & 14.6\\ 
  & Local modes  & 2.60 & 6.20 & 6.25 & 3.90\\ 
  & Acc. Rate  & 0.0060 & 0.602 & 0.580 & 0.320\\ 
  & ESS($T_1$)/Time  & 0.414 & 4.72 & 3.76 & 2.40\\ 
  & ESS($T_2$)/Time  & 4.67 & 19.9 & 18.5 & 12.7\\ 
\cmidrule{1-6} 
 \multirow{5}{*}{ \shortstack{ $\sigma_\beta = 0.3$ \\ Mean model size = 50.2 } }& Time  & 80.4 & 27.9 & 27.6 & 24.7\\ 
  & Local modes  & 2.40 & 5.05 & 4.45 & 3.65\\ 
  & Acc. Rate  & 0.0037 & 0.578 & 0.571 & 0.296\\ 
  & ESS($T_1$)/Time  & 0.360 & 2.49 & 2.82 & 1.48\\ 
  & ESS($T_2$)/Time  & 3.57 & 19.8 & 18.1 & 9.79\\ 
\cmidrule{1-6} 
 \multirow{5}{*}{ \shortstack{ $\sigma_\beta = 0.4$ \\ Mean model size = 63.9 } }& Time  & 81.2 & 37.0 & 36.8 & 32.6\\ 
  & Local modes  & 2.00 & 3.85 & 5.20 & 3.65\\ 
  & Acc. Rate  & 0.0027 & 0.541 & 0.546 & 0.261\\ 
  & ESS($T_1$)/Time  & 0.333 & 2.57 & 1.85 & 1.02\\ 
  & ESS($T_2$)/Time  & 3.02 & 17.5 & 14.3 & 7.92\\ 
\cmidrule{1-6} 
 \multirow{5}{*}{ \shortstack{ $\sigma_\beta = 0.5$ \\ Mean model size = 71.6 } }& Time  & 81.8 & 42.5 & 42.7 & 36.8\\ 
  & Local modes  & 1.80 & 2.75 & 2.60 & 2.65\\ 
  & Acc. Rate  & 0.0021 & 0.485 & 0.536 & 0.217\\ 
  & ESS($T_1$)/Time  & 0.526 & 3.62 & 3.28 & 1.36\\ 
  & ESS($T_2$)/Time  & 2.45 & 15.0 & 15.1 & 5.78\\ 
\bottomrule
\end{tabular}
}
\end{table}

Condition~\ref{cond:ywj} represents the ideal case where the posterior distribution is unimodal, but in reality multimodality is the norm. 
In our second simulation study, we consider a more realistic simulation scheme which gives rise to multimodal posterior distributions. 
The design matrix $\X$ is still assumed to have i.i.d. rows, but each row is sampled from $\N(0, \Sigma_{d, p})$ where  $\Sigma_{d, p} = \diag (\Sigma_d, \dots, \Sigma_d)$ is block-diagonal. 
Each block $\Sigma_d$ has dimension $d \times d$, and $(\Sigma_d)_{jk} = e^{-|j - k| / 3}$. 
We fix $n = 1000$, $p = 5000$ and $d = 20$. 
The response $y$ is still simulated by $y = X \beta^* + z$ with $z \sim \N(0, I_n)$. But we generate $\beta^*$ by first sampling $\gamma^*$ from the uniform distribution on the set $\{ \gamma \subset [p] \colon |\gamma| = 100\}$ and then sampling $\beta^*_{\gamma^*} \sim \N(0,  \sigma^2_\beta I_{100} )$. 
We use $\sigma_\beta = 0.1, 0.2, 0.3, 0.4, 0.5$ to simulate posterior distributions with varying degrees of multimodality. For the hyperparameters, we choose $\kappa_0 = 1$ and $\kappa_1 = 1/2$.  We observe that the posterior multimodality is most severe for $\sigma_\beta=0.2$.

For each setting, we simulate 20 data sets, and for each data set, we run \RW{} for $2 \times 10^5$ iterations and each informed algorithm  for $2,000$ iterations.  
All four algorithms are initialized with the model obtained by forward-backward stepwise selection. We use effective sample size (ESS) to measure the sampling efficiency. To calculate ESS, we consider two one-dimensional ``summary statistics''.  Let $(\gamma^{(k)}, \beta^{(k)})$ denote the sample collected in the $k$-th MCMC iteration, where $\beta^{(k)}$ is drawn from the conditional posterior distribution of $\beta$ given $\gamma^{(k)}$.  Let $T_{1}^{(k)} = \norm{ \hat{\beta}(\gamma^{(k)}) - \beta^* }_2^2$ where $\hat{\beta}(\gamma) = \bbE[ \beta \mid \gamma, y],$ and let $T_{2}^{(k)} = \norm{ X \beta^{(k)} }_2^2$.  
Note that  $T_{1}^{(k)}$ only depends on $\gamma^{(k)}$, and thus the ESS of $T_1$ roughly reflects the efficiency  of sampling $\gamma$, while the ESS of $T_2$ reflects the efficiency of sampling $\beta$. 
The ESS estimates presented in Table~\ref{table2} are calculated by using the \texttt{R} package \texttt{coda}~\citep{plummer2006coda}, and we provide nonparametric ESS estimates in Section~\ref{supp:sim.ess} in the supplement. 
Given that the interest is in estimating the posterior means of $\beta$ and $\gamma$, following~\citet{vats2019multivariate}, we may calculate a multivariate ESS by appealing to a multivariate central limit theorem using $\{ \gamma^{(k)} \}$ or $\{ \beta^{(k)} \}$. However, the resulting Monte Carlo covariance matrix is quite large, and the estimation of such high-dimensional matrices is an ongoing problem in the literature; see~\citet{jin2021fast}. 
In Section~\ref{supp:sim.ess} in the supplement, we propose a method for constructing a low-dimensional summary of $\gamma$, and present corresponding multivariate ESS estimation results. 

From Table~\ref{table2}, we see that given a fixed total number of iterations,  the acceptance rate of \RW{} decreases quickly for larger $\sigma_\beta$, while it remains roughly unchanged around $0.5$ for \LOT{}  algorithms.  
In all scenarios, \LOT{}-1 and \LOT{}-2 have much larger effective sample sizes (per second) of both statistics $T_1$ and $T_2$ than \RW{}, indicating that \LOT{} algorithms  can explore multimodal  distributions and collect posterior samples much more efficiently than \RW{}. 
Comparing $\rm{ESS}(T_1)$ of \LOT{} and that of \RW{}, we see  that the advantage of using \LOT{} for sampling $\gamma$ is huge for small values of $\sigma_\beta$. 
For \LIB{}-1, we note that it always has smaller acceptance rate and effective sample sizes than the two \LOT{} algorithms. 
This is probably due to the use of an unbounded weighting function, which will be further discussed in Section~\ref{sec:compare.lb}.

\section{Analysis of real GWAS data}\label{sec:gwas}
We have obtained access to two GWAS (genome-wide association study) data sets on glaucoma from dbGaP (the database of Genotypes and Phenotypes) with accession no. phs000308.v1.p1 and phs000238.v1.p1. 
Both data sets only contain individuals of Caucasian descent, and they were generated using the same genotyping array.  
We remove individuals whose self-reported race is Hispanic Caucasian and those with abnormal intraocular pressure or cup-to-disk ratio (CDR) measurements. 
We choose the response variable $y$ to be the standardized CDR measurement averaged over two eyes. 
After merging the two data sets, we discard variants with minor allele frequency less than $0.05$ or missing rate greater than $ 0.01$  and variants that fail the Hardy-Weinberg equilibrium test (p-value less than $10^{-6}$ in control samples). Finally, we use PLINK to prune variants with pairwise correlation $>0.75$ and end up with $n = 5,418$ and $p = 328,129$. Each entry of the matrix $\X$ takes value in $\{0, 1, 2\}$, representing the number of copies of the minor allele.  For the hyperparameters, we choose $\kappa_0$ and $\kappa_1$ such that $g = 100$ and $\pi_0(\gamma) \propto (20 / p)^{|\gamma|}$.  
These choices are motivated by practical considerations. First, the prior on $\gamma$ reflects that a priori we believe there are about 20 variants associated with $y$. For a complex trait such as CDR, this is a very conservative estimate. Second, assuming a causal variant $X_j$ has minor allele frequency $0.5$, $g = 100$ implies that the prior effect size $( g / X_j^\top X_j)^{1/2} \approx 0.2$, which is recommended for Bayesian analysis of GWAS data~\citep{stephens2009bayesian}.

\begin{table} 
 \caption{\label{table3} CDR analysis using GWAS data. $|S_\delta|$ is the number of variants for which we need to evaluate the posterior probabilities when proposing addition moves (see the main text for details). 
``Iterations'' is the number of MCMC iterations for each run; for each algorithm, we conduct 5 independent runs.  ``Time'' is the average wall time usage measured in minutes. ``Acc. rate'' is the average acceptance rate. 
ESS($T_2$) is the effective sample size calculated using the statistic $T_2$ as in Table~\ref{table2}.  
}
\centering
\begin{tabular}{cccccc}
\toprule
Algorithm & $|S_\delta|$ & Iterations & Time & Acc. rate  & ESS($T_2$)/Time \\ 
\midrule 
\RW{}  &  NA &  $1,000,000$  & 428 & 0.273  & 1.95 \\ 
\LOT{}-1 ($\delta = 0.0001$)  & 7255  & 8,000  & 48.7  & 0.714 & 8.52 \\ 
\LOT{}-1 ($\delta = 0.0005$) & 1410  & $40,000$  & 56.3  & 0.635 & 26.8 \\ 
\LOT{}-1 ($\delta = 0.001$)  & 715  & $80,000$ & 82.6   & 0.603  & 33.5 \\ 
\bottomrule
\end{tabular}
\end{table}

\begin{table} 
 \caption{\label{table4} Top 10 signals in the CDR analysis. ``Location'' is the cytogenetic location of the variant in human genome. ``PIP'' is the posterior inclusion probability estimate averaged over all \LOT{}-1 runs. 
``Known hit'' indicates whether the variant is known to be associated with ocular traits; if yes, a reference is provided in the last column.}
\begin{tabular}{ccccc}
\toprule
Variant name &  Location &  PIP &  Known hit  & Reference \\
\midrule
rs1063192 &  9p23.1 & 0.989 &  Yes & \citet{osman2012genome} \\
rs653178 &  12q24.12 & 0.972 & No & \\
rs10483727 & 14q23.1 & 0.888 & Yes  & \citet{bailey2016genome} \\
rs319773 & 17q11.2 & 0.532 & No  & \\
rs2275241 & 9q33.3  & 0.531 & Yes & \citet{craig2020multitrait} \\ 
rs4557053 & 20p12.3 & 0.222 &  No & \\
rs10491971 & 12p13.32 & 0.144 & No & \\
rs4901977 & 14q23.1 & 0.112 &  Yes & \citet{springelkamp2014meta} \\ 
rs587409 & 13q34 & 0.111 & Yes &  \citet{khawaja2018genome} \\
rs314300 & 7q22.1 & 0.107 & No & \\
\bottomrule
\end{tabular}
\end{table}

\begin{table} 
 \caption{\label{table5} Posterior inclusion probabilities of the 5 hits in Table~\ref{table4}. ``PIP'': posterior inclusion probability estimate. The last column gives the minimum and maximum PIP estimates obtained from 15 \LOT{}-1 runs.  }
\begin{tabular}{ccccccc}
\toprule 
 \multirow{2}{*}{Variant name} & \multicolumn{5}{c}{PIPs of \RW{} runs }  & \multirow{2}{*}{\shortstack{PIP range in  15 \\ \LOT{}-1 runs} }  \\ 
 \cmidrule{2-6}
   & Run 1 & Run 2  & Run 3  & Run 4  & Run 5  & \\
\midrule
rs1063192 & 0  & 0.729 & 0.309 & 0.753 & 0.330 & [0.839, 1 ] \\ 
rs10483727 & 0.483 & 0.147 & 0 & 0.852 & 0  & [0.308, 1] \\ 
rs2275241 & 0.383 & 0 & 0  & 0  & 0.426 & [0.491, 0.569] \\ 
rs4901977 & 0.486 & 0.239 & 0.678 & 0  & 0.214 & [0,  0.695] \\ 
rs587409 & 0  & 0.003 & 0.065 & 0.117 & 0.032 & [0.083, 0.132] \\ 
\bottomrule
\end{tabular}
\end{table}

We first conduct 5 parallel runs (with different random seeds) of the \RW{} algorithm, each consisting of 1 million iterations. Then we build a set, denoted by $S_\delta \subseteq [p]$, which includes all variants with posterior probabilities (estimated using the \RW{} output) greater than $\delta$. 
When implementing the addition proposals for \LOT{}, we use~\eqref{eq:def.twadd} with $S = S_\delta$. 
One can also use marginal regression  to build the set $S_\delta$~\citep{fan2008sure}, which would yield very similar results. 
We consider $\delta = 10^{-4}, 5\times 10^{-4}$ and $10^{-3}$. For each choice, we conduct 5 parallel runs of the \LOT{}-1 algorithm. Some summary statistics of the output 
are provided in Table~\ref{table3}.   
For all four algorithms, the mean size of sampled models is $13$. 
According to $\mathrm{ESS}(T_2)$ per minute, \LOT{}-1 algorithms are much more efficient than \RW{} in terms of sampling $\beta$, which may be surprising since \RW{} also has acceptance rate $0.27$ and thence a much larger total number of accepted moves than \LOT{} algorithms. This indicates that \LOT{} can achieve greater sampling efficiency by significantly reducing the autocorrelation in MCMC samples.

Next, we examine the estimate of the posterior inclusion probability (PIP), $\bbE[ \ind_{\gamma}(j) \mid y]$, for each $j \in [p]$. In Table~\ref{table4}, we list the 10 variants with the largest PIPs averaged over all  runs of \LOT{}-1. Among them 5 are known GWAS hits for ocular traits (or ocular disorders) located in 4 different regions. 
For these 5 hits (which we may assume to be true signals), the PIP estimate in each individual run of \RW{} exhibits a very high variability. 
For example, in Table~\ref{table5},  we see that only the 4th run of \RW{} yields a PIP estimate greater than $0.1$ for \textit{rs587409}. Further, if one uses $0.1$ as the threshold, each \RW{} run can miss at least two of the five hits.  This observation suggests that, for large data sets, we often need to run \RW{} for an extremely large number of iterations so that the results can be ``replicable''. 
In contrast, the individual PIP estimates from 15 \LOT{}-1 runs are much more stable. The only exception is the variant \textit{rs4901977}. This is because \textit{rs4901977} is located closely to  \textit{rs10483727}, and thus the two variants are correlated, which makes it challenging to identify both variants at the same time. 

\section{Discussion}\label{sec:discuss}

\subsection{On the swap moves of \LOT{} }\label{sec:swap}
Both the parameter $s_0$ and swap moves are used in our mixing time analysis of \LOT{} for merely technical reasons. As shown in~\citet{yang2016computational}, rapid mixing on the space $\cM(p)$ is usually impossible since sharp local modes can easily occur among very large models, suggesting that the use of $s_0$ is necessary for theoretical analysis. Then, swap moves are introduced to ensure that the chain cannot get trapped at an underfitted model with size $s_0$. 
However, in practice, even if we let $s_0 = p$ and run the chain on $\cM(p)$, the chain is very unlikely to visit those models with size much larger than $s^*$ since they have negligible posterior probabilities~\citep{narisetty2014bayesian}. In other words, assuming that both $s^*$ and $|\gamma^{(0)}|$ are small, Condition~\ref{cond:ywj} actually implies that we will ``observe'' the chain is ``rapidly mixing'' by using only addition and deletion moves. 

The above reasoning suggests that an approximate implementation of informed swap moves will not significantly affect the overall performance of \LOT{}. One way to realize a ``partially informed'' swap move is to treat it as a composition of one addition and one deletion. Given current state $\gamma$, we first use an informed addition move to propose some $\tilde{\gamma} \in \adds(\gamma)$, and then use an informed deletion move to propose $\gamma' \in \dels(\tilde{\gamma})$. The acceptance probability of $\gamma'$ is calculated by
\begin{align*}
    1 \wedge  \frac{\PP(\gamma') \KLOT(\gamma', \tilde{\gamma}) \KLOT(\tilde{\gamma}, \gamma) }{\PP(\gamma) \KLOT(\gamma, \tilde{\gamma}) \KLOT(\tilde{\gamma}, \gamma')  }. 
\end{align*}
One can check that the resulting transition matrix is reversible with respect to $\PP$. In our implementation of \LOT{}, we further impose the constraint that $\gamma' \neq \gamma$ when sampling $\gamma'$ from $\dels(\tilde{\gamma})$ and adjust the Hastings ratio accordingly. 
Note that to implement an addition proposal, we need to calculate both $\KLOT(\gamma, \tilde{\gamma})$ and $\KLOT(\tilde{\gamma}, \gamma)$, which requires evaluating $\PP$ for $p$ models. Similarly, for the deletion proposal, we also need to evaluate $\PP$ for $p$ models.  Hence, in our implementation, each swap proposal involves $2p$ evaluations of $\PP$. This is much more efficient than implementing an informed swap proposal exactly as described in~\eqref{eq:def.K}, which requires evaluating $\PP$ for $2 (p - |\gamma|) |\gamma|$ models.

\subsection{\LOT{} on general discrete state spaces} \label{sec:zanella}
\citet{zanella2020informed} considered ``locally balanced proposals'' for general discrete-state-space problems. Let $\pi$ be a distribution defined on a general discrete state space $\cX$. For each $x$, let $\cN(x) \subset \cX$ denote its neighborhood. 
A locally balanced proposal scheme can be written as
\begin{equation}\label{eq:zanella}
\KZ(x, x') =  \frac{  f \left(  \frac{ \pi (x') }{ \pi (x) } \right) }{ Z_f(x)  }  \ind_{\cN(x)}(x'), \quad  Z_f(x) =  \sum_{y \in \cN(x)}f \left (  \frac{ \pi (y ) }{ \pi (x) } \right), 
\end{equation}
where the ``balancing function'' $f \colon (0, \infty) \rightarrow (0, \infty)$ must satisfy $f(b) = b f(b^{-1})$ for any $b > 0$.  Examples of balancing functions include $f(b) = \sqrt{b}$ and $f(b) = 1 \vee b$.
Consider an MH algorithm with proposal $\KZ$. 
A seemingly desirable property of balancing functions is that the acceptance probability of a proposal move from $x$ to $x'$ is given by 
\begin{equation}\label{eq:general1}
\acc(x, x') =   \min \left\{  1,  \,  \frac{  Z_f(x)  }{ Z_f(x') }   \right\}, 
\end{equation}
for any $x' \in \cN(x)$. If $Z_f(x) \approx Z_f(x')$, this method should work well. Indeed, 
\citet{zanella2020informed} argued that if $  \sup_{x, x'\colon  x' \in \cN(x)} Z_f(x) / Z_f(x') \rightarrow 1$, such a locally balanced proposal is asymptotically optimal. 
But, for problems like variable selection (which was not considered in~\citet{zanella2020informed}), the behavior of the function $x \mapsto Z_f(x)$ is very difficult to predict, and Table~\ref{table1} confirms that for $f(b) = \sqrt{b}$, the informed MH algorithm with proposal~\eqref{eq:zanella} completely fails when the signal-to-noise ratio is sufficiently large. 

Motivated by Condition~\ref{cond:ywj}, consider some $\pi$ that satisfies  the following  condition: there exist $x^* \in \cX$, $\cT \colon \cX \rightarrow \cX$, and $b_0 > 1$ such that for any $x \neq x^*$, $\cT(x) \in \cN(x)$ and $\pi(\cT(x)) / \pi(x) \geq b_0$. 
Define $\omega(x) = \pi(\cT(x)) / \pi(x)$ for each $x \neq x^*$.  
Note that by~\eqref{eq:zanella},  $\KZ(x, \cT(x)) = f(\omega(x)) / Z_f(x)$. 
It follows from~\eqref{eq:general1} that for any $x$ such that $\cT(x) \neq x^*$,   
\begin{equation}\label{eq:general.lit}
    \KZ(x, \cT(x)) \acc(x, \cT(x)) \leq    \frac{  f \left( \omega(x) \right) }{ Z_f( \cT(x) ) } 
    \leq   \frac{ f \left( \omega(x) \right) } { f \left( \omega(\cT(x)) \right) }, 
\end{equation}
since $Z_f(\cT(x)) \geq f ( \omega(\cT(x)) )$. The ratio $ f( \omega(x) ) / f ( \omega(\cT(x)) )$ can be exceedingly small since it is possible that $ \omega( \cT(x) )$ is much larger than $ \omega(x)$.  As we have seen in Example~\ref{ex1}, for variable selection, $\omega(\cT(x)) \gg \omega(x)$ can easily happen if there are correlated covariates and the sample size is large.  
Since collinearity is common for high-dimensional data, this analysis suggests that for general model selection problems, locally balanced MH schemes with proposal given by~\eqref{eq:zanella} may not have good performance when both $n$ and $p$ are large. 

The main idea behind our \LOT{} algorithm can still be applied in this general setting. Pick constants $\overline{f} > \underline{f} > 0$  and modify~\eqref{eq:zanella} by  
\begin{equation*} 
\tilde{\bK}_{\rm{lb}}(x, x') =  \frac{  \underline{f} \vee  f (  \frac{ \pi (x') }{ \pi (x) } )  \wedge \overline{f} }{ \tilde{Z}_f(x)  }  \ind_{\cN(x)}(x'), \quad  \tilde{Z}_f(x) =  \sum_{y \in \cN(x)} \underline{f} \vee  f \left (  \frac{ \pi (y ) }{ \pi (x) } \right)  \wedge \overline{f}. 
\end{equation*}
This modification guarantees that a proposal move from $x$ to $x'$ has acceptance probability  $1$ as long as $\pi(x') / \pi(x)$ is sufficiently large, as shown in the following lemma. 
\begin{lemma}\label{lm:general}
Let $\pi$ be an arbitrary probability distribution on $\cX$. 
Consider the MH algorithm with proposal $\tilde{\bK}_{\rm{lb}}$ given in~\eqref{eq:general.lit} where $f \colon (0, \infty) \rightarrow (0, \infty)$ is an arbitrary non-decreasing function. Suppose there exists $b < \infty$ such that 
 \begin{align*}
    f \left(  b^{-1}  \right) \leq \underline{f},   \text{  and  } b  \geq \frac{ \overline{f} }{\underline{f} } \max_{x \in \cX}|\cN(x)|. 
\end{align*}  
Then, for any $x, x' \in \cN(x)$ such that $\pi(x') / \pi(x) \geq b $, the proposal move from $x$ to $x'$ has acceptance probability $1$. 
\end{lemma}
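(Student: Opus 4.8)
The plan is to write out the Metropolis--Hastings acceptance probability of the proposed move from $x$ to $x'$ and show that the Hastings ratio is at least $1$. Since $\cN$ is symmetric (as is needed for the proposal to define a valid Metropolis--Hastings kernel), $x' \in \cN(x)$ forces $x \in \cN(x')$, so $\tilde{\bK}_{\rm{lb}}(x', x) > 0$ and
\[
\acc(x, x') = \min\left\{ 1, \; \frac{\pi(x')\, \tilde{\bK}_{\rm{lb}}(x', x)}{\pi(x)\, \tilde{\bK}_{\rm{lb}}(x, x')} \right\}
\]
is well defined; it then suffices to prove $\pi(x')\, \tilde{\bK}_{\rm{lb}}(x', x) \ge \pi(x)\, \tilde{\bK}_{\rm{lb}}(x, x')$.

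For the right-hand side I would simply use that $\tilde{\bK}_{\rm{lb}}(x, \cdot)$ is a probability distribution on $\cN(x)$, so $\tilde{\bK}_{\rm{lb}}(x, x') \le 1$ and hence $\pi(x)\, \tilde{\bK}_{\rm{lb}}(x, x') \le \pi(x)$. For the left-hand side I would bound the weight and the normalizing constant of $\tilde{\bK}_{\rm{lb}}(x', x)$ separately. Its numerator is $\underline{f} \vee f(\pi(x)/\pi(x')) \wedge \overline{f}$; since $\pi(x)/\pi(x') \le b^{-1}$ and $f$ is non-decreasing, $f(\pi(x)/\pi(x')) \le f(b^{-1}) \le \underline{f}$, so this numerator equals exactly $\underline{f}$. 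Its denominator $\tilde{Z}_f(x')$ is a sum of $|\cN(x')| \le \max_{x \in \cX}|\cN(x)|$ terms, each at most $\overline{f}$, so $\tilde{Z}_f(x') \le \overline{f}\max_{x}|\cN(x)|$. Combining, $\tilde{\bK}_{\rm{lb}}(x', x) \ge \underline{f}/(\overline{f}\max_{x}|\cN(x)|)$, and therefore $\pi(x')\, \tilde{\bK}_{\rm{lb}}(x', x) \ge \pi(x')\, \underline{f}/(\overline{f}\max_{x}|\cN(x)|)$.

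Putting the two bounds together, the Hastings ratio is at least $\tfrac{\pi(x')}{\pi(x)} \cdot \tfrac{\underline{f}}{\overline{f}\max_{x}|\cN(x)|} \ge b \cdot \tfrac{\underline{f}}{\overline{f}\max_{x}|\cN(x)|} \ge 1$, where the last step uses the hypothesis $b \ge (\overline{f}/\underline{f})\max_{x}|\cN(x)|$; hence $\acc(x, x') = 1$. The only step that requires a little thought is the bound on the forward move: estimating $\tilde{\bK}_{\rm{lb}}(x, x')$ through $\overline{f}/\tilde{Z}_f(x)$ would cost an extra factor $\overline{f}/\underline{f}$ and force a quadratic condition on $b$, so it is important to instead use the trivial bound $\tilde{\bK}_{\rm{lb}}(x, x') \le 1$ — which amounts to exploiting that the weight of $x'$ itself is one of the summands defining $\tilde{Z}_f(x)$. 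Everything else is bookkeeping.
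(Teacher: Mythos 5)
Your proof is correct and follows essentially the same route as the paper's: bound the forward proposal probability $\tilde{\bK}_{\rm{lb}}(x,x')$ trivially by $1$, note that monotonicity of $f$ together with $f(b^{-1})\leq \underline{f}$ forces the reverse weight to equal $\underline{f}$, bound $\tilde{Z}_f(x')\leq \overline{f}\,|\cN(x')|$, and conclude from the hypothesis on $b$. The extra remarks on symmetry of $\cN$ and on why the trivial bound on the forward move is the right one are fine but not needed.
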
 
\begin{proof}
See Section~\ref{sec:proof.general}. 
\end{proof}

In~\citet{zanella2020informed}, one motivation for the locally balanced proposal was to mimic the behavior of Metropolis-adjusted Langevin algorithms defined on continuous state spaces~\citep{roberts1998optimal}. However, for model selection problems with large sample sizes, the local posterior landscape can change drastically when we move from $x$ to some $x' \in \cN(x)$, which may result in strange behavior of the MH chain (i.e., keep proposing some state $x'$ with $\pi(x') \gg \pi(x)$ and getting rejected). 
One key observation of this work is that once we  truncate the function $f$ in~\eqref{eq:zanella}, the mapping $x \mapsto \tilde{Z}_f(x)$ becomes much ``smoother'' than $x \mapsto Z_f(x)$. 
Since there is almost no difference in computational cost between the two proposals $\KZ$ and $\tilde{\bK}_{\rm{lb}}$, it is apparently  always desirable to use the ``stabilized version'' $\tilde{\bK}_{\rm{lb}}$ in practice. 

\subsection{On the \LIB{}-1 algorithm for variable selection} \label{sec:compare.lb} 
The discussion in Section~\ref{sec:zanella} explains why \LIB{}-2 fails to perform well in our simulation study. Next, consider the \LIB{}-1 algorithm,  which also uses  the  balancing function  $f(b) = \sqrt{b}$ to weight neighboring states. 
The only difference is that in \LIB{}-1, we perform the proposal weighting for addition and deletion moves separately. It may be surprising that this simple modification improves the sampling performance substantially in our simulation studies. 

To explain this, assume Condition~\ref{cond:ywj} holds.  
By Condition~\cond{c2}, as long as $\gamma$ is underfitted, there exists some $\gamma' \in \adds( \gamma)$  such that $\PP(\gamma')/ \PP(\gamma) \geq p^{c_1}$, and thus the proposal probability $\KLIB(\gamma, \gamma')$ is large. 
Further, we have 
\begin{align*}
\PR(\gamma, \gamma') \frac{ \KLIB(\gamma', \gamma)}{\KLIB(\gamma, \gamma')} =  \frac{  \sum_{\tgamma \in \adds(\gamma)} \sqrt{\PR(\gamma, \tgamma) } }{ \sum_{\tgamma \in \dels(\gamma')} \sqrt{  \PR(\gamma', \tgamma) } } \geq  \frac{ p^{ (c_1 - \kappa)/2 } }{ s_0 }, 
\end{align*}
where the inequality follows from $\PR(\gamma', \tgamma)  \leq p^{\kappa}$ for any $\tgamma \in \dels(\gamma')$ and $|\dels(\gamma')| \leq s_0$. 
So if the signal-to-noise ratio is sufficiently large so that  $c_1 > \kappa + 2$, the proposal will always be accepted. 
A similar argument shows that for an overfitted model $\gamma$, a proposal to remove a non-influential covariate will be always accepted if the constant $c_0$ in Condition~\cond{c1} is greater than $1$. 
This heuristic argument explains why, unlike \LIB{}-2, \LIB{}-1 does not get trapped at a model because of extremely small acceptance probabilities of informed proposal moves. 
However, it is not clear whether  \LIB{}-1 can attain a dimension-free mixing time for high-dimensional variable selection, and even if it is possible, it would require stronger assumptions on the true model so that $c_1 > \kappa$. The simulation study in Section~\ref{sec:sim} also shows that \LIB{}-1 under-performs the two \LOT{} algorithms.

\subsection{Closing remarks} \label{sec:compare.other}
Theorem~\ref{th:mh.mix} provides the theoretical guarantee for the use of informed MCMC methods for high-dimensional problems, the proof of which relies on a novel ``two-stage drift condition'' argument. Simulation studies show that \LOT{} is indeed much more efficient than the uninformed version, no matter whether the posterior distribution is multimodal. 
As noted in~\citet{zanella2020informed}, one can further boost \LOT{} using parallel computing~\citep{lee2010utility}: the calculation of $\PP(\gamma')/\PP(\gamma)$ for each $\gamma' \in \cN(\gamma)$ can be easily parallelized. 

One major advantage of \LOT{} is its simplicity, which makes it both theoretically and practically appealing. 
The adaptive MCMC methods proposed by~\citet{griffin2021search} have the usual sensitivities of possibly adapting to wrong information and require running multiple chains in the adaptation phase. 
The tempered Gibbs sampler  of~\citet{zanella2019scalable}, which is one of the most efficient existing MCMC methods (see Supplement B.6 therein), is conceptually very similar to our method in that it selects the coordinate to update using local information of $\PP$. But, as a consequence of this informed updating scheme, the tempered Gibbs sampler  requires the calculation of an importance weight in each iteration, which may reduce the efficiency of the sampler when the weight is unbounded.   
\LOT{} has a provable mixing time bound and, due to its simple design,  can be combined with other MCMC techniques such as tempering, blocking and adaptive proposals. But whether further sophistication enhances the sampler's efficiency needs more investigation, which we leave for future work. 

\section*{Acknowledgements}
GOR was supported by EPSRC grants EP/R018561/1 and EP/R034710/1, JSR was supported by NSERC grant RGPIN-2019-04142, and DV was supported by SERB grant SPG/2021/001322. We thank the submitters and participants of the two dbGaP studies (phs000308.v1.p1 and phs000238.v1.p1), which were funded by NIH.

\bibliographystyle{plainnat}
\bibliography{ref.bib}

\clearpage 
\newpage

\renewcommand{\thelemma}{S\arabic{lemma}}
\renewcommand{\thetheorem}{S\arabic{theorem}}
\renewcommand{\thesection}{S\arabic{section}}
\renewcommand{\theremark}{S\arabic{remark}}
\setcounter{remark}{0}
\setcounter{lemma}{0}
\setcounter{section}{0}
\setcounter{subsection}{0}
\setcounter{theorem}{0}


\begin{center}
\Large{Supplement to ``\TITLE{}''} \\ \medskip
\normalsize{
Quan Zhou, Jun Yang, Dootika Vats, Gareth O. Roberts and Jeffrey S. Rosenthal
}
\end{center}

Section~\ref{sec:prelim} is a brief review on some known results for the drift condition, and the proof of Theorem~\ref{th:drift} is provided in Section~\ref{sec:proof}.  
In Section~\ref{sec:cond.ywj}, we state the main result of~\citet{yang2016computational} and explain how to establish Condition~\ref{cond:ywj} for any fixed constants $c_1, c_2 \geq 0$, using essentially the same assumptions. 
Section~\ref{supp:sim} provides additional results for the simulation studies considered in Section~\ref{sec:sim}. 
Section~\ref{sec:proof.sel} contains all the remaining proofs. 
Section~\ref{sec:data} gives the information about the code and real data used in this work. 

\section{Preliminary results for the drift condition}\label{sec:prelim}
We use the notation introduced in Section~\ref{sec:main}. 
Given a drift condition on the set $\cX \setminus \sC$,  it is well known that the entry time of the chain into $\sC$ has a ``thin-tailed'' distribution.

\begin{lemma}\label{lm:drift0}
Let $(\sX_t)_{t \in \bbN}, \cX, \bP$ be as given in Assumption~\ref{ass:mc}. 
Suppose that there exist  a function $V\colon  \cX \rightarrow [1, \infty)$, a constant $\lambda \in (0, 1)$, and a non-empty set $\sC \in \cE$ such that $  (\bP V)(x) \leq \lambda V(x)$ for every $x \notin \sC$. 
Let  $\tau_\sC = \min\{t \geq 0 \colon \sX_t \in \sC \}$. Then, for any $x \in \cX$, 
\begin{align*}
\E_x[\lambda^{- \tau_{\sC} } ]  \leq V(x),  \quad \text{ and } \; 
\P_x( \tau_{\sC} \geq t )  \leq \lambda^t V(x) , \quad \forall \, t \in \bbN.
\end{align*}
\end{lemma}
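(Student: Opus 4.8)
The plan is to establish the bound $\E_x[\lambda^{-\tau_\sC}] \leq V(x)$ first, since the tail bound $\P_x(\tau_\sC \geq t) \leq \lambda^t V(x)$ follows immediately from it by Markov's inequality: $\P_x(\tau_\sC \geq t) = \P_x(\lambda^{-\tau_\sC} \geq \lambda^{-t}) \leq \lambda^t \E_x[\lambda^{-\tau_\sC}] \leq \lambda^t V(x)$. So the real work is the first inequality.

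First I would introduce the stopped/truncated process and work with the quantity $M_t = \lambda^{-(t \wedge \tau_\sC)} V(\sX_{t \wedge \tau_\sC})$, or more simply argue via a recursion on $m_t(x) := \E_x[\lambda^{-(t\wedge \tau_\sC)} V(\sX_{t \wedge \tau_\sC})]$. The key step is to show that $(M_t)$ is a supermartingale with respect to the natural filtration. On the event $\{\tau_\sC \leq t\}$ the process $M_t$ is already frozen, so there is nothing to check. On the event $\{\tau_\sC > t\}$ we have $\sX_t \notin \sC$, hence the drift condition $(\bP V)(\sX_t) \leq \lambda V(\sX_t)$ applies, giving
\begin{align*}
\E[ M_{t+1} \mid \cF_t] \ind_{\{\tau_\sC > t\}}
= \lambda^{-(t+1)} (\bP V)(\sX_t) \ind_{\{\tau_\sC > t\}}
\leq \lambda^{-(t+1)} \lambda V(\sX_t) \ind_{\{\tau_\sC > t\}}
= M_t \ind_{\{\tau_\sC > t\}}.
\end{align*}
Combining the two cases yields $\E[M_{t+1} \mid \cF_t] \leq M_t$, so $(M_t)$ is a non-negative supermartingale. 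Therefore $\E_x[M_t] \leq \E_x[M_0] = V(x)$ for all $t$. Since $V \geq 1$, on the event $\{\tau_\sC \leq t\}$ we have $M_t = \lambda^{-\tau_\sC} V(\sX_{\tau_\sC}) \geq \lambda^{-\tau_\sC}$, so $\E_x[\lambda^{-\tau_\sC} \ind_{\{\tau_\sC \leq t\}}] \leq \E_x[M_t] \leq V(x)$. Letting $t \to \infty$ and applying the monotone convergence theorem (the integrand $\lambda^{-\tau_\sC}\ind_{\{\tau_\sC \leq t\}}$ increases to $\lambda^{-\tau_\sC}\ind_{\{\tau_\sC < \infty\}}$) gives $\E_x[\lambda^{-\tau_\sC}\ind_{\{\tau_\sC < \infty\}}] \leq V(x)$. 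A small additional point: this already forces $\P_x(\tau_\sC = \infty) = 0$ — indeed the tail bound $\P_x(\tau_\sC \geq t) \leq \lambda^t V(x) \to 0$ shows $\tau_\sC < \infty$ almost surely — so the indicator can be dropped, yielding the claimed $\E_x[\lambda^{-\tau_\sC}] \leq V(x)$.

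The main obstacle, such as it is, is purely measure-theoretic bookkeeping rather than anything deep: one must be careful that $V(\sX_t)$ is integrable at each finite step so the conditional expectations are well-defined (this follows inductively from $\E_x[V(\sX_{t+1})] \leq \E_x[(\bP V)(\sX_t)]$ on the non-stopped part plus boundedness on the stopped part), and that the exchange of limit and expectation at the end is justified — monotone convergence handles this cleanly since everything in sight is non-negative. No minorization, reversibility, or spectral assumptions from Assumption~\ref{ass:mc} are actually needed here; only the Markov property and the one-step drift inequality are used. This is a standard argument (cf.\ \citet{rosenthal1995minorization, roberts1999bounds}), included for completeness.
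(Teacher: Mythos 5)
Your argument is correct and is essentially the paper's proof: the paper also observes that the stopped process $\lambda^{-(t\wedge\tau_\sC)}V(\sX_{t\wedge\tau_\sC})$ is a supermartingale and concludes by optional sampling plus Markov's inequality, while you fill in the same steps explicitly with monotone convergence in place of an optional-sampling citation. One small ordering point: to justify dropping the indicator $\ind_{\{\tau_\sC<\infty\}}$ you should deduce $\P_x(\tau_\sC>t)\leq \lambda^t V(x)$ directly from $\E_x[M_t]\leq V(x)$ together with $M_t\geq \lambda^{-t}$ on $\{\tau_\sC>t\}$ (rather than from the tail bound you derive later via Markov, which as written is mildly circular); with that one-line fix the proof is complete.
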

\begin{proof}
Let $\sY_t = \lambda^{-t} V(\sX_t)$. The drift condition implies that $\sY_{t \wedge \tau}$ is a supermartingale. 
The results then follow from optional sampling theorem and  Markov's inequality.  
\end{proof}

The following theorem due to~\citets{jerison2016drift} gives a very useful bound on the mixing time  
when we have the generating function of the hitting time of some state $x^*$.
In the original version~\citeps[Theorem 4.5]{jerison2016drift}, it is assumed that the  drift condition holds on $\cX \setminus \{x^*\}$. An inspection of their proof reveals that we only need  $(\bP V) (x^*) < \infty$ and $\E_x[ \lambda^{- \tau^*} ]  \leq V(x)$ (if the single element drift condition holds, then this follows from Lemma~\ref{lm:drift0}). 
For more general results on the relationship between hitting time and  mixing time, see  \citets{aldous1982some, griffiths2014tight, peres2015mixing, anderson2019drift} among many others.  
These results are mostly developed for finite state spaces.   
 
\begin{theorem}\label{th:j2}
Let $(\sX_t)_{t \in \bbN}, \cX, \bP, \pi$ be as given in Assumption~\ref{ass:mc}. 
Suppose there exist a function $V \colon \cX \rightarrow [1, \infty)$,   a constant  $\lambda \in (0, 1)$  and a point $x^* \in \cX$ such that $(\bP V) (x^*) < \infty$ and $\E_x[ \lambda^{- \tau^*} ]  \leq V(x)$   where $\tau^* = \min\{t \geq 0 \colon \sX_t = x^* \}$. 
Then, for every $t  \in \bbN$ and  $x \in \cX$,  
\begin{align*}
 \TV{ \bP^t(x, \cdot )  - \pi} \leq  2  V(x)\lambda^{t + 1}. 
\end{align*}
Further, $ \TV{ \bP^t(x^*, \cdot )  - \pi} \leq   \lambda^{t + 1}$ for every $t  \in \bbN$. 
\end{theorem}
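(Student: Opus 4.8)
The statement is a mild weakening of \citeps[Theorem~4.5]{jerison2016drift}, so the plan is to reduce to that result by checking that the hypotheses listed here cover every place where the full drift condition is invoked there. It is cleanest to split the argument into two steps: (a) first prove the bound at the base point, namely $\TV{\bP^t(x^*,\cdot)-\pi}\le\lambda^{t+1}$ for all $t\in\bbN$; (b) then bootstrap to an arbitrary starting point $x$ by conditioning on the first time the chain hits $x^*$.

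For step (b), let $\tau^*=\min\{t\ge 0\colon \sX_t=x^*\}$ and decompose the trajectory at the stopping time $\tau^*\wedge t$. The strong Markov property gives, for every measurable $A$,
\begin{equation*}
\bP^t(x,A)=\E_x\!\left[\ind_{\{\tau^*\le t\}}\,\bP^{t-\tau^*}(x^*,A)\right]+\P_x\!\left(\sX_t\in A,\ \tau^*>t\right),
\end{equation*}
and subtracting the analogous decomposition $\pi(A)=\E_x[\ind_{\{\tau^*\le t\}}\pi(A)]+\P_x(\tau^*>t)\pi(A)$ and taking a supremum over $A$ (the difference of the two ``tail'' terms has modulus at most $\P_x(\tau^*>t)$) yields
\begin{equation*}
\TV{\bP^t(x,\cdot)-\pi}\le\E_x\!\left[\ind_{\{\tau^*\le t\}}\,\TV{\bP^{t-\tau^*}(x^*,\cdot)-\pi}\right]+\P_x(\tau^*>t).
\end{equation*}
Granting step (a), the first term is at most $\E_x[\ind_{\{\tau^*\le t\}}\lambda^{t-\tau^*+1}]\le\lambda^{t+1}\E_x[\lambda^{-\tau^*}]\le\lambda^{t+1}V(x)$, while Markov's inequality together with integrality of $\tau^*$ gives $\P_x(\tau^*>t)=\P_x(\lambda^{-\tau^*}\ge\lambda^{-(t+1)})\le\lambda^{t+1}\E_x[\lambda^{-\tau^*}]\le\lambda^{t+1}V(x)$; adding the two contributions gives $2V(x)\lambda^{t+1}$. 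When $x=x^*$ we have $\tau^*=0$, so the bootstrap inequality is vacuous and the claim for $x^*$ is exactly step (a).

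For step (a) --- the crux --- I would follow the renewal-theoretic argument of \citeps[Theorem~4.5]{jerison2016drift}, which is where reversibility and non-negativity of the spectrum of $\bP$ are genuinely used. The only quantitative input that argument needs about $x^*$ beyond these structural assumptions is finiteness of the return-time generating function, and this follows from our hypotheses: writing $\tau^+=\min\{t\ge1\colon\sX_t=x^*\}$ and using $\tau^+=1+\tau^*\circ\theta_1$,
\begin{equation*}
\E_{x^*}\!\left[\lambda^{-\tau^+}\right]=\lambda^{-1}\,\E_{x^*}\!\left[\E_{\sX_1}[\lambda^{-\tau^*}]\right]\le\lambda^{-1}\,\E_{x^*}[V(\sX_1)]=\lambda^{-1}(\bP V)(x^*)<\infty.
\end{equation*}
One then relates the return-probability excess at $x^*$ to the law of $\tau^+$ through the renewal identity and uses reversibility together with the non-negative spectrum --- via the $L^2(\pi)$-contraction of $\bP$ and the resulting control that lets one pass between $\bP^t$ and $\bP^{2t}$ --- to convert this into the stated total-variation bound; on a general state space, where $x^*$ may be a $\pi$-null point, the single-point renewal structure is replaced by its abstract analogue from the reference. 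The net effect is that Lemma~\ref{lm:drift0} applied on $\cX\setminus\{x^*\}$ is \emph{not} needed: only its conclusion $\E_x[\lambda^{-\tau^*}]\le V(x)$, which we assume directly, and the single-point finiteness $(\bP V)(x^*)<\infty$.

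I expect the main obstacle to be step (a) on a general, non-discrete state space: the hitting and return times of a single point $x^*$ are delicate, $\chi^2$-type comparisons with $\pi$ may diverge when $\pi(\{x^*\})=0$, and the renewal argument must be run in a form that does not presume $\pi(\{x^*\})>0$. The secondary point requiring care is to verify, by inspecting the proof in \citeps[Theorem~4.5]{jerison2016drift}, that the full drift condition there enters only through the two consequences isolated above.
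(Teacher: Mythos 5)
Your proposal takes essentially the same route as the paper: the paper's own proof of this theorem is simply a deferral to Jerison's thesis (Theorem~4.5 / Chapter~4.6), observing—exactly as you do—that the full single-element drift condition there is used only through the two consequences $\E_x[\lambda^{-\tau^*}]\leq V(x)$ and $(\bP V)(x^*)<\infty$, with reversibility and the non-negative spectrum carrying the real work at $x^*$; your explicit hitting-time decomposition for general $x$ and the check $\E_{x^*}[\lambda^{-\tau^+}]\leq \lambda^{-1}(\bP V)(x^*)$ are correct and consistent with that reduction. The only minor inaccuracy is your sketch of Jerison's mechanism for the bound at $x^*$: per the paper's accompanying remark it proceeds via a coupling obtained from an intertwining (duality) argument and a stochastic-monotonicity result, giving $\TV{\bP^t(x^*,\cdot)-\pi}\leq \P_\pi(\tau^*>t)$, rather than a renewal/$L^2$-contraction computation—but since you, like the paper, defer that step to the reference, this does not affect correctness.
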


\begin{proof}
See~\citets[Chapter 4.6]{jerison2016drift} for the proof. 
\end{proof}
 
\begin{remark}\label{rmk:consist}
As shown in~\citets[Chapter 4.6]{jerison2016drift}, the assumptions of Theorem~\ref{th:j2} imply that $\pi$ is unique and $\pi(x^*) \geq 1 - \lambda$. 
For high-dimensional model selection problems where $x^*$ is the ``best model'',  this yields the rate of strong model selection consistency. 
\end{remark}

\begin{remark}\label{rmk:jproof}
In Jerison's proof of Theorem~\ref{th:j2}, a critical intermediate step is to  show that $\TV{ \bP^t(x^*, \cdot) - \pi } \leq \P_\pi ( \tau^* > t)$.  
Let $\sX_t$, $\tX_t$ be two Markov chains with transition kernel $\bP$, $\sX_0 = x^*$ and $\tX_0 \sim \pi$. 
By the famous coupling inequality~\citeps{pitman1976coupling,lindvall2002lectures},   we have $\TV{ \bP^t(x^*, \cdot )  - \pi} \leq \P( T > t) $ for $T = \min\{t \geq 0\colon \, \sX_t = \tX_t = x^* \}$. So it only remains to couple $\sX_t, \tX_t$  in such a way that $\tX_t = x^*$ implies $\sX_t = x^*$. 
 \citets{jerison2016drift} finds this coupling (though not explicitly) by using a duality technique, known as ``intertwining of Markov chains''~\citeps{yor1988intertwinings, diaconis1990strong}, and a monotonicity result  of~\citets{lund2006monotonicity}. 
The latter requires that $\bP$ be reversible and have non-negative spectrum. 
\end{remark}

\section{Proof of Theorem~\ref{th:drift}}\label{sec:proof}

The outline of the proof of Theorem~\ref{th:drift} is as follows.
Let $\tau^* = \min\{t \geq 0 \colon \sX_t = x^*\}$ denote the hitting time of the state $x^*$. 
By Theorem~\ref{th:j2} in the supplement, all we need is to bound the generating function for $\tau^*$, $\E_x[ \alpha^{- \tau^*} ]$ for $\alpha \in (0, 1)$.
For our problem, directly bounding the generating function seems difficult.  So we first  find a tail bound instead. 
To this end, we split the path of $(\sX_t)$ into disjoint ``excursions'' in $\sA$ and $\sA^c$ (the length of excursion in $\sA^c$ may be zero). This splitting scheme is the most important step of our proof (see Section~\ref{sec:proof.split}). 
For each excursion in $\sA$, there is some positive probability that the chain can hit $x^*$, and then we can use a union bound to handle the tail probability of $\tau^*$ in the same way as in Theorem 1 of~\citet{rosenthal1995minorization} (see Sections~\ref{sec:proof.tail} and~\ref{sec:proof.prob}). 
Finally, by carefully tuning the parameters in the tail bound for $\tau^*$, we are able to compute its generating function (see Section~\ref{sec:proof.gf}). 

\begin{remark}\label{rmk:proof.two.stage}
\citet[Theorem 2.1]{roberts1999bounds}  give a bound on the generating function for the regeneration times in the drift-and-minorization setting. For that problem whether the chain regenerates depends on the outcome of an independent coin flip, and thus it is possible to condition on the number of coin flips needed to regenerate. 
But in our setting, we cannot bound the generating function of $\tau^*$ by conditioning on the number of excursions in $\sA$ needed for the chain to hit $x^*$, since such conditioning distorts the distribution of $(\sX_t)$.    
\end{remark}

\subsection{Path splitting for $(\sX_t)$} \label{sec:proof.split}
We first find a decomposition of $\bP$. 
Define a transition kernel $\bQ$   by 
\begin{equation*}\label{eq:def.Q}
\bQ(x,  \sC) =  \frac{ \bP(x,  \sC \cap \sA)   }{  \bP(x, \sA) }, \quad \forall \, x \in \sA, \, \sC \in \cE. 
\end{equation*}
The case $x \notin \sA$ is irrelevant to our proof, and one can 
simply let $\bQ(x, \cdot ) =\bP(x, \cdot)$  if $x \notin \sA$.
For $x \in \sA$, the distribution $\bQ(x, \cdot)$ is just $\bP(x, \cdot)$ conditioned on the chain staying in $\sA$. 
Further, condition~\eqref{d5}  implies that $(1 - q) \bQ(x,  \cdot )  \leq \bP(x, \cdot )$.
Hence, there always exists a ``complementary'' transition kernel $\bR$  such that 
\begin{equation}\label{eq:def.cP}
\bP(x,  \cdot ) = q \bR(x, \cdot ) +  (1 - q) \bQ(x, \cdot),  \quad \forall \,  x  \in \cX. 
\end{equation}

Now we re-construct the Markov chain $(\sX_t)_{t \in \bbN}$. 
First, we generate a sequence of i.i.d. Bernoulli random variables,  $(Z_0, Z_1,   \dots)$, such that $Z_i$ is equal to $1$ with probability $q$.  
Starting with $\sX_0 = x \in \cX$, we update the chain as follows. 
\begin{align*}
 \text{If } \sX_t \in \sA^c,    \quad & \text{ generate } \sX_{t + 1} \sim \bP(\sX_t, \cdot).  \\
 \text{If } \sX_t \in \sA, \,  Z_{t + 1}  = 0,    \quad & \text{ generate } \sX_{t + 1} \sim \bQ(\sX_t, \cdot). \\
 \text{If } \sX_t \in \sA, \, Z_{t + 1}  = 1,   \quad & \text{ generate } \sX_{t + 1} \sim \bR(\sX_t, \cdot). 
\end{align*}
It follows from~\eqref{eq:def.cP} that marginally, $(\sX_t)_{t \in \bbN}$ is a Markov chain with transition kernel $\bP$, and we will use $\P_x$ to denote the corresponding probability measure. 
Let $(\cF_t)_{t \in \bbN} = \sigma(\sX_0, \dots, \sX_t, Z_0, \dots, Z_t)$ denote the filtration generated by $(\sX_t, Z_t)_{t \in \bbN}$. Set $\omega_0 = 0$ and then  define the following hitting times with respect to $(\cF_t)$ recursively:
\begin{align*}
\sigma_k  =  \min\{ t \geq  \omega_{k - 1} \colon  \sX_t \in \sA \}, \quad \quad  \omega_k   =  \min\{ t > \sigma_k \colon  Z_t = 1 \}, 
\quad \quad k = 1, 2, \dots. 
\end{align*}
Observe that $\sX_t \in \sA$ if  $\sigma_k \leq t \leq \omega_k - 1$. 
For $k \geq 1$, $\omega_k$ marks the $k$-th time we update the chain using $\bR$, and then we return to the set $\sA$ at time $\sigma_{k + 1}$.  
Note that $\sigma_{k + 1} = \omega_k$ if $\sX_{\omega_k} \in \sA$ (i.e., the ``return'' happens immediately.)
For $k \geq 1$, let $S_{ k -1} =   \sigma_k  - \omega_{k - 1} \geq 0$ and $N_k =  \omega_k - \sigma_k \geq 1$. 
We have 
\begin{equation}\label{eq:s0}
\omega_k  = S_0 +  ( N_1 + S_1 ) + \cdots  + ( N_{k - 1} + S_{k - 1} ) + N_k. 
\end{equation}
  $N_k$ (resp. $S_{k - 1}$) can be seen as the length of  the $k$-th stay of $(\sX_t)$ in $\sA$ (resp. $\sA^c$). 
See Figure~\ref{fig.path} for a graphical illustration of our path splitting scheme. 

\begin{figure}[t!]
\begin{center}
\includegraphics[width=0.95\linewidth]{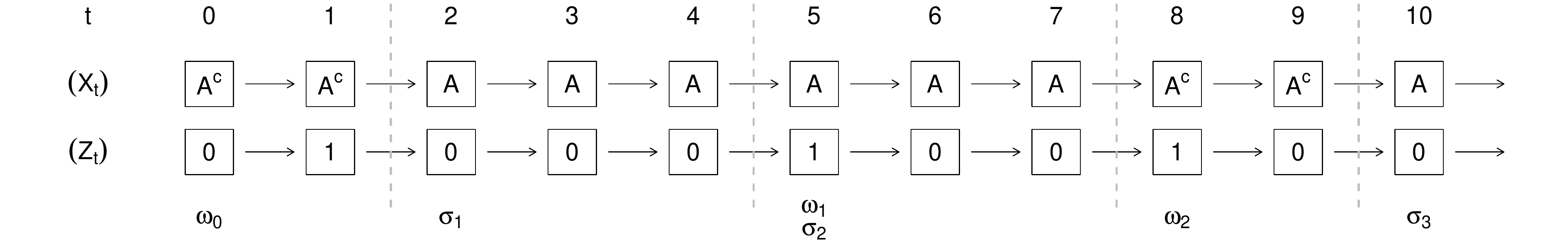} \\
\caption{An example of the evolution of $(X_t, Z_t)$. For $(X_t)$, we only indicate whether it is in $\sA$ or not. 
For $t = 5$ and $t = 8$, we generate $X_t$ from $\bR(X_{t - 1}, \cdot)$. 
Since $X_5 \in \sA$, $\omega_1 = \sigma_2 = 5$. 
For this example, we have $S_0 = 2$, $N_1 = 3$, $S_1 = 0$, $N_2 = 3$ and $S_2 = 2$. }\label{fig.path}
\end{center}
\end{figure}  
  
\subsection{Tail bound for $\tau^* $}  \label{sec:proof.tail}

Let $\sB_k =  \{ \sX_t \neq x^*,  \; \text{ for }   \sigma_k \leq t  \leq \omega_k - 1 \}$ be the event that $(\sX_t)$ does not hit $x^*$ during its $k$-th  stay in $\sA$. Then, for every fixed $t$, 
\begin{align*}
\P_x(    \tau^*  > t \geq \omega_k  ) \leq  \P_x(    \tau^*   \geq \omega_k  ) =  \P_x(  \sB_1 \cap   \cdots \cap \sB_k   ),
\end{align*}
since by $\omega_k$, $(\sX_t)$ has finished  $k$ ``excursions'' in $\sA$. 
As in~\citet[Theorem 1]{rosenthal1995minorization},  we apply the union bound to get 
\begin{equation}\label{eq:th1.union}
\begin{aligned}
\P_x (  \tau^* > t )  \leq \;& \P_x ( \omega_j > t) 
+ \P_x (  \tau^* > t , \omega_j \leq t)    \\
\leq \;& \P_x (    \omega_j  > t )  +   \P_x\left(  \sB_1 \cap   \cdots \cap \sB_j   \right), 
\end{aligned}
\end{equation}
which holds for any positive integer $j$. Therefore, it suffices to bound the two terms $\P_x (    \omega_j  > t )$ and $ \P_x\left(  \sB_1 \cap   \cdots \cap \sB_j   \right)$. The upper bound for $ \P_x\left(  \sB_1 \cap   \cdots \cap \sB_j   \right)$ is deferred to the next subsection. 
 
The key to bounding the tail probability $\P_x (    \omega_j  > t )$ is to show that  $S_0, N_1, S_1, \dots$ all have geometrically decreasing tails, and thus so does $\omega_j$. 
First, note that  $N_k =  \min \{  i  \geq 1 \colon Z_{\sigma_k + i} = 1 \}$, which is just a geometric random variable since $(Z_t)$ is an i.i.d. sequence. 
Therefore, $\E[ u^{ N_k  } ]$ exists for any $u < (1 - q)^{-1}$. 
We choose 
\begin{equation*}\label{eq:def.u}
u = \frac{1}{ 1 - q/2}, 
\end{equation*}
which is less than $\min\{ \lambda_1^{-1}, \lambda_2^{-1} \}$  by condition~\eqref{d5}. 
It is also evident by construction that $N_k$ is  independent of $\cF_{\sigma_k }$, which yields
\begin{equation}\label{eq:th1.N}
  \E_x [ u^{ N_k  } \mid  \cF_{\sigma_k} ]  =  \frac{ u q}{1  - u (1  - q)} = 2, \quad \text{a.s.} 
\end{equation}

Next, consider the random variables $\{S_k \colon k = 0, 1, \dots \}$. We have 
\begin{equation}\label{eq:th1.S0}
\E_x[  u^{S_0}  ] \leq  \E_x[  \lambda_1^{ - S_0}  ] \leq V_1(x), 
\end{equation}
by Lemma~\ref{lm:drift0} and drift condition~\eqref{d1}. 
Similarly,  for $S_k$ with $k \geq 1$,  
\begin{equation}\label{eq:th1.Sk}
\begin{aligned}
\E_x[  u^{S_k} \mid \cF_{\omega_k - 1} ]    \leq \;&  \E_x[ V_1(\sX_{\omega_k} )  \mid \cF_{\omega_k - 1} ]  \\ 
= \;&  \E_x[ V_1(\sX_{\omega_k} )  \mid \cF_{\omega_k - 1}, \sX_{\omega_k} \in \sA ] \P_x(  \sX_{\omega_k} \in \sA \mid   \cF_{\omega_k - 1}) \\
 & + \E_x[ V_1(\sX_{\omega_k} )  \mid \cF_{\omega_k - 1}, \sX_{\omega_k} \in \sA^c ] \P_x(  \sX_{\omega_k} \in \sA^c \mid   \cF_{\omega_k - 1}) \\ 
\leq \;&  \bR(  \sX_{\omega_k - 1},  \sA) + \frac{M}{2} \bR(  \sX_{\omega_k - 1},  \sA^c) 
\leq M / 2, \quad \text{a.s.} 
\end{aligned}
\end{equation}
The second last inequality follows from  condition~\eqref{d3}  and the observation that  $\sX_{\omega_k}$ is generated from $\bR (\sX_{\omega_k - 1}, \cdot)$ and $\sX_{\omega_k - 1} \in \sA$. 
Using~\eqref{eq:s0},~\eqref{eq:th1.N},~\eqref{eq:th1.S0},~\eqref{eq:th1.Sk} and conditioning on  $\cF_{\omega_j - 1}, \cF_{\sigma_j}, \dots, \cF_{\sigma_1}$ recursively (which is allowed since $\omega_k - \sigma_k \geq 1$), we find that 
\begin{equation}\label{eq:th1.gf}
\E_x[  u^{ \omega_j  }  ]  = \E_x[ u^{ S_0 + \cdots + N_j } ]
\leq  2 V_1(x)     M^{j - 1}. 
\end{equation}
The tail probability $ \P_x (    \omega_j  > t ) $ then can be bounded by Markov's inequality.

\subsection{Upper bound for $\P_x( \sB_1 \cap \cdots \cap \sB_j)$}  \label{sec:proof.prob}
Next, we show that  $\P_x( \sB_k \mid \sB_1, \dots, \sB_{k - 1} )\le \rho$ for some $\rho<1$, which implies the upper bound $\P_x (\sB_1 \cap \cdots \cap \sB_j ) \leq  \rho^j$. First, note that $(\sX_t, Z_t)$ forms a bivariate Markov chain and thus 
\begin{align*}
\P_x \left( \sB_k  \mid  \sX_{\sigma_k} =  y, \, Z_{ \sigma_k } =  z   \right)  =  \P_y (   \tau^* \geq  \omega_1 \mid Z_0 = z   ) =  \P_y (   \tau^* \geq  \omega_1),  \quad \text{a.s.} 
\end{align*}
where $\omega_1 = \min\{t \geq 1 \colon Z_t = 1 \}$ because $y = \sX_{\sigma_k} \in \sA$. 
Moreover, $\omega_1 = N_1$ is a geometric random variable independent of $\cF_0$. Conditioning on $N_1$, we find 
\begin{align}
\P_y (   \tau^* \geq  \omega_1) = \P_y (    \tau^* \geq    N_1  ) \nonumber  
 = \;&    \sum\limits_{t = 1}^\infty \P_y ( \tau^* \geq  t \mid  N_1  = t ) (1 - q)^{t - 1} q \nonumber \\ 
 = \;&   \sum\limits_{t = 1}^\infty \P_y (  \tau^* \geq  t \mid  Z_0 = \cdots = Z_{t - 1} = 0 ) (1 - q)^{t - 1} q,  \label{eq:py}
\end{align}
where in the last step we have used that $Z_t$ is independent of $(\sX_0, \dots, \sX_{t - 1})$  and thus the event $\{\tau^* \geq  t\}$. 
On the event $\{ Z_0 = \cdots = Z_{t - 1} = 0 \}$, $(\sX_0, \dots, \sX_{t -1})$ is a Markov chain with transition kernel $\bQ$.  
For $x \in  \sA \setminus \{x^*\}$,  let $q_x = \P(x, \sA^c)$ and  write   
\begin{align*}
(\bP V_2) (x)   =\;&  (1 - q_x)  (\bQ V_2)(x)  +  q_x \,  \E_x[ V_2(\sX_1) \mid \sX_1 \in \sA^c] . 
\end{align*}
Since $(\bP V_2) (x) \leq \lambda_2 V_2(x)$ for some $\lambda_2 < 1$,  we have $(\bQ V_2)(x) \leq  \lambda_2  V_2(x)$  by condition~\eqref{d4}. 
This drift condition enables us to apply Lemma~\ref{lm:drift0} and obtain from~\eqref{eq:py} that 
\begin{equation*}
   \P_y (    \tau^* \geq    N_1  )   \leq    \sum_{t = 1}^\infty   \lambda_2^t V_2(y)   (1 - q)^{t - 1} q    
=   \frac{ V_2(y)  \lambda_2  q }{1 - \lambda_2  (1 - q)}, 
\end{equation*}
for $y \in \sA$. 
Since $\lambda_2, q \in (0, 1)$ and $ K \geq  \sup_{y \in \sA} V_2(y)$, 
\begin{equation}\label{eq:def.rho}
\sup_{y \in \sA}   \P_y (    \tau^* \geq    N_1  )  \leq  \frac{ q  K   }{1 - \lambda_2 } \coloneqq \rho. 
\end{equation}
Note that $\rho  < 1$ by condition~\eqref{d5}. 
By conditioning on $\cF_{ \sigma_k}$, we find that $\P_x( \sB_k \mid \sB_1, \dots, \sB_{k - 1} )$ is also bounded by $\rho$, and thus $\P_x (\sB_1 \cap \cdots \cap \sB_j ) \leq  \rho^j$. 
 
\subsection{Generating function for  $\tau^*$} \label{sec:proof.gf}
We now combine the previous results to bound the generating function of $\tau^*$. Using the upper bound given in~\eqref{eq:th1.union} and  $\P_x (\sB_1 \cap \cdots \cap \sB_j ) \leq  \rho^j$,  we have 
\begin{align*}
\E_x[ \alpha^{-\tau^*}  ]  = \;&  1 + (\alpha^{-1} - 1) \sum_{k=0}^\infty \P( \tau^* > k ) \alpha^{-k} \\
\leq \;&  1 + (\alpha^{-1} - 1) \sum_{k=0}^\infty   \alpha^{-k} \left[ \rho^{j_k} + \P_x ( \omega_{j_k} > k   )  \right],
\end{align*}
for any $\alpha \in (0, 1)$, where $\rho$ is as given in~\eqref{eq:def.rho}. 
We choose $j_k =\lfloor rk + 1\rfloor \geq rk$ for some $r > 0$.
It then follows from~\eqref{eq:th1.gf} and~\eqref{eq:def.rho} that 
\begin{align*}
\E_x[ \alpha^{- \tau^*}  ]  \leq  1 + (\alpha^{-1} - 1) \sum_{k=0}^\infty  \alpha^{-k} \left[ \rho^{r k} + \frac{2 V_1(x)}{  M} \frac{    M^{r k} }{   u^k }  \right]. 
\end{align*}
The right-hand side  might diverge  if either $\rho^{r}$ or $M^{r}/u$ is too large. 
Hence, to obtain the optimal convergence rate, we set 
\begin{align*}
r = \frac{ \log u}{ \log (M / \rho) }, \quad \text{ which yields }
   \rho^{r} = \frac{ M^r }{u} < 1. 
\end{align*}
Since $M/2 \geq 1$, $0 < \rho < 1$ and $1 < u < 2$, we always have $r  \in (0, 1)$.

Finally, we choose $\alpha = (1 + \rho^r ) / 2 < 1$ and find that 
\begin{align*}
\E_x[ \alpha^{-\tau^*}  ]  \leq  1 +   \left(1 + \frac{ 2V_1(x) }{   M } \right) \frac{1 - \alpha}{\alpha - \rho^r} = 2 + \frac{2V_1(x)}{M}. 
\end{align*} 
The proof is then completed by applying Theorem~\ref{th:j2}.

\section{Justification for Condition~\ref{cond:ywj}}\label{sec:cond.ywj}
In this section, we review the high-dimensional assumptions used in~\citet{yang2016computational} to prove Condition~\ref{cond:ywj}. 

\begin{theorem}[\citets{yang2016computational}]\label{th:ywj}
Consider the Bayesian variable selection problem described in Section~\ref{sec:selection}. Suppose the true error variance $\ev = 1$ and the following conditions hold for some finite constants $\CC_0 \geq 0$,  $\CC_1 > 0$,  $\zeta \in (0, 1]$  such that $\CC_1 \zeta \geq 4$. 
\begin{enumerate}[(A)]
\item $\norm{  \X \beta^*}_2^2 \leq g  \log p$,  and $\norm{ \X_{u^* } \beta^*_{u^*}  }_2^2 \leq \CC_0 \log p$ where $u^* = [p] \setminus \gamma^*$.  \label{y1}
\item $\X_j^\top \X_j = n$ for each $j$, and 
$$\min_{\gamma \in \cM(s_0)}  \Lambda_{\rm{min}} (\X_\gamma^\top \X_\gamma) \geq n \zeta,$$ where $\Lambda_{\rm{min}}$ denotes the smallest eigenvalue. \label{y2}
\item  For  $z \sim \N(0, I_n)$, \label{y3}
$$ \E\left[ \max_{\gamma \in \cM(s_0)} \max_{k \notin \gamma}  |   \X_k^\top \OPJ_\gamma  z  | \,  \right] \leq  \frac{1}{2} \sqrt{  \CC_1 \zeta  n  \log p   }.$$ 
\item $\kappa_0 \geq 2$, $\kappa_1 \geq 1 / 2$ and $\kappa = \kappa_0 + \kappa_1 \geq 4 (\CC_0 + \CC_1) + 2$. \label{y4}
\item Let $\Psi(\X) =  \max_{\gamma \in \cM(s_0)} \norm{  (\X_\gamma^\top \X_\gamma)^{-1} \X_\gamma^\top \X_{\gamma^* \setminus \gamma} }_{\rm{op}}^2.$ Then,  \label{y5}
\begin{align*}
\max\left\{ 1, \,   (2 \zeta^{-2} \Psi(\X) + 1 ) s^* \right\} \leq s_0 \leq  \frac{n}{32 \log p} - \frac{\CC_0}{4}. 
\end{align*} 
\item The threshold $\Cbeta$ given in~\eqref{eq:beta.min} satisfies \label{y6}
$$\Cbeta^2 \geq  \frac{128 ( \kappa + \CC_0 + \CC_1)  \log p }{\zeta^2  n }.$$  
\end{enumerate}
Then, with probability $1 -  O(p^{- a})$ for some universal constant $a > 0$, Condition~\ref{cond:ywj} holds for $c_0 = 2$ and $c_1 = 4$. 
\end{theorem}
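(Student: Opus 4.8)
The plan is to follow the proof of \citet[Lemma~4]{yang2016computational}, carrying the constants through so that the conclusion can be read off for the stated values $c_0=2$, $c_1=4$ (and, after inflating the constants in \ref{y1}--\ref{y6}, for any prescribed finite $c_0,c_1$). Write $y=\ty+\X_{u^*}\beta^*_{u^*}+z$, where $\ty=\X_{\gamma^*}\beta^*_{\gamma^*}$ is the ``signal'', $u^*=[p]\setminus\gamma^*$, and $z\sim\N(0,I_n)$ as assumed. The first step is to reduce everything to a deterministic statement by restricting to a good event $\cG$ on which (i) $\max_{\gamma\in\cM(s_0)}\max_{k\notin\gamma}|\X_k^\top\OPJ_\gamma z|\le\sqrt{\CC_1\zeta n\log p}$, obtained from the expectation bound \ref{y3} and Gaussian concentration applied to $z\mapsto\max_{\gamma,k}|\X_k^\top\OPJ_\gamma z|$, whose Lipschitz constant is at most $\max_{\gamma,k}(\X_k^\top\OPJ_\gamma\X_k)^{1/2}\le\sqrt n$ by \ref{y2}; and (ii) $\tfrac14 n\le\norm{\OPJ_\gamma z}_2^2$ for every $\gamma\in\cM(s_0)$ and $\norm{z}_2^2\le2n$, obtained from $\chi^2$ tail bounds and a union bound over the at most $p^{\,s_0+1}$ models, which costs only $e^{(s_0+1)\log p}$ and is absorbed using $s_0\log p\le n/32$ from \ref{y5}. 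This yields $\P(\cG^c)=O(p^{-a})$, and everything below is argued on $\cG$. The overfitted bound (Condition~\cond{c1}) is then the easy case: for overfitted $\gamma\supseteq\gamma^*$ with $|\gamma|<s_0$ and $j\notin\gamma$ (the case $|\gamma|=s_0$ being trivial since $\PP(\gamma\cup\{j\})=0$), \eqref{eq:ppr} gives $\PR(\gamma,\gamma\cup\{j\})=p^{-\kappa}(1-R)^{-n/2}$ with $R=[(\X_j^\top\OPJ_\gamma y)^2/(\X_j^\top\OPJ_\gamma\X_j)]/(g^{-1}y^\top y+y^\top\OPJ_\gamma y)$; because $\OPJ_\gamma\ty=0$, the numerator $(\X_j^\top\OPJ_\gamma y)^2$ is $O((\CC_0+\CC_1)n\log p)$ on $\cG$ by \ref{y1} and Cauchy--Schwarz, while $y^\top\OPJ_\gamma y\gtrsim n$ (using \ref{y5}) and $\X_j^\top\OPJ_\gamma\X_j\ge n\zeta$ by \ref{y2}, so $R=O((\CC_0+\CC_1)\log p/(\zeta n))$, $(1-R)^{-n/2}\le e^{nR}=p^{O((\CC_0+\CC_1)/\zeta)}$, and $\PR(\gamma,\gamma\cup\{j\})\le p^{-\kappa+O((\CC_0+\CC_1)/\zeta)}\le p^{-c_0}$ once $\kappa$ exceeds an explicit multiple of $(\CC_0+\CC_1)/\zeta$, which is the content of \ref{y4}.

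The crux is the underfitted case. For an addition, rewrite \eqref{eq:ppr} as $\PR(\gamma,\gamma\cup\{j\})=p^{-\kappa}(D_\gamma/D_{\gamma\cup\{j\}})^{n/2}$ with $D_\gamma=g^{-1}y^\top y+y^\top\OPJ_\gamma y$, so it suffices to exhibit $j\in\Ga$ that removes a constant fraction of $D_\gamma$. The key deterministic input is a lower bound on the total explanatory power of the missing influential covariates, $y^\top(\PJ_{\gamma\cup\gamma^*}-\PJ_\gamma)y=y^\top\OPJ_\gamma y-y^\top\OPJ_{\gamma\cup\gamma^*}y$: since $\gamma\cup\gamma^*$ is overfitted, $y^\top\OPJ_{\gamma\cup\gamma^*}y=O(\CC_0\log p+n)$ on $\cG$, whereas $y^\top\OPJ_\gamma y\ge\norm{\OPJ_\gamma\X_{\Ga}\beta^*_{\Ga}}_2^2-(\text{cross and noise terms})$, and \ref{y2} together with the collinearity quantity $\Psi(\X)$ from \ref{y5} forces $\norm{\OPJ_\gamma\X_{\Ga}\beta^*_{\Ga}}_2^2\gtrsim\zeta^2 n\,\norm{\beta^*_{\Ga}}_2^2\ge\zeta^2 n\,|\Ga|\,\Cbeta^2$, which by \ref{y6} dominates $n+\log p$ by a factor of order $(\kappa+\CC_0+\CC_1)/\zeta^2$. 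This is exactly the forward-backward stepwise selection phenomenon of \citet{an2008stepwise}: an underfitted model must leave a large amount of signal unexplained. It remains to pass from this joint bound to a single coordinate, using \ref{y2} and $\Psi(\X)$ again to show that $\sum_{j\in\Ga}(\X_j^\top\OPJ_\gamma y)^2/(\X_j^\top\OPJ_\gamma\X_j)$ captures a fraction of $y^\top(\PJ_{\gamma\cup\gamma^*}-\PJ_\gamma)y$ depending only on $\zeta$, so the best $j\in\Ga$ captures at least a $1/|\Ga|$ share up to such a factor; since $|\Ga|\le s^*\le s_0$ and the slack in \ref{y6} far exceeds this loss, one obtains $D_\gamma/D_{\gamma\cup\{j\}}\ge\exp(2(\kappa+c_1)\log p/n)$, i.e.\ $\PR(\gamma,\gamma\cup\{j\})\ge p^{c_1}$, giving Condition~\cond{c2}. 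For Condition~\cond{c3}, where $|\gamma|=s_0$ and no addition is admissible, I would run the same analysis on the composite move through $\PR(\gamma,(\gamma\cup\{j\})\setminus\{k\})=\PR(\gamma,\gamma\cup\{j\})\,\PR(\gamma\cup\{j\},(\gamma\cup\{j\})\setminus\{k\})$: since $s_0>s^*$ by \ref{y5}, $\gamma$ contains some $k\in\Gb$; the first factor is super-polynomially large exactly as above, and the second loses at most $p^{-O((\CC_0+\CC_1)/\zeta)}$ by the overfitted estimate applied to the least useful such $k$, so the product still exceeds $p^{c_1}$.

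I expect the delicate point, and the main obstacle, to be Condition~\cond{c3}: specifically the claim that after the swap the deleted non-influential covariate $k$ costs little \emph{even when $\gamma\cup\{j\}$ is itself still underfitted}, which requires bounding $y^\top\OPJ_{(\gamma\cup\{j\})\setminus\{k\}}y$ from above, i.e.\ controlling how much missing signal a single non-influential covariate can be standing in for — and this is precisely where the explicit margins in \ref{y5} (via $\Psi(\X)$ and $s_0\ge(2\zeta^{-2}\Psi(\X)+1)s^*$) and \ref{y6} are genuinely used. Everything else is constant tracking: all the $O(\cdot)$ terms above are explicit linear-in-$(\log p/n)$ quantities with coefficients depending only on $\CC_0,\CC_1,\zeta$, so enlarging $\kappa$ and $\Cbeta$ as permitted by \ref{y4} and \ref{y6} delivers any prescribed finite $c_0,c_1$, and with the specific numerical constants one recovers $c_0=2$, $c_1=4$ on the event $\cG$ of probability $1-O(p^{-a})$. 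Up to this bookkeeping, the argument is that of \citet[proof of Lemma~4]{yang2016computational}.
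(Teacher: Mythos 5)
Your proposal is correct in substance and takes essentially the same route as the paper, whose ``proof'' simply defers to \citet[Lemma 4]{yang2016computational} (noting that passing from $c_1=3$ to $c_1=4$ requires no change) and sketches exactly the steps you outline: a good event built from condition (C) plus concentration, the overfitted bound via $\kappa\gtrsim \CC_0+\CC_1$, and the underfitted/swap bounds via the stepwise-selection lower bound on unexplained signal (YWJ's Lemma 8), where the margins in (E)--(F) through $\Psi(\X)$, $\zeta$ and $\Cbeta$ are consumed. Your identification of the saturated-model swap case as the delicate point, and the observation that enlarging $\kappa$ and $\Cbeta$ yields arbitrary fixed $c_0,c_1$, match the paper's own remarks.
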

\begin{proof}
See~\citets[Lemma 4]{yang2016computational}.  Though the original result was stated for $c_0 = 2$ and $c_1 = 3$, replacing $c_1 = 3$ with $c_1 = 4$ does not require any change of their proof. 
\end{proof}

\begin{remark}
As explained in~\citets{yang2016computational}, the assumptions made in Theorem~\ref{th:ywj} are mild. 
In particular, Condition~\eqref{y3} holds for $\CC_1 = O(s_0 / \zeta)$. 
A similar result for the empirical normal-inverse-gamma prior is proved in~\citets[Supplement D]{zhou2021complexity}. 
\end{remark}
 
We note that Theorem~\ref{th:ywj} holds for any other fixed values of $c_0, c_1$ under essentially the same assumptions. To explain the reason, we briefly describe below the main idea of the proof of~\citets{yang2016computational}.

\begin{proof}[Sketch of the proof for Theorem~\ref{th:ywj}]
To simplify the discussion, we assume the constant $\zeta$ in the restricted eigenvalue condition, i.e., Condition~\eqref{y2}, is a universal constant and $\CC_0 = O(1)$. 
For two positive sequences $a_n, b_n$, we write $a_n = \Omega(b_n)$ if $b_n = O(a_n)$, and $a_n = \Theta (b_n)$ if $a_n = O(b_n)$ and $b_n  = O(a_n)$.
Using $s_0 \log p = O(n)$ from Condition~\eqref{y5}  
and concentration inequalities, one can show that models in $\cM(s_0)$ cannot ``overfit'', by which we mean that $y^\top \OPJ_\gamma y = \Omega(n )$.  
Then Condition~\eqref{y1} guarantees that the term $g^{-1} y^\top y = O(\log p)$ in~\eqref{eq:ppr} is negligible, and we  can  write
\begin{align*}
\PR(\gamma, \gamma')   = p^{  \kappa (|\gamma| - |\gamma'| ) } \left\{ 1 +  \frac{y^\top (\PJ_{\gamma'} - \PJ_{\gamma})y}{ y^\top \OPJ_{\gamma'} y + O(\log p)}  \right\}^{ n/2}. 
\end{align*}

Consider $\gamma' = \gamma \cup \{j\}$ for some overfitted $\gamma$ and $j \notin \gamma$. 
Since $\gamma^* \subseteq \gamma$, Condition~\eqref{y1} implies that $ y^\top \OPJ_{\gamma'} y = \Theta(n)$, and  Condition~\eqref{y3} yields that $y^\top (\PJ_{\gamma'} - \PJ_{\gamma})y = O( \CC_1 \log p)$. Hence, 
\begin{align*}
\PR(\gamma, \gamma')     = p^{-\kappa} \left\{ 1 + \frac{ O(\CC_1 \log p) }{  \Theta(n)}  \right\}^{ n/2}. 
\end{align*}
To prove $B(\gamma, \gamma') \leq p^{-c_0}$ for some $c_0 > 0$, we only need $\kappa \geq a_1 \CC_1 + a_2$ for some sufficiently large constants $a_1$ and $a_2$  (and then apply the inequality $1 + x \leq e^x$). This is where Condition~\eqref{y4} is needed. 

Next, consider $\gamma' = \gamma \cup \{j\}$ for some underfitted $\gamma$ and $j \in \gamma^* \setminus \gamma$.  By~\citets[Lemma 8]{yang2016computational}, $j$ can be chosen such that   $y^\top (\PJ_{\gamma'} - \PJ_{\gamma})y  = \Omega( n \Cbeta^2)$.   
(Note that this may not be true for every $j \in \gamma^* \setminus \gamma$.)
Therefore, we can write 
\begin{align*}
\PR(\gamma', \gamma)     = p^{ \kappa} \left\{ 1 - \frac{ \Omega( n \Cbeta^2) }{ y^\top \OPJ_{\gamma} y  }  \right\}^{ n/2}. 
\end{align*}
Here one needs to consider two possible subcases. 
If $y^\top \OPJ_{\gamma'} y = \Theta (n)$, then to prove $B(\gamma', \gamma) \leq p^{ - c_1}$ for some $c_1 > 0$,  we just need $\Cbeta^2 \geq a_3  (\kappa + 1) \log p / n$ for some sufficiently large $a_3$, 
which is guaranteed by Condition~\eqref{y6}. 
If $y^\top \OPJ_{\gamma'} y$ has a larger order than $n$, we need a slightly different argument. 
By~\citets[Lemma 8]{yang2016computational}, we can pick $j$ such that $\gamma' = \gamma \cup \{j\}$ satisfies 
\begin{align*}
\frac{ y^\top (\PJ_{\gamma'} - \PJ_{\gamma})y }{\ty^\top \OPJ_{\gamma} \ty} = \Omega\left( \frac{1}{s^*} \right), 
\end{align*} 
where $\ty = \X_{\gamma^*} \beta^*_{\gamma^*}$ denotes the signal part of $y$. 
Then, $B(\gamma', \gamma) \leq p^{ - c_1}$ would hold if  $\kappa s^* \log p \leq  n / a_4$ for some sufficiently large $a_4$.  
\end{proof}

\section{More results for simulation studies}\label{supp:sim}

\subsection{Local posterior landscape in simulation study I} \label{supp:sim.local}
We first consider the simulation study conducted in Section~\ref{sec:sim1}, which we refer to as simulation study I (and we refer to that conducted in Section~\ref{sec:sim.ess} as simulation study II). 
Let $\true = \{1, \dots, 10\}$ denote the set of all covariates with nonzero regression coefficients in this simulation. 
For each $\gamma$, define 
\begin{align*}
    \fR_{\rm{a0}}(\gamma) =\;& \max_{j \notin  \gamma } \; \log_p \frac{\PP(\gamma \cup \{j\} )}{\PP(\gamma)}, \quad  \quad 
    \fR_{\rm{d0}}(\gamma) = \max_{ j \in \gamma } \;  \log_p \frac{\PP(\gamma \setminus \{j\} )}{\PP(\gamma)}, \nonumber \\  
    \fR_{\rm{a1}}(\gamma) =\;&  \min_{j \in \true \setminus \gamma } \log_p \frac{\PP(\gamma \cup \{j\})}{\PP(\gamma)}, \quad 
    \fR_{\rm{d1}}(\gamma) =  \min_{j \in \gamma \setminus \true} \log_p \frac{\PP(\gamma \setminus \{j\})}{\PP(\gamma)}.  \nonumber 
\end{align*} 
When SNR is sufficiently large,  Condition~\ref{cond:ywj} is very likely to hold with $\gamma^* = \true$, in which case Condition~\cond{c1} and Condition~\cond{c2} can be equivalently expressed as follows. 
\begin{itemize}
    \item  If $\gamma$ is overfitted and $\gamma \neq \gamma^*$, then both $\fR_{\rm{d0}}(\gamma)$ and $\fR_{\rm{d1}}(\gamma)$ should be large, since any non-influential covariate in $\gamma$ should be ``useless''.  
    \item If $\gamma$ is underfitted, then $\fR_{\rm{a0}}(\gamma)$ should be large, which means we are able to add some covariate to $\gamma$, but $\fR_{\rm{a1}}(\gamma)$ may be small since we may not be able to add every missing influential covariate due to the collinearity in the data; 
\end{itemize}
For the reason discussed in Section~\ref{sec:swap}, we do not consider swap moves.

In Figure~\ref{fig:local}, we provide the histograms of $\fR_{\rm{a0}}(\gamma), \fR_{\rm{a1}}(\gamma)$ for ``underfitted'' models and $\fR_{\rm{d0}}(\gamma), \fR_{\rm{d1}}(\gamma)$ for ``overfitted'' models (excluding $\true$) sampled in MCMC under the four settings considered in Table~\ref{table1} with SNR $=3$.  
In this plot, we classify $\gamma$ as ``overfitted'' or ``underfitted'' according to whether $\gamma$ contains $\true$ as a subset; that is, we assume $\true = \gamma^*$. Note that this assumption is not always true, since, by Table~\ref{table1}, in the second setting there are at least two replicates (out of $100$) where $\true$ is not the best model. But since this happens rarely, in this section we will slightly abuse the words ``underfitted'' and ``overfitted'' by using $\true$ as the best model.   
To explain how the data used in Figured~\ref{fig:local} is generated, observe that in each informed iteration, we need to evaluate $\PP$ for all neighboring states. Hence, we can find
$\fR_{\rm{a0}}(\gamma), \fR_{\rm{a1}}(\gamma), \fR_{\rm{d0}}(\gamma), \fR_{\rm{d1}}(\gamma)$ for each $\gamma$ sampled in informed MCMC algorithms without extra computation cost. 
Thus, for each simulated data set, we simply collect all unique models sampled by \LOT{}-1, \LOT{}-2 or \LIB{}-1, and then for each simulation setting, we merge the data for $\fR_{\rm{a0}}, \fR_{\rm{a1}}, \fR_{\rm{d0}}, \fR_{\rm{d1}}$ from all 100 replicates.

It should be noted that the distributions  shown in Figure~\ref{fig:local} could be highly ``biased'', since we only consider models that are sampled by any of the informed MH algorithms.  If $\fR_{\rm{a0}}(\gamma)$ or $\fR_{\rm{d0}}(\gamma)$ is very large for some model $\gamma$,  its posterior probability tends to be very small,  
and thus $\gamma$ is unlikely to be sampled. Hence, we may expect that the actual distribution of $\fR_{\rm{a0}}(\gamma)$ or $\fR_{\rm{d0}}(\gamma)$ on the whole state space has a larger mean than in Figure~\ref{fig:local}.  
Nevertheless, Figure~\ref{fig:local} provides useful insights into Condition~\ref{cond:ywj}, and we now explain why the histograms in Figure~\ref{fig:local} agree well with the theory. Write $y = \ty + \varepsilon$, where $\ty = \X_{\true} \beta^*_{\true}$ denotes the signal part and $\varepsilon$ denotes noise. 

Observe that in all four settings, for any overfitted model $\gamma \neq \true$, $\fR_{\rm{d0}}(\gamma)$ is usually very close to $3.5$, and $\fR_{\rm{d1}}(\gamma)$ is usually greater than $3$, lending support to Condition~\cond{c1}. 
To explain this, fix an arbitrary $j \in \gamma \setminus \true$. 
Clearly, $y^\top \OPJ_{ \gamma \setminus \{j\} } y = \varepsilon^\top \OPJ_{ \gamma \setminus \{j\} }  \varepsilon$, and since we draw $\varepsilon$ from $\N(0, I_n)$  independently of the design matrix, we obtain from~\eqref{eq:ppr} that  
\begin{align*}
    p^{ - \kappa} \leq \PR(\gamma \setminus \{j\}, \gamma) = p^{ - \kappa} 
    \left\{ 1 + \frac{ \chi_1^2 }{g^{-1} y^\top y + \chi^2_{n - |\gamma| - 1} } \right\}^{ n/2}
    \leq p^{-\kappa} \exp\left\{ \frac{n\chi_1^2}{2 \chi^2_{n - |\gamma| - 1}   } \right\}, 
\end{align*}
where $\chi_1^2, \chi^2_{n - |\gamma| - 1}$ denote two independent chi-squared random variables with degrees of freedom $1$ and $n - |\gamma| - 1$ respectively. 
Hence,
$$  \kappa \geq \log_p \PR(\gamma, \gamma \setminus \{j\}) \gtrsim   \kappa - \frac{ n \chi_1^2 }{ 2 \chi^2_{n - |\gamma| - 1}  \log p }.$$  
Assuming $n$ is sufficiently large so that $\chi^2_{n - |\gamma| - 1} / n \approx 1$, this calculation suggests that, when $p = 1,000$,   $\log_p \PR(\gamma, \gamma \setminus \{j\}) \in (\kappa - 0.5, \kappa)$ with probability $\geq 99\%$.  
In simulation study I, we set the ``effective sparsity parameter''  $\kappa = \kappa_0 + \kappa_1 = 3.5$, which explains what we observe in the left column of Figure~\ref{fig:local}. 

The right column of Figure~\ref{fig:local} shows that for any underfitted  $\gamma$, $\fR_{\rm{a0}}(\gamma)$ is usually large, especially in the two cases with independent design, which means that we are able to increase the posterior probability significantly by adding some covariate. This provides evidence for Condition~\cond{c2}. 
It is also interesting to observe that in all four settings,  $\fR_{\rm{a1}}(\gamma)$ is very likely to be negative. As we have argued before Condition~\ref{cond:ywj} in the main text, due to the collinearity in $\X$, whether a specific influential covariate can be added to the current model is very hard to predict. This is one of the key reasons why the high-dimensional variable selection problem is challenging.  

\begin{figure}
    \centering
    \includegraphics[width=0.93\linewidth]{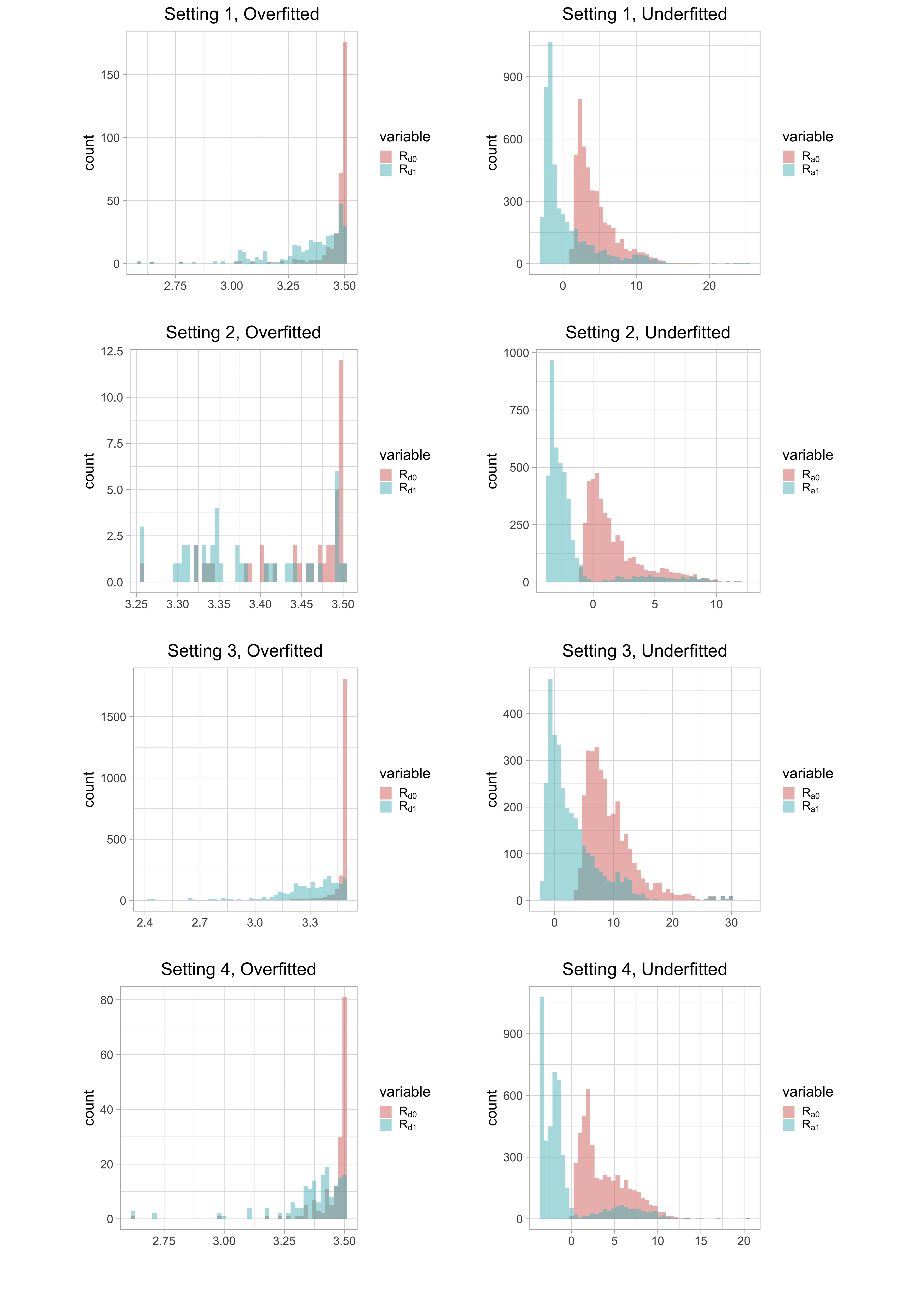}
    \caption{Local posterior landscape in simulation study I with SNR$=3$. Setting 1: $n=500, p=1000$, independent design. 
    Setting 2: $n=500, p=1000$, correlated design. 
    Setting 3: $n=1000, p=5000$, independent design. 
    Setting 4: $n=1000, p=5000$, correlated design.    
    We define ``overfitted'' and ``underfitted'' by assuming that $\true$ is the best model $\gamma^*$. The model $\true$ is excluded from the plots for ``overfitted'' models. 
    \label{fig:local} }
\end{figure}

\subsection{More results for Condition~\ref{cond:ywj}}\label{supp:cond.ywj}
In the previous subsection, we empirically checked whether Condition~\ref{cond:ywj} is satisfied in simulation study I with SNR $=3$. 
But in general, verifying Condition~\ref{cond:ywj} for a given data set is difficult for two reasons.   
First, the order of $|\cM(s_0)|$ is given by $p^{s_0}$, which can easily be astronomical unless $p$ or $s_0$ is very small (this is why we only consider models sampled by an informed MH algorithm in Figure~\ref{fig:local}). 
Second, the ``best model'' $\gamma^*$ in Condition~\ref{cond:ywj} can be hard to determine even for a simulated data set. 
To address the second difficulty, for each $\gamma$, we define 
\begin{align*}
      \fR(\gamma) =  \max\{  \fR_{\rm{a0}}(\gamma), \;  \fR_{\rm{d0}}(\gamma) \} =
      \max_{\gamma' \in \adds(\gamma) \cup \dels(\gamma)} \log_p \frac{\PP(\gamma')}{\PP(\gamma)}. 
\end{align*}
Condition~\ref{cond:ywj}  assumes that for any $\gamma \neq \gamma^*$ such that $|\gamma|$ is not too large, $\fR(\gamma)$ is greater than some constant.  
The distribution of $\fR(\gamma)$ is easier to numerically characterize since we do not need to know whether $\gamma$ is overfitted or underfitted.  
Further, by studying the distribution of $\fR(\gamma)$ we can still check the two most important implications of Condition~\ref{cond:ywj}: whether $\PP$ is unimodal, and whether its tails decay quickly. 

As we have done in the previous subsection, we collect all unique models visited by any informed MH algorithm and, for each simulation setting, merge the data from all replicates. The ``global mode'' $\hat{\gamma}_{\rm{max}}$ (i.e., the one with the largest posterior probability among all models that are sampled by any informed MH sampler) is excluded. See Figure~\ref{fig:Rx1} and Table~\ref{table:Rx1} for the distribution of $\fR(\gamma)$ in simulation study I and Figure~\ref{fig:Rx2} for that in simulation study II. 
Again, the distributions of $\fR(\gamma)$ shown in Figures~\ref{fig:Rx1} and~\ref{fig:Rx2} may be biased since models with large $\fR(\gamma)$ are unlikely to be sampled (clearly, $\PP(\gamma) < p^{-\fR(\gamma)}$).  Those models with very large $\fR$ values appearing in Figure~\ref{fig:Rx1} are collected in the early stage of MCMC when the chain has not reached stationarity (recall that we always initialize the sampler at some randomly generated $\gamma$ with $|\gamma| = 10$).

Observe that in every panel of Figure~\ref{fig:Rx1}, the highest bar always occurs around $3.5$, since almost every overfitted model has $\fR(\gamma) \approx \kappa = 3.5$ for the reason explained in the last subsection. If $\fR(\gamma)$ is significantly smaller than $\kappa$, then  $\gamma$ is probably underfitted, and there are two main reasons why an underfitted $\gamma$ may have a small $\fR(\gamma)$ violating Condition~\ref{cond:ywj}.
First, the effect size of the missing covariate(s) may not be sufficiently large; that is, when we propose adding a covariate to $\gamma$, the gain in R-squared is not large enough to dominate the penalty on model size in~\eqref{eq:ppr}. This explains why in Figure~\ref{fig:Rx1} and Table~\ref{table:Rx1} we see that Condition~\ref{cond:ywj} is more likely to be (approximately) satisfied when SNR $=3$. 
Second, it is also possible that more than one influential covariates with large effect sizes are missing in $\gamma$, but these effects cancel out each other due to the correlation between the covariates. 
This is again consistent with the result shown in Figure~\ref{fig:Rx1} and is also the predominant reason why the posterior distribution on $\cM(s_0)$ may be multimodal.  

In simulation study II, we simulate the data such that $\PP$ tends to be severely multimodal. Since we use $\kappa = 1.5$ in this study, in Figure~\ref{fig:Rx2} we see that there is always a high bar near $1.5$. However, because we initialize the MH algorithms in high-posterior regions, few models collected by the chains have $\fR(\gamma) > 1.5$. Clearly, distributions of $\fR(\gamma)$ in Figure~\ref{fig:Rx2} are very different from those in Figure~\ref{fig:Rx1} (note that $y$-axis is in log scale in Figure~\ref{fig:Rx1} but not in Figure~\ref{fig:Rx2}). This is mainly because in the second simulation study we randomly generate a large number of correlated causal covariates with normally distributed effect sizes, and thus many  models may have small or even negative $\fR$. 

Overall, for the first simulation study, $\PP$ appears to be approximately satisfied in most cases, while for the second simulation study, $\PP$ seems to be severely multimodal and violate Condition~\ref{cond:ywj}.

\begin{figure} 
    \centering
    \includegraphics[width=0.98\linewidth]{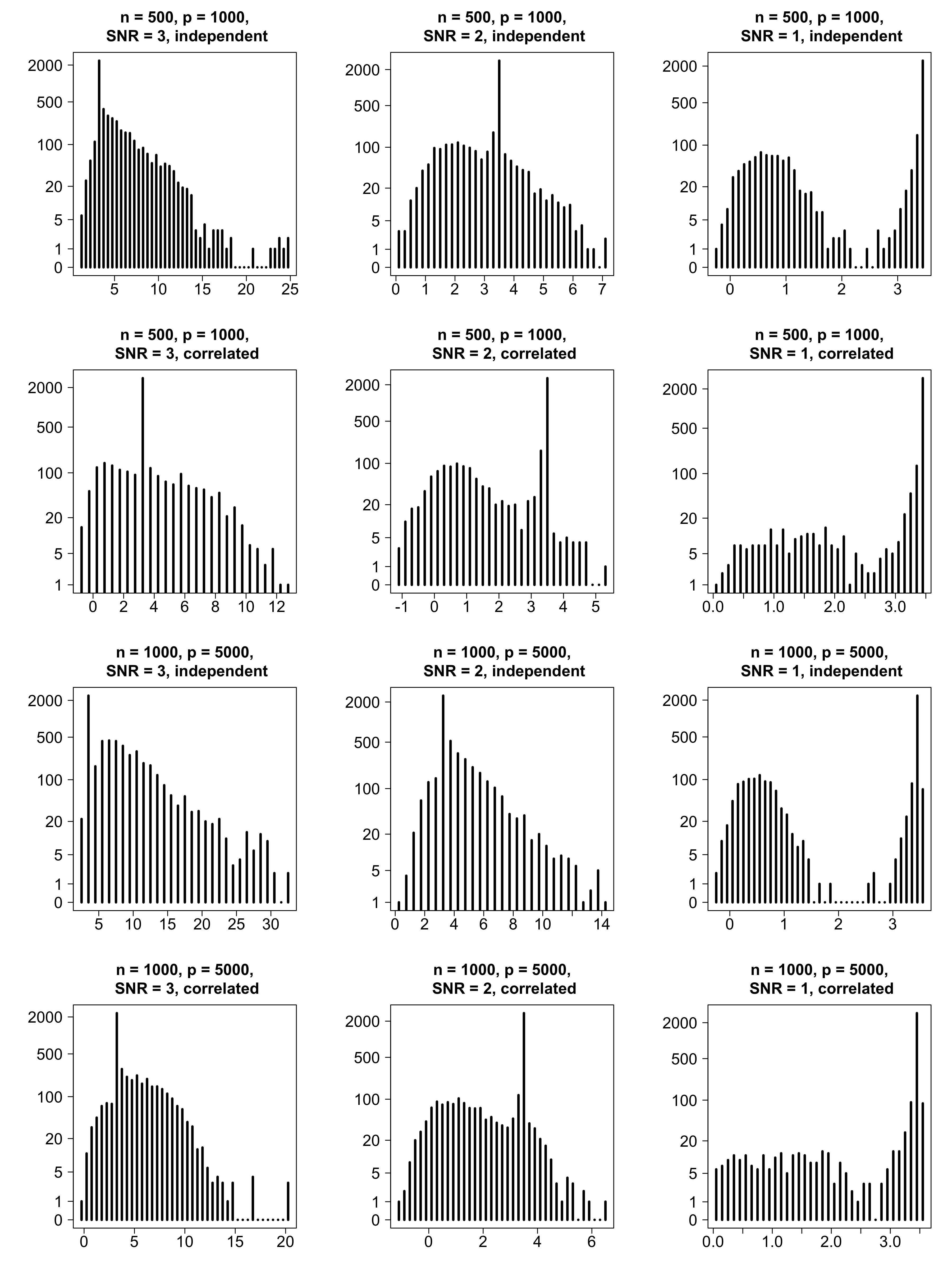}
    \caption{Histograms of $\fR(\gamma)$ in simulation study I. 
    For each simulation setting, we pool together the unique models sampled by \LOT{}-1, \LOT{}-2, and \LIB{}-1 algorithms for all 100 simulated data sets; $\hat{\gamma}_{\rm{max}}$ for each data set is excluded. 
    Note that  $y$-axis is in log scale. 
    \label{fig:Rx1} }
\end{figure}

\begin{figure} 
    \centering
    \includegraphics[width=0.98\linewidth]{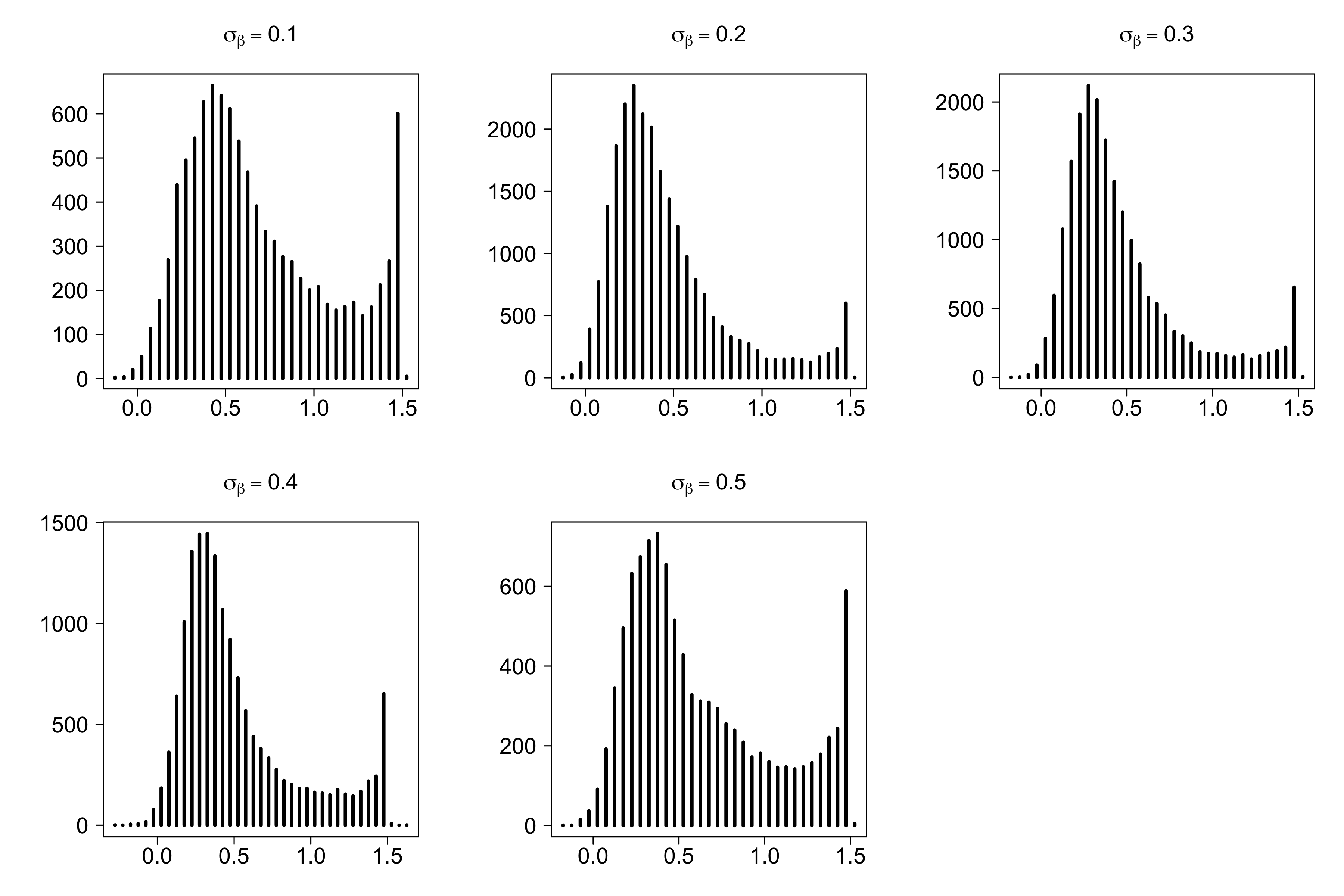}
    \caption{Histograms of $\fR(\gamma)$ in simulation study II. 
    For each choice of $\sigma_\beta$, we pool together the unique models sampled by \LOT{}-1, \LOT{}-2, and \LIB{}-1 algorithms for all 20 simulated data sets; $\hat{\gamma}_{\rm{max}}$ for each data set is excluded. 
    \label{fig:Rx2} }
\end{figure}

\begin{table} 
\caption{\label{table:Rx1} Distribution of $\fR(\gamma)$ in simulation study I. We report the percentage of unique models with $\fR(\gamma) \geq c$ for $c = 1, 2, 3$ using the data shown in Figure~\ref{fig:Rx1}.}
\begin{tabular}{ccccc}
\toprule
      &  &  $\fR(\gamma) \geq 1$ & $\fR(\gamma) \geq 2$ & $\fR(\gamma) \geq 3$ \\ 
    \midrule
    \multirow{3}{*}{\shortstack{$n = 500$, $p = 1000$, \\ independent design}} & SNR$=3$ & 
    100\% & 99.4\% & 96.1\%   \\ 
    & SNR$=2$ & 98.2\% & 87.9\% & 77.3\%   \\ 
    & SNR$=1$ & 83.2\% & 78.3\% & 77.9\%   \\ 
    \midrule
    \multirow{3}{*}{\shortstack{$n = 500$, $p = 1000$, \\ correlated design }} & SNR$=3$ & 92.4\% & 86.9\% & 82.4\%   \\ 
    & SNR$=2$ & 84.2\% & 77.8\% & 75.4\%   \\ 
    & SNR$=1$ & 98.2\% & 95.4\% & 94.1\%   \\ 
    \midrule
        \multirow{3}{*}{\shortstack{$n = 1000$, $p = 5000$, \\ independent design}} & SNR$=3$ & 
    100\% & 100\% & 99.6\%   \\ 
    & SNR$=2$ & 99.9\% & 98.1\% & 92.6\%   \\ 
    & SNR$=1$ & 75.5\% & 73.7\% & 73.6\%   \\ 
    \midrule
    \multirow{3}{*}{\shortstack{$n = 1000$, $p = 5000$, \\ correlated design }} & SNR$=3$ & 99.1\% & 96.7\% & 93.5\%   \\ 
    & SNR$=2$ & 87.5\% & 77.8\% & 72.9\%   \\ 
    & SNR$=1$ & 97.5\% & 94.5\% & 93.4\%   \\ 
\bottomrule
\end{tabular}
\end{table}

\clearpage

\subsection{Number of local modes in simulation study II }\label{supp:sim.nmodes} 
Consider simulation study II. 
In Table~\ref{table2}, we have reported the number of unique local modes (with respect to the single-flip neighborhood relation) sampled by each informed MH sampler in 2000 iterations. 
For comparison, we run \LOT{}-1 for $5 \times 10^5$ iterations under the same initialization. The number of local modes (averaged over 20 data sets) is given in Table~\ref{supp.table:modes}. Except in the case $\sigma_\beta = 0.1$, the number of local modes of $\pi$ seems to be very large and the chain keeps finding new local modes as we run it for more iterations, 
which suggests that it is reasonable to use the number of sampled local modes to describe how fast the sampler explores the posterior.  

\begin{table} 
    \caption{ \label{supp.table:modes} The number of unique local modes sampled by \LOT{}-1, averaged over 20 data sets.}
    \begin{tabular}{cccccc}
    \hline 
    \rule{0pt}{2.5ex}  
      \LOT{}-1 iterations & $\sigma_\beta = 0.1$   & $\sigma_\beta = 0.2$ &  $\sigma_\beta = 0.3$   & $\sigma_\beta = 0.4$ &      $\sigma_\beta = 0.5$ \\ 
    \hline 
    \rule{0pt}{2.5ex}  2K   &  2.30 & 6.20 & 5.05 & 3.85 & 2.75 \\
     500K   & 4.10 & 37.6 & 36.9 & 28.1 & 16.4 \\    
    \hline 
    \end{tabular}
\end{table}

\subsection{Effective sample size estimation in simulation study II }\label{supp:sim.ess}  
Finally, we consider ESS estimation. In Table~\ref{table2}, we calculate ESS estimates using function \texttt{effectiveSize}  in the \texttt{R} package \texttt{coda}~\citep{plummer2006coda}. 
This method, particularly for binary variables, may yield inconsistent estimation since it relies on estimating the asymptotic variance of a fitted univariate  autoregressive process. 
A better method is implemented in function \texttt{multiESS} in the \texttt{R} package \texttt{mcmcse}, which builds nonparametric estimators of the (possibly multivariate) limiting covariance matrix, making only mixing assumptions that are automatically satisfied for discrete space processes~\citep{flegal2017mcmcse}. 

Though $\gamma$ is usually the parameter of interest, it is unrealistic to calculate a multivariate ESS using $\gamma$ directly (treating it as a $p$-dimensional binary vector), since the sample covariance matrix of $\gamma$ is almost always singular. 
We propose to construct a $q$-dimensional summary statistic of $\gamma$ for some small $q$ as follows (we use $q = 5$).  
First, given the sample path of an MCMC algorithm, let $\gamma_{(1)}, \dots, \gamma_{(q)}$ denote the $q$ models with largest posterior probabilities. 
Let $\bar{\gamma} = \cup_{j \in [q]} \gamma_{(j)}$  denote the set of covariates that are included in at least one of the $q$ best models. 
Second, assuming $q$ divides $p$, for $i = 1, \dots, q$, let 
\begin{align*}
    \Gamma_{(i)} = \left\{ \frac{(i-1) p}{q} + 1, \frac{(i-1) p}{q} + 2, \dots, \frac{i p}{q} \right\} 
\end{align*}
and  define a model $\tilde{\gamma}_{(i)}$ by  
\begin{align*}
   \tilde{\gamma}_{(i)} = \left\{ \Gamma_{(i)} \setminus \bar{\gamma} \right\} \cup \gamma_{(i)}.
\end{align*}
Finally, given an arbitrary $\gamma \subset [p]$,  construct a $q$-dimensional statistic $F(\gamma)$ by 
\begin{align*}
    F(\gamma) = ( \, \mathrm{HD}(\gamma, \tilde{\gamma}_{(1)}), \, \dots, \,     \mathrm{HD}(\gamma, \tilde{\gamma}_{(q)}) \, ), 
\end{align*}
where $\mathrm{HD}(\gamma, \gamma') = | \gamma \triangle \gamma'|$ denotes the Hamming distance between two models. 
Hence, the $i$-the element of $F(\gamma)$ is the distance from $\gamma$ to the $i$-th ``reference model'' $\tilde{\gamma}_{(i)}$. 

This construction of $F(\gamma)$ can certainly be modified in many ways. For example, one can sample $\gamma_{(1)}, \dots, \gamma_{(q)}$ from the MCMC sample path. 
However, it is worth explaining why we choose to build $\tilde{\gamma}_{(i)}$ by combining $\gamma_{(i)}$ and $\Gamma_{(i)}$. 
Since $p$ is often very large, to create a low-dimensional statistic of $\gamma$,  we probably want to only focus on those ``influential'' variables. This can be achieved by calculating the distance from $\gamma$ to some high-posterior-probability models $\gamma_{(1)}, \dots, \gamma_{(q)}$. 
But it is very likely that $\gamma_{(1)}, \dots, \gamma_{(q)}$ are highly correlated with each other, so the sample covariance matrix of the statistic 
$( \, \mathrm{HD}(\gamma, \gamma_{(1)}), \, \dots, \,     \mathrm{HD}(\gamma, \gamma_{(q)}) \, )$ tends to be ill-conditioned, making the corresponding ESS estimate unstable. 
Adding $\Gamma_i$ to $\gamma_{(i)}$ has three benefits. 
First, compared with $\mathrm{HD}(\gamma, \gamma_{(i)})$, the statistic $\mathrm{HD}(\gamma, \tilde{\gamma}_{(i)})$ has a larger variance and thus is more indicative of the mixing behavior of the chain. 
Second, even if $\gamma_{(i)}$ and $\gamma_{(j)}$ only differ by one covariate, the correlation between $\mathrm{HD}(\gamma, \tilde{\gamma}_{(i)})$ and $\mathrm{HD}(\gamma, \tilde{\gamma}_{(j)})$  is usually not high due to the use of $\Gamma_i$ and $\Gamma_j$.  Third, our definition of $F(\gamma)$ makes use of information from all $p$ variables. 
   
Table~\ref{table2.supp} displays the ESS estimates calculated by the \texttt{R} package \texttt{mcmcse}. Let $T_1, T_2$ be as defined in the main text. 
We use $T_3$ to denote the multivariate statistic $F(\gamma)$ defined above with $q = 5$ and reference models being the best $5$ models from the sample path of \LOT{}-1 with $500K$ iterations, and let $T_4 = (T_1, T_2, T_3)$ be a $7$-dimensional statistic. 
All the four ESS statistics show strong evidence that \LOT{} algorithms mix much faster than \RW{}, and interestingly, ESS$(T_2)$ and ESS$(T_3)$ behave very similarly (up to a scaling factor) except when $\sigma_\beta = 0.5$. 
Note that for $T_1$ and $T_2$, ESS estimates are different from those in Table~\ref{table2} since the estimation method is different, and the estimates in Table~\ref{table2.supp} are actually more favorable to \LOT{} algorithms. 
  
\begin{table} 
 \caption{\label{table2.supp} Effective sample size results for simulation study II.  ESS($T_i$) is the estimated effective sample size of the statistic $T_i$ calculated by function \texttt{multiESS} in the \texttt{R} package \texttt{mcmcse}.    
All estimates are averaged over 20 data sets.}
\begin{tabular}{cccccc}
\toprule
  &  &  \RW{} & \LOT{}-1 & \LOT{}-2 & \LIB{}-1   \\
 & Number of iterations &  200,000 & 2,000 & 2,000 & 2,000 \\  
\midrule
 \multirow{4}{*}{ \shortstack{ $\sigma_\beta = 0.1$ \\ Mean model size = 6.1 }} & ESS($T_1$)/Time  & 0.513 & 18.3 & 13.4 & 4.8\\ 
  & ESS($T_2$)/Time  & 2.51 & 28.5 & 25.5 & 10.2\\ 
  & ESS($T_3$)/Time  & 1.74 & 20.6 & 23.2 & 10.4\\ 
  & ESS($T_4$)/Time  & 2.2 & 25.8 & 29.4 & 14.9\\ 
\cmidrule{1-6} 
 \multirow{4}{*}{ \shortstack{ $\sigma_\beta = 0.2$ \\ Mean model size = 26.4 }} & ESS($T_1$)/Time  & 0.347 & 4.09 & 3.47 & 2.07\\ 
  & ESS($T_2$)/Time  & 2.49 & 20.3 & 19 & 11.2\\ 
  & ESS($T_3$)/Time  & 0.727 & 8.25 & 7.51 & 4.67\\ 
  & ESS($T_4$)/Time  & 0.977 & 11.5 & 10.8 & 6.83\\ 
\cmidrule{1-6} 
 \multirow{4}{*}{ \shortstack{ $\sigma_\beta = 0.3$ \\ Mean model size = 50.2 }} & ESS($T_1$)/Time  & 0.244 & 2.24 & 2.28 & 1.19\\ 
  & ESS($T_2$)/Time  & 2.42 & 18.8 & 16.2 & 8.68\\ 
  & ESS($T_3$)/Time  & 0.651 & 5.2 & 4.91 & 2.84\\ 
  & ESS($T_4$)/Time  & 0.915 & 6.81 & 7.2 & 4.48\\ 
\cmidrule{1-6} 
 \multirow{4}{*}{ \shortstack{ $\sigma_\beta = 0.4$ \\ Mean model size = 63.9 }} & ESS($T_1$)/Time  & 0.238 & 2.26 & 1.78 & 0.859\\
  & ESS($T_2$)/Time  & 1.98 & 15.1 & 13.6 & 6.86\\ 
  & ESS($T_3$)/Time  & 0.628 & 4.42 & 4.48 & 2.87\\ 
  & ESS($T_4$)/Time  & 0.927 & 6.23 & 5.99 & 4.1\\ 
\cmidrule{1-6} 
 \multirow{4}{*}{ \shortstack{ $\sigma_\beta = 0.5$ \\ Mean model size = 71.6 }} & ESS($T_1$)/Time  & 0.37 & 3.13 & 2.84 & 1.13\\ 
  & ESS($T_2$)/Time  & 1.71 & 13.5 & 13.1 & 4.96\\ 
  & ESS($T_3$)/Time  & 1.2 & 5.28 & 6.44 & 3.18\\ 
  & ESS($T_4$)/Time  & 1.43 & 6.66 & 7.94 & 4.28\\ 
\bottomrule
\end{tabular}
\end{table}  

\clearpage 
\newpage 
\section{Proofs for variable selection and \LOT{} } \label{sec:proof.sel}

\subsection{Proof for Example~\ref{ex1}}\label{proof.ex1}
\begin{proof}

First, a straightforward calculation gives $y^\top y =  (1 + 2\nu )n$ and 
\begin{align*}
    y^\top P_{ \gamma } y = \left\{ \begin{array}{ll}
        0, &   \text{ if } \gamma = \emptyset \text{ or } \{i\} \text{ for } i = 3, 4, \dots, p, \\
        \nu^2 n,  & \text{ if } \gamma = \{1\} \text{ or } \{2\}, \\
       2 \nu n, \quad & \text{ if } \gamma = \{1, 2\}. 
    \end{array}\right. 
\end{align*}
By~\eqref{eq:ppr}, we have 
\begin{align*}
  &  \frac{ \PP(\{ 3 \} )}{\PP(\emptyset)} =  \cdots = \frac{ \PP(\{ p \} )}{\PP(\emptyset)} = p^{ - \kappa}, \\ 
  &  \frac{ \PP(\{ 1 \} )}{\PP(\emptyset)} =  \frac{ \PP(\{ 2 \} )}{\PP(\emptyset)} = p^{ - \kappa} \left( 1 - \frac{ \nu^2}{(1 + g^{-1}) (1 + 2\nu)} \right)^{-n / 2},  \\ 
 & \frac{ \PP(\{ 1, 2\} )}{\PP(\{1\})} =  \frac{ \PP(\{1, 2 \} )}{\PP(\{2\})} = p^{ - \kappa} \left( 1 - \frac{\nu (2 - \nu)}{ (1 + g^{-1})(1 + 2\nu) - \nu^2 } \right)^{-n / 2}. 
\end{align*}
Using the bound $e^{- n x/ (1 - x)} \leq (1 - x)^n \leq e^{-n x}$, we get 
\begin{align*}
    p^{ - \kappa} e^{n a_1 / \nu}  \leq  \frac{ \PP(\{ 1 \} )}{\PP(\emptyset)} \leq & \; p^{ - \kappa}e^{n a_2} ,   \text{ and }
    \frac{ \PP(\{ 1 \} )}{\PP(\{1, 2\})} \leq  p^{ \kappa} e^{- n a_3 }. 
\end{align*}
where 
\begin{align*}
    a_1 = \frac{ \nu^3 / 2}{(1 + g^{-1}) (1 + 2\nu)}, \quad a_2 =  \frac{ \nu^2 / 2 }{(1 + g^{-1}) (1 + 2\nu) - \nu^2}, \quad 
    a_3 =  \frac{ \nu (2 - \nu)/2 }{(1 + g^{-1}) (1 + 2\nu) - \nu^2}. 
\end{align*} 
Since we assume $p, \nu, g, \kappa$ are all fixed, we get 
\begin{align*}
    \bK_\nu(\emptyset, \{1\} \cup \{2\})  
    \geq \frac{ 2 e^{n  a_1} }{ p + 2 e^{n a_1} } = 1 - O(e^{- n a_1}). 
\end{align*}
As shown in the main text, the transition probability $\bP_\nu(\emptyset, \{1\})$ is bounded by 
\begin{align*}
    \left\{ \frac{ \PP( \{1\} ) }{ \PP(\emptyset)} \right\}^{1 - \nu}  \left\{ \frac{ \PP( \{1\}) }{ \PP( \{1, 2\}) } \right\}^{ \nu} 
    \leq p^{-\kappa ( 1 - 2\nu)} e^{n a_2 (1 - \nu) - n a_3 \nu } = p^{-\kappa ( 1 - 2\nu)} e^{ - n a_2}.  
\end{align*}
Hence, $\bP_\nu(\emptyset, \{1\} \cup \{2\}) = O(e^{- n a_2})$. 
\end{proof}

\subsection{Proof of Lemma~\ref{lm:R1}}\label{proof.lemmas}
\begin{proof} 
Consider part~\eqref{lm1.1} first. 
Since $y^\top \OPJ_\gamma y \in [0, y^\top y]$, we have
\begin{align*}
1 \leq  \frac{   g^{-1} y^\top y +  y^\top  \OPJ_\gamma   y }{  g^{-1} y^\top y }   \leq 1 + g, 
\end{align*}
and thus $V_1(\gamma) \in [1, e]$. The bounds for $V_2$ are evident since $\gamma \in \cM(s_0)$. 
Part~\eqref{lm1.2} follows from the fact that $y^\top \OPJ_{\gamma \cup \{j\} } y \leq y^\top \OPJ_{\gamma  } y $
since $\OPJ_\gamma$ projects a vector onto the space orthogonal to the column space of $\X_\gamma$. 
For part~\eqref{lm1.3},  we use the following two inequalities, 
\begin{equation}\label{ineq.exp}
e^{-  x} \leq 1 - \frac{x}{2},  \quad \quad  e^{  x} \leq 1 + 2 x, \quad \quad \forall \, x \in [0, 1]. 
\end{equation}
Then, for $k \in \gamma \setminus \gamma^*$, the bound on $R_2(\gamma, \gamma \setminus \{k\}) $ follows from the first inequality above, and similarly,  for $j \in (\gamma \cup \gamma^*)^c$,  the bound on $R_2(\gamma, \gamma \cup \{j\})$ follows from the second. 
\end{proof}
 
\subsection{Drift condition for overfitted models}
In this subsection, we prove the drift condition provided in Proposition~\ref{th:overfit}.
Recall the weighting functions defined in~\eqref{eq:def.weights}  in which the corresponding normalizing constants can be expressed by   
\begin{equation*}\label{eq:Zs}
\begin{aligned}
\Zadd(\gamma)  = \;&    \sum_{\gamma' \in \adds(\gamma)} \left(  \PR(\gamma, \gamma')  \wedge p^{c_1} \right),  \\ 
\Zdel(\gamma)  = \;&    \sum_{\gamma' \in \dels(\gamma)} \left(  1 \vee \PR(\gamma, \gamma')   \wedge p^{c_0} \right),  \\ 
\Zswap(\gamma)  = \;&    \sum_{\gamma' \in \swaps(\gamma)} \left(   p s_0 \vee \PR(\gamma, \gamma')   \wedge p^{c_1} \right). 
\end{aligned}
\end{equation*} 
Under Condition~\ref{cond:ywj},  we can bound $\Zstar(\gamma)$ for an overfitted $\gamma$ as follows. 
\begin{lemma}\label{lm:overfit1}
Suppose   Condition~\ref{cond:ywj} holds and $\gamma \in \cM(s_0)$ is an overfitted model. 
\begin{enumerate}[(i)]
\item $\Zadd(\gamma) \leq p^{1 - c_0 }$.   \label{o1}
\item For any $k \in \gamma$, $\wdel( \gamma \setminus \{k\}  \mid \gamma  ) = p^{c_0}$ if $k \in \gamma \setminus \gamma^*$,  and $1$ if $k \in \gamma^*$.   \label{o2}
\item $\Zdel(\gamma) = (|\gamma| - s^*) p^{c_0}  + s^*$.  \label{o3}
\item For any $j \notin \gamma$ and $k \in \gamma^*$,  $\PR(\gamma,  (\gamma \cup \{j\}) \setminus \{k\}) \leq p^{- (c_0 + c_1)}$.  \label{o4}
\end{enumerate}
\end{lemma}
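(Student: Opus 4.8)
The plan is to verify each of the four bounds in Lemma~\ref{lm:overfit1} by direct computation from the definitions of the weighting functions in~\eqref{eq:def.weights} and the posterior probability ratio in~\eqref{eq:ppr}, using Condition~\ref{cond:ywj} throughout. These are essentially bookkeeping claims, so I would organize the proof as four short paragraphs, one per part.

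For part~\eqref{o1}, recall $\wadd(\gamma'\mid\gamma)=\PR(\gamma,\gamma')\wedge p^{c_1}$ and $\Zadd(\gamma)=\sum_{\gamma'\in\adds(\gamma)}\wadd(\gamma'\mid\gamma)$. Since $\gamma$ is overfitted, every $\gamma'=\gamma\cup\{j\}\in\adds(\gamma)$ is also overfitted, so Condition~\cond{c1} gives $\PR(\gamma,\gamma\cup\{j\})\le p^{-c_0}$; in particular $\PR(\gamma,\gamma')\wedge p^{c_1}\le p^{-c_0}$. As $|\adds(\gamma)|\le p$, summing yields $\Zadd(\gamma)\le p\cdot p^{-c_0}=p^{1-c_0}$. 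For part~\eqref{o2}, take $k\in\gamma$ and consider $\gamma'=\gamma\setminus\{k\}$. If $k\in\gamma\setminus\gamma^*$, then $\gamma'$ is still overfitted, so $\gamma=\gamma'\cup\{k\}$ and Condition~\cond{c1} applied to $\gamma'$ gives $\PR(\gamma',\gamma)\le p^{-c_0}$, i.e.\ $\PR(\gamma,\gamma')=\PR(\gamma',\gamma)^{-1}\ge p^{c_0}$; hence $\wdel(\gamma'\mid\gamma)=1\vee\PR(\gamma,\gamma')\wedge p^{c_0}=p^{c_0}$ (using $c_0\ge 0$). If instead $k\in\gamma^*$, then $\gamma'$ is underfitted, and Condition~\cond{c2} applied with the particular missing index $k\in\gamma^*\setminus\gamma'$ — wait, one must be careful here: Condition~\cond{c2} only guarantees \emph{some} $j\in\gamma^*\setminus\gamma'$ with $\PR(\gamma',\gamma'\cup\{j\})\ge p^{c_1}$, not necessarily $j=k$. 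The cleaner route is: since $c_1\ge 1\ge 0$ and $\PR(\gamma,\gamma')$ could be anything, I should instead bound $\PR(\gamma,\gamma\setminus\{k\})$ directly. Actually the claim is $\wdel=1$, which means we need $\PR(\gamma,\gamma\setminus\{k\})\le 1$ for $k\in\gamma^*$; this follows because removing an influential covariate strictly decreases the posterior, a consequence that should be extracted from Condition~\ref{cond:ywj} (or re-derived via the monotonicity argument used for $V_1$ together with the model-size penalty). I would cite the relevant inequality from the proof of Condition~\ref{cond:ywj}/Lemma~4 of~\citet{yang2016computational}, or argue directly that $y^\top\OPJ_{\gamma\setminus\{k\}}y\ge y^\top\OPJ_\gamma y$ combined with the $p^{-\kappa}$ factor forces $\PR(\gamma,\gamma\setminus\{k\})\le p^{\kappa}\cdot 1$ — hmm, that gives the wrong direction. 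The correct statement should follow from $\PR(\gamma\setminus\{k\},\gamma)\le p^{-c_0}\le 1$ when $\gamma$ is overfitted and $k\notin\gamma^*$, so for $k\in\gamma^*$ I need the complementary fact from the underfitted analysis. I will state it as a direct consequence of Condition~\ref{cond:ywj} and defer the precise extraction.

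Part~\eqref{o3} is then immediate from~\eqref{o2}: $\Zdel(\gamma)=\sum_{k\in\gamma}\wdel(\gamma\setminus\{k\}\mid\gamma)$, and there are exactly $|\gamma\setminus\gamma^*|=|\gamma|-s^*$ indices $k$ contributing $p^{c_0}$ and exactly $s^*$ indices $k\in\gamma^*$ contributing $1$, so $\Zdel(\gamma)=(|\gamma|-s^*)p^{c_0}+s^*$. For part~\eqref{o4}, take $\gamma'=(\gamma\cup\{j\})\setminus\{k\}$ with $j\notin\gamma$, $k\in\gamma^*$. Write the ratio as a telescoping product $\PR(\gamma,\gamma')=\PR(\gamma,\gamma\cup\{j\})\cdot\PR(\gamma\cup\{j\},\gamma')$. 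The intermediate model $\gamma\cup\{j\}$ is overfitted (it contains $\gamma\supseteq\gamma^*$), so Condition~\cond{c1} gives $\PR(\gamma,\gamma\cup\{j\})\le p^{-c_0}$. For the second factor, $\gamma'=(\gamma\cup\{j\})\setminus\{k\}$ with $k\in\gamma^*$, so this is exactly the situation of part~\eqref{o2} applied to the overfitted model $\gamma\cup\{j\}$ removing an influential covariate — by the same reasoning as in~\eqref{o2}, removing an influential covariate multiplies the posterior by at most $p^{-c_1}$ (this is what Condition~\cond{c2} gives for the reverse addition move, once one checks $k$ is a valid witness, or again extracting the needed inequality from Lemma~4). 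Multiplying, $\PR(\gamma,\gamma')\le p^{-(c_0+c_1)}$.

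The main obstacle is the recurring subtlety in parts~\eqref{o2} and~\eqref{o4}: Condition~\cond{c2} as stated only asserts the existence of \emph{some} good addition index, not that \emph{every} influential covariate is a good one to add back — and indeed the discussion before Condition~\ref{cond:ywj} explicitly warns that some influential covariates may \emph{appear} non-influential. So to show $\wdel(\gamma\setminus\{k\}\mid\gamma)=1$ for $k\in\gamma^*$ when $\gamma$ is overfitted (equivalently $\PR(\gamma,\gamma\setminus\{k\})\le 1$), I cannot simply invoke Condition~\cond{c2} with $j=k$. The resolution is that for an \emph{overfitted} $\gamma$, removing any $k\in\gamma^*$ still leaves all other influential covariates in place, and the key point is that the resulting $\gamma\setminus\{k\}$ differs from $\gamma$ only in dropping one truly influential variable while retaining enough signal structure; the inequality $\PR(\gamma,\gamma\setminus\{k\})\le 1$ (in fact $\le p^{-c_1}$ under the stronger hypotheses) should be read off from the same concentration arguments that prove Lemma~4 of~\citet{yang2016computational}, since $y^\top(\PJ_{\gamma\setminus\{k\}}-\PJ_\gamma)y$ captures a genuine signal contribution of order $n\Cbeta^2$. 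I would therefore phrase parts~\eqref{o2} and~\eqref{o4} as following from Condition~\ref{cond:ywj} \emph{together with} the stronger pointwise bound $\PR(\gamma,\gamma\setminus\{k\})\le p^{-c_1}$ for overfitted $\gamma$ and $k\in\gamma^*$, which I will either add as an explicit clause or justify by a one-line reference to the proof of Theorem~\ref{th:ywj}; the rest is routine arithmetic.
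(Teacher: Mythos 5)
Parts~\eqref{o1} and~\eqref{o3} of your proposal match the paper's proof, and your telescoping decomposition for part~\eqref{o4} is also the paper's route. But the point you flag as ``the main obstacle'' in parts~\eqref{o2} and~\eqref{o4} is a genuine gap, and its resolution is elementary and lies entirely inside Condition~\ref{cond:ywj} as stated: when $\gamma$ is overfitted and $k \in \gamma^*$, the model $\gamma' = \gamma \setminus \{k\}$ is underfitted with $\gamma^* \setminus \gamma' = \{k\}$ a \emph{singleton}, because the only influential covariate missing from $\gamma'$ is $k$ itself. Hence the existential quantifier in Condition~\cond{c2} has no choice: the witness $j \in \Ga$ must equal $k$, giving $\PR(\gamma', \gamma' \cup \{k\}) = \PR(\gamma \setminus \{k\}, \gamma) \geq p^{c_1}$, i.e.\ $\PR(\gamma, \gamma \setminus \{k\}) \leq p^{-c_1} < 1$, so $\wdel(\gamma \setminus \{k\} \mid \gamma) = 1 \vee \PR(\gamma,\gamma\setminus\{k\}) \wedge p^{c_0} = 1$. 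The same singleton observation handles the second factor in part~\eqref{o4}: the underfitted model $(\gamma \cup \{j\}) \setminus \{k\}$ (which lies in $\cM(s_0)$ and is missing only $k$, since $j \notin \gamma \supseteq \gamma^*$) yields $\PR(\gamma \cup \{j\}, (\gamma \cup \{j\}) \setminus \{k\}) \leq p^{-c_1}$ directly from Condition~\cond{c2}.

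Your proposed fix --- adding a ``stronger pointwise bound'' $\PR(\gamma, \gamma\setminus\{k\}) \leq p^{-c_1}$ as an extra hypothesis, or importing it from the concentration arguments behind Lemma~4 of \citet{yang2016computational} --- is therefore both unnecessary and unfaithful to the lemma, whose hypothesis is Condition~\ref{cond:ywj} alone; a proof that appeals to the data-generating analysis rather than to the condition itself would not establish the lemma as stated (and the paper is explicit that the drift analysis is meant to use only Condition~\ref{cond:ywj}). With the singleton argument in place, your remaining steps (the $|\adds(\gamma)| \leq p$ count for~\eqref{o1}, the split of $\Zdel$ into $|\gamma| - s^*$ terms equal to $p^{c_0}$ and $s^*$ terms equal to $1$ for~\eqref{o3}, and the product bound $p^{-c_0} \cdot p^{-c_1}$ for~\eqref{o4}) are exactly the paper's computations.
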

\begin{proof} 
For any overfitted model $\gamma$ with $|\gamma| \leq s_0$,   by Condition~\cond{c1}, 
\begin{align*}
\Zadd(\gamma) = \sum_{\gamma' \in \adds(\gamma)} \left( \PR(\gamma, \gamma') \wedge p^{c_1} \right) \leq |\adds(\gamma)|  p^{-c_0} \leq p^{1 - c_0}, 
\end{align*}
since any $\gamma'$ in $\adds(\gamma)$ is obtained by adding a non-influential covariate to $\gamma$. 
 
To prove part~\eqref{o2}, note that Condition~\cond{c1} implies that for any $k \in \gamma \setminus \gamma^*$, we have 
$$\PR(\gamma, \gamma \setminus \{k\}) = \PR(\gamma \setminus \{k\}, \gamma)^{-1} \geq p^{ c_0},$$ 
since $\gamma \setminus \{k\}$ is still overfitted. 
If $k \in \gamma^*$,  then $\gamma' = \gamma \setminus \{k\}$ is underfitted and $\Ga = \{k\}$. 
Hence,  by Condition~\cond{c2},  $\PR(\gamma, \gamma') \leq p^{ - c_1} < 1$. 
Part~\eqref{o3} follows from~\eqref{o2}. 
 
Consider a swap move. For any $j \notin \gamma$ and $k \in \gamma^* \subset \gamma$, we have 
\begin{align*}
 \PR(\gamma,  (\gamma \cup \{j\}) \setminus \{k\}) = \PR(\gamma, \gamma \cup \{j\}) \PR(\gamma \cup \{j\}, (\gamma \cup \{j\}) \setminus \{k\}) \leq p^{- (c_0 + c_1)}, 
\end{align*}
since $ \PR(\gamma, \gamma \cup \{j\})  \leq p^{ - c_0}$ by Condition~\cond{c1} and $\PR(\gamma \cup \{j\}, (\gamma \cup \{j\}) \setminus \{k\})  \leq p^{- c_1}$ by Condition~\cond{c2}.  
Part~\eqref{o4} then follows. 
\end{proof}

The drift condition provided in Proposition~\ref{th:overfit} follows from the following lemma. 
\begin{lemma}\label{lm:overfit.all}
Suppose that Condition~\ref{cond:ywj} holds for some $c_0 \geq 2$ and $c_1 \geq 1$. 
For any overfitted model $\gamma$ such that $\gamma \neq \gamma^*$ and $|\gamma| \leq s_0$,  
\begin{align*}
\sum_{\gamma' \in \adds(\gamma)} R_2 (\gamma, \gamma') \MHP(\gamma, \gamma') \leq \;&   \frac{1}{s_0 p^{c_0 - 1}}  , \\
\sum_{\gamma' \in \dels(\gamma)} R_2 (\gamma, \gamma') \MHP(\gamma, \gamma')  \leq \;& -  \frac{1  }{4 s_0} + O \left(   \frac{1}{s_0 p^{c_0 - 1}} \right), \\
 \sum_{\gamma' \in \swaps(\gamma)}  R_2 (\gamma,  \gamma') \MHP(\gamma,   \gamma')  \leq \;& p^{-c_0}. 
\end{align*}
\end{lemma}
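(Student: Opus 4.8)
The plan is to bound the three sums separately, using \eqref{eq:mh2}, the pointwise bounds on $R_2$ from Lemma~\ref{lm:R1}\eqref{lm1.3}, and the estimates for the normalizing constants and posterior ratios in Lemma~\ref{lm:overfit1}. The organizing observation is that, since $\gamma^*\subseteq\gamma$, every $j\notin\gamma$ lies outside $\gamma^*$, so an addition always inserts a non-influential covariate; correspondingly $\dels(\gamma)$ and $\swaps(\gamma)$ split according to whether the deleted index lies in $\gamma^*$ or in $\Gb$, and in the first case one checks directly from the definition of $V_2$ that $R_2(\gamma,\gamma')=0$.

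For the addition sum I would use that each $\gamma'=\gamma\cup\{j\}\in\adds(\gamma)$ has $R_2(\gamma,\gamma')\le 2/s_0$ by Lemma~\ref{lm:R1}\eqref{lm1.3}, while $\MHP(\gamma,\gamma')\le\PR(\gamma,\gamma')/2\le p^{-c_0}/2$ by \eqref{eq:mh2} and Condition~\cond{c1}; since $|\adds(\gamma)|\le p$ the sum is at most $(2/s_0)(p/2)p^{-c_0}=1/(s_0p^{c_0-1})$ (it is empty when $|\gamma|=s_0$). For the swap sum (empty unless $|\gamma|=s_0$), a swap $\gamma'=(\gamma\cup\{j\})\setminus\{k\}$ with $k\in\Gb$ has $|\gamma'\setminus\gamma^*|=|\gamma\setminus\gamma^*|$, so $R_2=0$; when $k\in\gamma^*$ one gets $R_2(\gamma,\gamma')=e^{1/s_0}-1\le 2/s_0$ and, by \eqref{eq:mh2} and Lemma~\ref{lm:overfit1}\eqref{o4}, $\MHP(\gamma,\gamma')\le\PR(\gamma,\gamma')/2\le p^{-(c_0+c_1)}/2$. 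There are at most $p s^*\le p s_0$ such swaps, so their total is at most $(2/s_0)(p s_0/2)p^{-(c_0+c_1)}=p^{1-c_0-c_1}\le p^{-c_0}$, using $c_1\ge 1$.

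The deletion sum is where the real work lies. Terms with $k\in\gamma^*$ vanish ($R_2=0$), so only $k\in\Gb$ contributes, with $R_2(\gamma,\gamma\setminus\{k\})\le -1/(2s_0)$ by Lemma~\ref{lm:R1}\eqref{lm1.3}. The key sub-step is that such a deletion is \emph{always accepted}: by Lemma~\ref{lm:overfit1}\eqref{o2}, $\KLOT(\gamma,\gamma\setminus\{k\})=p^{c_0}/(2\Zdel(\gamma))$; the reverse move re-adds $k$ to the overfitted model $\gamma\setminus\{k\}$, with weight $\wadd(\gamma\mid\gamma\setminus\{k\})=\PR(\gamma\setminus\{k\},\gamma)$ (the cap $p^{c_1}$ is inactive since this ratio is $\le p^{-c_0}$) and, by Lemma~\ref{lm:overfit1}\eqref{o1}, $\Zadd(\gamma\setminus\{k\})\le p^{1-c_0}$. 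Substituting into the Metropolis ratio in \eqref{eq:MH}, the posterior ratios cancel and it equals $\Zdel(\gamma)/\big(p^{c_0}\Zadd(\gamma\setminus\{k\})\big)\ge p^{c_0}/(p^{c_0}p^{1-c_0})=p^{c_0-1}\ge 1$, using $c_0\ge 2$ and $\Zdel(\gamma)\ge p^{c_0}$ (which holds since $|\gamma|-s^*\ge 1$). Hence $\MHP(\gamma,\gamma\setminus\{k\})=p^{c_0}/(2\Zdel(\gamma))$, and with $\Zdel(\gamma)=(|\gamma|-s^*)p^{c_0}+s^*$ from Lemma~\ref{lm:overfit1}\eqref{o3}, summing over the $|\gamma|-s^*\ge 1$ indices in $\Gb$ gives
\[
\sum_{\gamma'\in\dels(\gamma)}R_2(\gamma,\gamma')\MHP(\gamma,\gamma')\;\le\;-\frac{1}{2s_0}\cdot\frac{(|\gamma|-s^*)p^{c_0}}{2\big((|\gamma|-s^*)p^{c_0}+s^*\big)}\;=\;-\frac{1}{4s_0}+O\!\left(\frac{1}{s_0p^{c_0-1}}\right),
\]
where the error estimate uses $s^*\le s_0\le p$ to bound the discarded fraction by $s^*/\big((|\gamma|-s^*)p^{c_0}\big)\le s_0/p^{c_0}$.

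The one genuinely non-mechanical point is the always-accepted claim for deleting a non-influential covariate from an overfitted model; this is precisely where the upper truncation $\wdel\le p^{c_0}$ and the hypothesis $c_0\ge 2$ are indispensable, since without the cap $\Zdel(\gamma)$ could be arbitrarily large and the Metropolis ratio would no longer be bounded below by $1$. Everything else reduces to classifying each neighbour as overfitted or underfitted and invoking \eqref{eq:mh2} together with Condition~\ref{cond:ywj} and Lemma~\ref{lm:overfit1}.
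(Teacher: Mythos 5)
Your proof is correct and follows essentially the same route as the paper: split by move type, bound additions and swaps via \eqref{eq:mh2} together with Condition~\cond{c1} and Lemma~\ref{lm:overfit1}\eqref{o4}, and for deletions show that removing a non-influential covariate is accepted with probability one (your explicit cancellation in the Metropolis ratio is exactly the paper's computation $\PR(\gamma,\gamma\setminus\{k\})\KLOT(\gamma\setminus\{k\},\gamma)=1/(2\Zadd(\gamma\setminus\{k\}))\geq p^{c_0-1}/2\geq 1$), then sum the exact proposal probabilities $p^{c_0}/(2\Zdel(\gamma))$ over $\Gb$. The constants and error analysis match the paper's.
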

\begin{proof} 
Consider addition first. Since $\gamma$ is overfitted, we can only add non-influential covariates. 
For any $j \notin \gamma$, it follows from~\eqref{eq:mh2} and Condition~\cond{c1}  that 
$$\MHP(\gamma, \gamma \cup \{j\})  \leq \frac{ \PR(\gamma, \gamma \cup \{j\}) }{2} \leq  \frac{1}{ 2 p^{c_0} }.$$ 
Thus, using Lemma~\ref{lm:R1}\eqref{lm1.3} we obtain that 
\begin{align*}
\sum_{\gamma' \in \adds(\gamma)} R_2 (\gamma, \gamma') \MHP(\gamma, \gamma')
\leq   \sum_{\gamma' \in \adds(\gamma)}  \frac{ 1 }{s_0 p^{c_0}} = 
\frac{ p - |\gamma| }{s_0 p^{c_0}} \leq \frac{1}{s_0 p^{c_0 - 1}}. 
\end{align*}

Consider deletion moves. Observe that $V_2$ only changes if we remove a non-influential covariate. 
For any $k \in \gamma \setminus \gamma^*$,  
$\wadd( \gamma \mid \gamma \setminus \{k\}) = \PR(\gamma \setminus \{k\}, \gamma) \leq p^{- c_0}$ by Condition~\cond{c1}. 
Apply Lemma~\ref{lm:overfit1}\eqref{o1} to get 
\begin{align*}
 \PR(\gamma, \gamma \setminus \{k\}) \KLOT(\gamma \setminus \{k\}, \gamma)  =\;& \PR(\gamma, \gamma \setminus \{k\}) \frac{ \wadd( \gamma \mid \gamma \setminus \{k\}) }{2  \Zadd(\gamma \setminus \{k\})} \\
 =\;& \frac{ 1}{2  \Zadd(\gamma \setminus \{k\})} \geq \frac{p^{c_0 - 1}}{2}  \geq 1. 
\end{align*}
Thus, by~\eqref{eq:mh2}, 
$ \MHP(\gamma, \gamma \setminus \{k\})  =  \KLOT(\gamma, \gamma \setminus \{k\}).$ 
Applying Lemma~\ref{lm:R1}\eqref{lm1.2}, we find that 
\begin{align*}
  - R_2 (\gamma, \gamma \setminus \{k\}) \MHP(\gamma, \gamma \setminus \{k\}) 
\geq    \frac{  \KLOT(\gamma, \gamma \setminus \{k\})  }{2 s_0}
=     \frac{ p^{c_0}}{ 4 s_0 \left[  (|\gamma| - s^* ) p^{c_0} + s^*  \right] }. 
\end{align*}
Since $c_0 \geq 2$ and there are $(|\gamma| - s^*)$ non-influential covariates that we may remove, 
\begin{align*}
 - \sum_{\gamma' \in \dels(\gamma)} R_2 (\gamma, \gamma') \MHP(\gamma, \gamma') 
 \geq \frac{(|\gamma| - s^*)     (4 s_0)^{-1} }{  |\gamma| - s^* + s^* p^{-c_0}}
 = \frac{1  }{4 s_0} + O\left( \frac{1}{s_0 p^{ c_0 - 1 } }\right). 
\end{align*}
In the last step we have used that $|\gamma| - s^* \geq 1$, $s^* <  p$, and $(1 + x)^{-1} \sim 1 - x$ for $x = o(1)$. 

For the swap moves, note that $V_2$ only changes if we swap an influential covariate $k \in \gamma^*$ with a non-influential covariate $j \notin \gamma$.
The total number of such pairs  is $ (p - s_0) s^*$.   Let $\gamma' = (\gamma \cup \{j\}) \setminus \{k\}$ denote the resulting model.  
By Lemma~\ref{lm:R1}\eqref{lm1.3},  Lemma~\ref{lm:overfit1}\eqref{o4} and~\eqref{eq:mh2}, 
\begin{align*}
   R_2 (\gamma,  \gamma') \MHP(\gamma,   \gamma') 
\leq  R_2 (\gamma,  \gamma') \frac{ \PR(\gamma,  \gamma') }{2}
\leq \frac{ 1 }{  s_0}  \PR(\gamma,  \gamma')  
 \leq \;& \frac{ 1 }{ s_0 p^{c_0 + c_1}}    . 
\end{align*}
Since $(p - s_0) s^* \leq p s_0$,
\begin{align*}
 \sum_{\gamma' \in \swaps(\gamma)}  R_2 (\gamma,  \gamma') \MHP(\gamma,   \gamma')  \leq   \frac{1}{p^{c_0 + c_1 - 1}}
 \leq \frac{1}{p^{c_0}}, 
\end{align*}  
which concludes the proof. 
\end{proof}

\subsection{Drift condition for underfitted models}
In this subsection, we prove the drift condition provided in Proposition~\ref{th:underfit}. We first prove two auxiliary results. 
\begin{lemma}\label{lm:underfit1}
Suppose   Condition~\ref{cond:ywj} holds and $\gamma \in \cM(s_0)$ is an underfitted model. 
\begin{enumerate}[(i)]
\item $ \Zadd(\gamma) \geq p^{c_1}$.  \label{u1}
\item $s_0 \leq  \Zdel(\gamma) \leq  s_0 p^{c_0}$.  \label{u2}
\item If $|\gamma| = s_0$, then  $p^{c_1} \leq \Zswap(\gamma) \leq  s_0 p^{c_1 + 1}$.  \label{u3}
\end{enumerate}
\end{lemma}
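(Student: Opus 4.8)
The plan is to read off each bound from two ingredients: (a) each \emph{truncated} proposal weight in~\eqref{eq:def.weights} lies, by construction, in a fixed interval that does not depend on $\PR$, and (b) the relevant neighbourhood has a known cardinality and --- for the lower bounds at scale $p^{c_1}$ --- contains one distinguished ``good move'' supplied by Condition~\ref{cond:ywj}.

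For part~\eqref{u1}, I would first note that $\adds(\gamma)\neq\emptyset$ exactly when $|\gamma|<s_0$, which is the only situation in which an addition is ever proposed under~\eqref{eq:def.K}. By Condition~\cond{c2} there is some $j\in\Ga$ with $\PR(\gamma,\gamma\cup\{j\})\geq p^{c_1}$, so that $\wadd(\gamma\cup\{j\}\mid\gamma)=\PR(\gamma,\gamma\cup\{j\})\wedge p^{c_1}=p^{c_1}$. Since $\gamma\cup\{j\}\in\adds(\gamma)$ and all summands of $\Zadd(\gamma)$ are non-negative, discarding the remaining terms already gives $\Zadd(\gamma)\geq p^{c_1}$; no upper bound is claimed here.

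For part~\eqref{u2}, the two-sided clamp $1\vee\PR(\gamma,\gamma')\wedge p^{c_0}$ forces every summand of $\Zdel(\gamma)$ into $[1,p^{c_0}]$ irrespective of the value of $\PR$. Since $|\dels(\gamma)|=|\gamma|$, summing yields $|\gamma|\leq\Zdel(\gamma)\leq|\gamma|p^{c_0}$; the upper bound follows from $|\gamma|\leq s_0$, and the lower bound $s_0\leq\Zdel(\gamma)$ is used (and holds) in the regime $|\gamma|=s_0$. For part~\eqref{u3}, I would first check the elementary inequality $ps_0\leq p^{c_1}$ --- valid because $s_0\leq p$ and $c_1\geq 2$ in every regime relevant here --- so that the clamp $ps_0\vee\PR(\gamma,\gamma')\wedge p^{c_1}$ confines every summand of $\Zswap(\gamma)$ to $[ps_0,p^{c_1}]$. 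When $|\gamma|=s_0$, Condition~\cond{c3} provides $j\in\Ga$ and $k\in\Gb$ with $\PR(\gamma,(\gamma\cup\{j\})\setminus\{k\})\geq p^{c_1}$, hence the corresponding summand equals $p^{c_1}$ and $\Zswap(\gamma)\geq p^{c_1}$; and since $|\swaps(\gamma)|=(p-s_0)s_0\leq ps_0$ with each summand at most $p^{c_1}$, we obtain $\Zswap(\gamma)\leq ps_0\cdot p^{c_1}=s_0p^{c_1+1}$.

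The argument is pure bookkeeping, so there is no genuine obstacle; the two things to keep straight are the interplay between move type and model size (part~\eqref{u1} is relevant only when $|\gamma|<s_0$, part~\eqref{u3} only when $|\gamma|=s_0$, while part~\eqref{u2} is used with $|\gamma|=s_0$) and the inequality $ps_0\leq p^{c_1}$, which is precisely what makes $ps_0$ the effective lower clamp in $\wswap$ rather than being overridden; both are immediate from the standing assumptions on $s_0$ and $c_1$.
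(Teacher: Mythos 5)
Your proof is correct and follows essentially the same route as the paper's (very terse) argument: Condition~\cond{c2} supplies one addition move whose truncated weight equals $p^{c_1}$, the two-sided clamp plus $|\dels(\gamma)|=|\gamma|\le s_0$ handles $\Zdel$, and Condition~\cond{c3} together with $|\swaps(\gamma)|\le p s_0$ handles $\Zswap$. Your extra bookkeeping (restricting part~(i) to $|\gamma|<s_0$, checking $p s_0\le p^{c_1}$, and noting that the generic lower bound for $\Zdel(\gamma)$ is really $|\gamma|$) is sound and only makes explicit what the paper leaves implicit.
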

\begin{proof} 
By Condition~\cond{c2},   there exists some $j^*$ such that  $\PR( \gamma, \, \gamma \cup\{j ^* \} ) \geq p^{c_1}$, which proves part~\eqref{u1}. 
Part~\eqref{u2} follows from the definition of $\wdel$ and that $|\dels(\gamma)| = |\gamma| \leq s_0$. 
A similar argument using Condition~\cond{c3} and $|\swaps(\gamma)| \leq p s_0$ proves part~\eqref{u3}.  
\end{proof}

\begin{lemma}\label{lm:ineq.R1}
Suppose that  $\PR(\gamma, \gamma') \geq p^a$ for some $a \in \bbR$ and define
\begin{align*}
b =  \frac{    a + \kappa ( |\gamma'| - |\gamma|)  }{n \kappa_1}. 
\end{align*}
If $b \in [0, 1]$, then $- R_1(\gamma, \gamma') \geq  b / 2$. 
If $b \in [-1, 0]$, then $R_1(\gamma, \gamma')  \leq  - 2 b$. 
\end{lemma}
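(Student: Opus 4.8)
The plan is to reduce the lemma to a single exact identity relating $\PR(\gamma,\gamma')$ to the ratio $V_1(\gamma')/V_1(\gamma)$, after which only the two elementary exponential inequalities already recorded in the proof of Lemma~\ref{lm:R1} are needed.

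First I would rewrite the posterior ratio in~\eqref{eq:ppr} in terms of $V_1$. Since $\OPJ_\gamma = I_n - \PJ_\gamma$, we have $y^\top(\PJ_\gamma - \PJ_{\gamma'})y = y^\top\OPJ_{\gamma'}y - y^\top\OPJ_\gamma y$, so the bracketed factor in~\eqref{eq:ppr} equals $(g^{-1}y^\top y + y^\top\OPJ_{\gamma'}y)/(g^{-1}y^\top y + y^\top\OPJ_\gamma y)$. Dividing numerator and denominator by $g^{-1}y^\top y$ and comparing with the definition of $V_1$ in~\eqref{eq:v12}, this factor is exactly $(V_1(\gamma')/V_1(\gamma))^{\log(1+g)}$. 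Recalling $1+g = p^{2\kappa_1}$, so that $\log(1+g) = 2\kappa_1\log p$ (here $\log$ is the natural logarithm, consistently with Lemma~\ref{lm:R1}\eqref{lm1.1}), this yields the exact identity
\begin{equation*}
\PR(\gamma,\gamma') = p^{\kappa(|\gamma|-|\gamma'|)}\left(\frac{V_1(\gamma')}{V_1(\gamma)}\right)^{-n\kappa_1\log p} = p^{\,\kappa(|\gamma|-|\gamma'|)\,-\,n\kappa_1\log(V_1(\gamma')/V_1(\gamma))}.
\end{equation*}

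Second, I would take $\log_p$ of both sides and insert the hypothesis $\PR(\gamma,\gamma')\ge p^a$, which gives $n\kappa_1\log(1+R_1(\gamma,\gamma')) \le \kappa(|\gamma|-|\gamma'|) - a = -n\kappa_1 b$, hence $1+R_1(\gamma,\gamma') \le e^{-b}$. (Note $V_1\ge 1$ by Lemma~\ref{lm:R1}\eqref{lm1.1}, so $R_1(\gamma,\gamma')>-1$ and the logarithm is well-defined.) Third, I would finish using the inequalities $e^{-x}\le 1-x/2$ and $e^{x}\le 1+2x$ for $x\in[0,1]$, i.e.~\eqref{ineq.exp} from the proof of Lemma~\ref{lm:R1}: if $b\in[0,1]$ then $1+R_1(\gamma,\gamma') \le e^{-b}\le 1-b/2$, so $-R_1(\gamma,\gamma')\ge b/2$; if $b\in[-1,0]$ then $-b\in[0,1]$ and $1+R_1(\gamma,\gamma') \le e^{-b}\le 1-2b$, so $R_1(\gamma,\gamma')\le -2b$.

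I do not expect a genuine obstacle: the lemma is essentially a restatement of the first displayed identity. The only points that require a little care are verifying that the identity for $\PR$ in terms of $V_1$ is exact (not merely asymptotic), and keeping straight the bookkeeping between $\log$, $\log_p$, and the exponent $\log(1+g) = 2\kappa_1\log p$; both are routine once the algebra above is carried out.
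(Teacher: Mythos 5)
Your proof is correct and follows essentially the same route as the paper: express $\PR(\gamma,\gamma')$ exactly in terms of $V_1(\gamma')/V_1(\gamma)$ via~\eqref{eq:ppr} and $1+g=p^{2\kappa_1}$, deduce $1+R_1(\gamma,\gamma')\le e^{-b}$ from $\PR(\gamma,\gamma')\ge p^a$, and conclude with the elementary inequalities~\eqref{ineq.exp}. No gaps; this matches the paper's argument.
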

 \begin{proof} 
First, by~\eqref{eq:post} and the definition of $V_1$,  we have 
\begin{align*}
\log \left\{ p^{\kappa ( |\gamma'| - |\gamma| )} \PR(\gamma, \gamma') \right\}  
= - \frac{n \log(1 + g)}{2} \log \frac{  V_1(\gamma') }{  V_1(\gamma) }. 
\end{align*}
Using  $1 + g = p^{2 \kappa_1}$ and $R_1(\gamma, \gamma') = V_1(\gamma') / V_1(\gamma) - 1$,  we obtain that 
\begin{align*}
\frac{ \log \PR(\gamma, \gamma') }{\log p} =  \kappa (|\gamma| - |\gamma'|)   -  n \kappa_1     \log [ 1 + R_1(\gamma, \gamma')  ]. 
\end{align*}
Then,   $\PR(\gamma, \gamma') \geq p^a$ implies that 
\begin{align*}
-R_1(\gamma, \gamma') \geq  1  - \exp\left\{ - \frac{a + \kappa ( |\gamma'| - |\gamma|)}{n \kappa_1}   \right\} 
= 1 - e^{ - b}.  
\end{align*}
Applying the two inequalities in~\eqref{ineq.exp} yields the result. 
\end{proof}

The drift condition provided in Proposition~\ref{th:underfit} follows from the following lemma. 

\begin{lemma}\label{lm:underfit.all}
Suppose Condition~\ref{cond:ywj} holds and $\gamma \in \cM(s_0)$ is underfitted. 
\begin{enumerate}[(i)]
\item We always have 
\begin{align*}
 0 \leq \sum_{\gamma'  \in \dels(\gamma)} R_1 (\gamma, \gamma') \MHP(\gamma, \gamma')
 \leq  \frac{ |\gamma| (e  - 1) }{2 p^{c_1}}. 
\end{align*}
\item If $c_1 \geq  ( c_0 + 1 ) \vee 2$ and $ \kappa + c_1  \leq    n \kappa_1$, 
\begin{align*}
-  \sum_{\gamma'  \in \adds(\gamma)} R_1 (\gamma, \gamma') \MHP(\gamma, \gamma')
 \geq    \frac{ \kappa + c_1  }{8 n \kappa_1}. 
\end{align*}
\item If  $|\gamma| = s_0$,  $ 4 \leq c_1  \leq n  \kappa_1$, $ n = O(p)$, $\kappa = O(s_0)$ and $s_0 \log p = O(n)$,  
\begin{align*}
-  \sum_{\gamma'  \in \swaps(\gamma)} R_1 (\gamma, \gamma') \MHP(\gamma, \gamma')
\geq  \frac{c_1 }{8 n \kappa_1} + o\left(   \frac{1}{n \kappa_1} \right).  
\end{align*}
\end{enumerate}
\end{lemma}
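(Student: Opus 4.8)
The statement to prove is Lemma~\ref{lm:underfit.all}, which provides the three bounds that combine (via~\eqref{eq:pv2}) to give the drift condition for underfitted models in Proposition~\ref{th:underfit}. The overall strategy is to handle deletion, addition, and swap moves separately, using the sign and magnitude controls on $R_1$ from Lemma~\ref{lm:R1}\eqref{lm1.2}, the conversion between a lower bound on $\PR$ and an upper bound on $R_1$ provided by Lemma~\ref{lm:ineq.R1}, the elementary bound $\MHP(\gamma,\gamma') \leq \PR(\gamma,\gamma')/2$ from~\eqref{eq:mh2}, and the bounds on the normalizing constants $\Zstar$ from Lemma~\ref{lm:underfit1}.

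First, for deletion moves (part (i)): by Lemma~\ref{lm:R1}\eqref{lm1.2}, $R_1(\gamma,\gamma\setminus\{k\}) \geq 0$ for any $k \in \gamma$, so the sum is nonnegative. For the upper bound, I would use $R_1(\gamma,\gamma\setminus\{k\}) = V_1(\gamma\setminus\{k\})/V_1(\gamma) - 1 \leq e - 1$ by Lemma~\ref{lm:R1}\eqref{lm1.1}, together with $\MHP(\gamma,\gamma\setminus\{k\}) \leq \KLOT(\gamma,\gamma\setminus\{k\}) = \wdel(\gamma\setminus\{k\}\mid\gamma)/(2\Zdel(\gamma)) \leq p^{c_0}/(2\Zdel(\gamma)) \leq 1/(2 s_0 p^{c_1 - c_0})$? — actually more simply: the deletion move has to compete with the large gain from re-adding an influential covariate, so I would instead bound $\MHP(\gamma,\gamma\setminus\{k\}) \leq \PR(\gamma,\gamma\setminus\{k\})\KLOT(\gamma\setminus\{k\},\gamma)/2 \leq \wadd(\gamma\mid\gamma\setminus\{k\})/(2\Zadd(\gamma\setminus\{k\}))$, and since $\Zadd$ of an underfitted model is $\geq p^{c_1}$ by Lemma~\ref{lm:underfit1}\eqref{u1} while $\wadd \leq p^{c_1}$, one gets $\MHP(\gamma,\gamma\setminus\{k\}) \leq 1/2$ — this is too weak. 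The clean route is: $\MHP(\gamma,\gamma\setminus\{k\}) \leq \min\{\KLOT(\gamma,\gamma\setminus\{k\}),\PR(\gamma,\gamma\setminus\{k\})\KLOT(\gamma\setminus\{k\},\gamma)\}$; bound the second factor using $\PR(\gamma,\gamma\setminus\{k\}) \leq 1$ or $\leq p^{c_0}$ (since $\gamma\setminus\{k\}$ may be overfitted), and $\KLOT(\gamma\setminus\{k\},\gamma) = \wadd(\gamma\mid\gamma\setminus\{k\})/(2\Zadd(\gamma\setminus\{k\}))$; here $\wadd(\gamma\mid\gamma\setminus\{k\}) = \PR(\gamma\setminus\{k\},\gamma)\wedge p^{c_1} \leq \PR(\gamma\setminus\{k\},\gamma)$, so $\PR(\gamma,\gamma\setminus\{k\})\wadd(\gamma\mid\gamma\setminus\{k\}) \leq 1$, giving $\MHP(\gamma,\gamma\setminus\{k\}) \leq 1/(2\Zadd(\gamma\setminus\{k\})) \leq p^{-c_1}/2$. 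Summing over $|\gamma|$ covariates and multiplying by $e-1$ yields the stated bound $|\gamma|(e-1)/(2p^{c_1})$.

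Second, for addition moves (part (ii)), which is the heart of the matter: by Lemma~\ref{lm:R1}\eqref{lm1.2}, $R_1(\gamma,\gamma\cup\{j\}) \leq 0$ for all $j$, so all terms in the sum are $\leq 0$; I need a lower bound on the magnitude of their sum. The naive approach of isolating the single covariate $j^*$ from Condition~\cond{c2} fails because $\KLOT(\gamma,\gamma\cup\{j^*\})$ can be as small as $O(p^{-1})$ when every addition move has huge $\PR$. Instead I would work directly with the whole sum: for each $j \notin \gamma$, write $-R_1(\gamma,\gamma\cup\{j\})\MHP(\gamma,\gamma\cup\{j\})$ and lower-bound it. Using $\MHP(\gamma,\gamma\cup\{j\}) = \min\{\KLOT(\gamma,\gamma\cup\{j\}),\PR(\gamma,\gamma\cup\{j\})\KLOT(\gamma\cup\{j\},\gamma)\}$, and noting $\KLOT(\gamma,\gamma\cup\{j\}) = \wadd(\gamma\cup\{j\}\mid\gamma)/(2\Zadd(\gamma))$ with $\wadd = \PR(\gamma,\gamma\cup\{j\})\wedge p^{c_1}$, while $\KLOT(\gamma\cup\{j\},\gamma) = \wdel(\gamma\mid\gamma\cup\{j\})/(2\Zdel(\gamma\cup\{j\})) \geq 1/(2 s_0 p^{c_0})$ by Lemma~\ref{lm:underfit1}\eqref{u2}, one checks that when $\PR(\gamma,\gamma\cup\{j\})$ is moderately large the accepted probability is essentially $\KLOT(\gamma,\gamma\cup\{j\})$; the key is that $\sum_{j}\KLOT(\gamma,\gamma\cup\{j\}) = 1/2$ (the total addition probability), and $-R_1$ is bounded below via Lemma~\ref{lm:ineq.R1} whenever $\PR(\gamma,\gamma\cup\{j\}) \geq p^{-\kappa}$ (i.e.\ whenever adding $j$ doesn't decrease $\PR$ too much), giving $-R_1(\gamma,\gamma\cup\{j\}) \geq (\kappa + \log_p\PR(\gamma,\gamma\cup\{j\}))/(2n\kappa_1)$ in the relevant range. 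I would partition $\adds(\gamma)$ according to whether $\PR(\gamma,\gamma\cup\{j\})$ is large (say $\geq p^{(\kappa+c_1)/2}$, "good" moves) or not, show the good moves carry at least a constant fraction (e.g. $1/4$ after accounting for acceptance) of the addition probability mass — using that the $j^*$ of Condition~\cond{c2} alone contributes $\PR \geq p^{c_1}$ so $\wadd(\gamma\cup\{j^*\}\mid\gamma) = p^{c_1}$, hence $\KLOT(\gamma,\gamma\cup\{j^*\}) \geq p^{c_1}/(2\Zadd(\gamma))$, and crucially that the good moves are exactly those with $\wadd$ saturated at $p^{c_1}$, so their total proposal weight dominates — and on those moves $-R_1 \gtrsim (\kappa+c_1)/(n\kappa_1)$. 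Careful bookkeeping with the acceptance probabilities (showing accepted probability $\geq$ half the proposal probability on good moves, using $\PR(\gamma,\gamma\cup\{j\})\KLOT(\gamma\cup\{j\},\gamma) \geq p^{(\kappa+c_1)/2}/(2s_0 p^{c_0})$, which is $\geq \KLOT(\gamma,\gamma\cup\{j\}) = p^{c_1}/(2\Zadd(\gamma))$ provided $c_1 \geq c_0+1$ and $\Zadd(\gamma)$ not too small — and $\Zadd(\gamma) \leq (p-|\gamma|)p^{c_1} \leq p^{c_1+1}$) yields the factor $1/8$ in $(\kappa+c_1)/(8n\kappa_1)$.

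Third, for swap moves (part (iii)), when $|\gamma| = s_0$ the chain proposes swaps with probability $1/2$ instead of additions; I would essentially repeat the addition-move argument but now for the pair $(j,k)$ from Condition~\cond{c3} with $\PR(\gamma,(\gamma\cup\{j\})\setminus\{k\}) \geq p^{c_1}$, so $\wswap$ is saturated at $p^{c_1}$ there. Since $|\swaps(\gamma)| \leq p s_0$ and $\Zswap(\gamma) \leq s_0 p^{c_1+1}$ by Lemma~\ref{lm:underfit1}\eqref{u3}, and $\wswap = p s_0 \vee \PR \wedge p^{c_1}$ so the "floor" $p s_0$ ensures $\KLOT(\gamma,\gamma')$ for swap is never too small, the same partition-into-good-moves argument applies; Lemma~\ref{lm:R1}\eqref{lm1.2} gives $R_1 \leq 0$ for the deletion component and $R_1 \leq 0$ overall for a swap since $\OPJ$ is monotone only under addition — actually for a swap $\gamma' = (\gamma\cup\{j\})\setminus\{k\}$ the sign of $R_1$ is not automatic, so here I must rely entirely on Lemma~\ref{lm:ineq.R1}: $\PR(\gamma,\gamma') \geq p^{c_1}$ with $|\gamma'| = |\gamma|$ gives $b = c_1/(n\kappa_1) \in [0,1]$ hence $-R_1(\gamma,\gamma') \geq c_1/(2n\kappa_1)$ on good swaps. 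The conditions $n = O(p)$, $\kappa = O(s_0)$, $s_0\log p = O(n)$ enter precisely to control the error terms: $\kappa$ appearing in $|\gamma|-|\gamma'|$ is zero for swaps so $\kappa=O(s_0)$ is used only to ensure the bad-move contributions (where $\PR$ could be small, pulling $R_1$ up toward positive via $\log_p\PR(\gamma,\gamma') \geq -\kappa$) are $o(1/n\kappa_1)$, and $n = O(p)$, $s_0\log p = O(n)$ make terms like $s_0/p$, $1/p^{c_1-1}$ negligible relative to $1/(n\kappa_1)$. I expect the main obstacle to be the addition-move bound (part (ii)): the delicate point is to show that even in the adversarial case where all $p - |\gamma|$ addition moves have $\PR \geq p^{c_1}$, a constant fraction of the proposal mass still lands on moves that are both accepted with probability bounded below and have $-R_1$ of the right order — this requires simultaneously tracking the normalizing constant $\Zadd(\gamma)$, the acceptance ratios, and the conversion to $V_1$ decrements, which is where the precise constant $1/8$ and the hypothesis $c_1 \geq (c_0+1)\vee 2$ get used.
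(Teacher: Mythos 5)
Your part (i) is essentially the paper's own argument (bound $\MHP(\gamma,\gamma\setminus\{k\})$ through the reverse proposal, use $\PR(\gamma,\gamma\setminus\{k\})\,\wadd(\gamma\mid\gamma\setminus\{k\})\leq 1$ and $\Zadd(\gamma\setminus\{k\})\geq p^{c_1}$, then $R_1\leq e-1$), and your overall architecture for (ii)--(iii) — sum over a set of ``good'' moves rather than a single canonical one, compare their total weight to the normalizing constant — is also the paper's. But the calibration of ``good'' is where your plan breaks. With your threshold $\PR(\gamma,\gamma\cup\{j\})\geq p^{(\kappa+c_1)/2}$ the good set can be \emph{empty}: the lemma allows $\kappa>c_1$, in which case the move $j^*$ guaranteed by~\cond{c2}, which only satisfies $\PR\geq p^{c_1}$, need not qualify. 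With your alternative threshold (``$\wadd$ saturated at $p^{c_1}$'') the claim that good moves carry a constant fraction of the proposal mass is false: if the other $p-1$ additions all have $\PR\approx p^{c_1-1/2}$, the saturated move contributes only an $O(p^{-1/2})$ fraction of $\Zadd(\gamma)$. The paper's choice $\cG=\{\gamma'\in\adds(\gamma):\PR(\gamma,\gamma')\geq p^{c_1-1}\}$ is exactly what makes the argument close: every excluded move has weight below $p^{c_1-1}$, so all bad moves together contribute at most $p\cdot p^{c_1-1}=p^{c_1}$, i.e.\ no more than the single guaranteed good move, giving $\Zadd(\gamma)\leq W+2p^{c_1}$; meanwhile losing the additive $1$ in the exponent still leaves $-R_1\geq A/2$ on $\cG$ because $c_1\geq 2$ (for swaps the analogous threshold is $p^{c_1}/(ps_0)$ and $c_1\geq 4$ absorbs the $\log_p(ps_0)\leq 2$ loss). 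Your acceptance bookkeeping is also more convoluted than needed: the paper shows once and for all that $\MHP(\gamma,\gamma')\geq \wadd(\gamma'\mid\gamma)/(2\Zadd(\gamma))$ for \emph{every} addition move, using $\Zadd(\gamma)\geq p^{c_1}\geq s_0p^{c_0}$ (this is where $c_1\geq c_0+1$ enters), rather than only on good moves.

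The second genuine gap is in your treatment of the ``bad'' swaps in (iii), i.e.\ those with $\PR(\gamma,\gamma')<1$, for which $R_1$ can be positive. Your proposed control rests on ``$\log_p\PR(\gamma,\gamma')\geq-\kappa$'', which is false for swaps: since $|\gamma'|=|\gamma|$ the prior factor cancels in~\eqref{eq:ppr}, and a swap that removes an influential covariate can have $\PR$ exponentially small in $n$, so Lemma~\ref{lm:ineq.R1} gives no useful upper bound on $R_1$ there (only $R_1\leq e-1$). The paper handles this with a two-case argument you would need to reproduce: first it shows that any swap with $\PR(\gamma,\gamma')<1$ necessarily yields an \emph{underfitted} $\gamma'$ (otherwise Lemma~\ref{lm:overfit1}\eqref{o4} forces $\PR(\gamma',\gamma)\leq p^{-(c_0+c_1)}$, a contradiction), so $\Zswap(\gamma')\geq p^{c_1}$ by Lemma~\ref{lm:underfit1}\eqref{u3}; then for $\PR<(ps_0)^{-1}$ it gets $\MHP\leq 1/(2p^{c_1})$ with $R_1\leq e-1$, and for $\PR\in[(ps_0)^{-1},1)$ it gets $\MHP\leq s_0/(2p^{c_1-1})$ with $R_1\leq 2\log_p(ps_0)/(n\kappa_1)$ from Lemma~\ref{lm:ineq.R1}. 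Without the underfittedness step the crude bound $\MHP\leq\KLOT(\gamma,\gamma')\leq ps_0/(2p^{c_1})$ summed over up to $ps_0$ bad swaps with $R_1\leq e-1$ gives only $O(s_0^2/p^{c_1-2})$, which under the stated assumptions (e.g.\ $\kappa_1=O(s_0)$, $s_0$ as large as $p/\log p$) is not necessarily $o(1/(n\kappa_1))$. So both the threshold calibration in (ii)--(iii) and the bad-swap analysis in (iii) need to be repaired before your outline yields the stated constants.
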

 
 \begin{proof}[Proof of part (i) (deletion)]
By Lemma~\ref{lm:underfit1}\eqref{u1},  we have
\begin{align*}
 \PR(\gamma, \gamma \setminus \{k\})  \KLOT(  \gamma \setminus \{k\}, \gamma)  = \frac{ 
  \PR(\gamma, \gamma \setminus \{k\}) \wadd( \gamma \mid  \gamma \setminus \{k\} ) }{2 \Zadd(\gamma \setminus \{k\}  ) } \leq \frac{1}{2 p^{c_1} },
\end{align*}
since $B(\gamma, \gamma') \wadd(\gamma' \mid \gamma) \leq 1$ for any $\gamma' \in \adds(\gamma)$. 
It then follows from~\eqref{eq:mh2}  that 
\begin{align*}
R_1 (\gamma, \gamma \setminus \{k\}) \MHP(\gamma, \gamma \cup \{k\})  
\leq \; &   \frac{ R_1 (\gamma, \gamma \setminus \{k\})    }{2p^{c_1}}. 
\end{align*}
By Lemma~\ref{lm:R1}\eqref{lm1.1}, $R_1 (\gamma, \gamma \setminus \{k\})  \leq  e - 1$, from which the asserted bound follows.   
\end{proof}
  
\begin{proof}[Proof of part (ii) (addition)]
Define  a set of ``good'' addition moves as 
$$\cG = \cG(\gamma) = \{ \gamma \cup \{j\} \colon j \notin \gamma, \,     \PR(\gamma, \gamma \cup \{j\}) \geq p^{c_1 - 1} \}.$$ 
Our goal is to bound $- R_1 (\gamma, \gamma')   \MHP(\gamma, \gamma')$ summed over $\gamma' \in \cG$. 

By Condition~\cond{c2}, $\cG$ contains at least one element, which we denote by $\cT(\gamma)$,  such that $\PR(\gamma, \cT(\gamma)) \geq p^{c_1}$. By Lemma~\ref{lm:ineq.R1} and the assumption that $c_1 + \kappa \leq n \kappa_1$,  
\begin{equation}\label{eq:def.A0}
  - R_1 (\gamma, \cT(\gamma))  \geq  \frac{  \kappa + c_1  }{2 n \kappa_1}   \eqqcolon A. 
\end{equation}
Using Lemma~\ref{lm:ineq.R1} again and the assumption that $c_1 \geq 2$ (which implies $2(\kappa + c_1 - 1) \geq \kappa + c_1$),  we find that  
\begin{equation} \label{eq:A1}
  - R_1 (\gamma, \gamma')  \geq  \frac{   \kappa + c_1 - 1  }{2 n \kappa_1} \geq  \frac{A}{2}, \quad  \forall \, \gamma' \in \cG. 
\end{equation}

To bound $\MHP(\gamma, \gamma')$ for $\gamma' \in \cG$, observe that for any $\gamma' \in \adds(\gamma)$, 
\begin{align*}
     \KLOT(\gamma', \gamma) = \frac{ \wdel(\gamma' \mid \gamma ) }{2 \Zdel(\gamma')} \geq \frac{1}{2 \Zdel(\gamma')} \geq \frac{1}{2 s_0 p^{c_0} } 
\end{align*}
by Lemma~\ref{lm:underfit1}\eqref{u2}. 
It then follows from~\eqref{eq:mh2}  that for any $\gamma' \in \adds(\gamma)$, 
\begin{align*}
 \MHP(\gamma, \gamma')   = \;& \min \{ \KLOT(\gamma, \gamma'), \, \PR(\gamma, \gamma') \KLOT(\gamma', \gamma) \}  \\
\geq \;&    \min \left\{
\frac{ \wadd(  \gamma' \mid \gamma ) }{2 \Zadd(\gamma)}, \, \frac{\PR(\gamma, \gamma')}{2 s_0 p^{c_0}} \right\}   \\
\geq \;& \frac{\wadd(  \gamma' \mid \gamma ) }{2  } \min \left\{ \frac{ 1}{  \Zadd(\gamma)}, \, \frac{1}{ s_0 p^{c_0}} \right\}. 
\end{align*}
By Lemma~\ref{lm:underfit1}\eqref{u1} and the assumption $c_1 \geq c_0 + 1$,  we have $ \Zadd(\gamma) \geq p^{c_1} \geq s_0 p^{c_0}$.
Using the above displayed inequality, we  obtain that 
\begin{equation}\label{eq:under.add1}
 \MHP(\gamma, \gamma')      \geq   \frac{ \wadd( \gamma' \mid \gamma ) }{ 2 \Zadd(\gamma) }, \quad  \forall \, \gamma' \in \adds(\gamma).
\end{equation} 

Define $\cG' = \cG \setminus \{\cT(\gamma) \}$, which may be empty. 
Let $ W =  \sum_{\gamma' \in \cG'}\wadd(  \gamma' \mid \gamma).$ 
Then, 
\begin{equation}\label{eq:under.Z}
\begin{aligned}
\Zadd(\gamma) =\;& \sum\nolimits_{\gamma' \in \adds(\gamma)} \wadd( \gamma' \mid \gamma )   \\
= \;& p^{c_1} + W + \sum\nolimits_{\gamma' \in \adds(\gamma) \setminus  \cG(\gamma) } \wadd( \gamma' \mid \gamma )    \\
\leq  \;& W + 2 p^{c_1}, 
\end{aligned}
\end{equation}
since for any $\gamma' \in \adds(\gamma) \setminus  \cG(\gamma)$, we have $\PR(\gamma, \gamma') < p^{c_1 - 1}$.  
By Lemma~\ref{lm:R1}\eqref{lm1.2}, $R_1(\gamma, \gamma') \leq 0$ for any $\gamma' \in \adds(\gamma)$,  which implies that 
\begin{align*}
   \sum_{\gamma' \in \adds(\gamma)}  R_1 (\gamma, \gamma')   \MHP(\gamma, \gamma')     
   \leq      \sum_{\gamma' \in \cG} R_1 (\gamma, \gamma')   \MHP(\gamma, \gamma'). 
\end{align*}
Some algebra using~\eqref{eq:def.A0},~\eqref{eq:A1},~\eqref{eq:under.add1} and~\eqref{eq:under.Z} yields that 
 \begin{equation}\label{eq:similar}
 \begin{aligned}
    - \sum_{\gamma' \in \cG} R_1 (\gamma, \gamma')   \MHP(\gamma, \gamma')       
\geq  \;&  \frac{Ap^{c_1}}{2 \Zadd(\gamma)}  +  \sum_{\gamma' \in \cG'}  \frac{  A   \wadd(\gamma' \mid \gamma )    }{4 \Zadd(\gamma)} \\
=  \;&   \frac{  A (2 p^{c_1} +  W)  }{4 \Zadd(\gamma)}   \geq   \frac{A}{4}, 
\end{aligned}
\end{equation}
which concludes the proof. 
\end{proof}

\begin{proof}[Proof of part (iii) (swap)]
First, we use an argument similar to the proof of part (ii) to analyze those ``good'' moves. 
Define 
\begin{align*}
\cG_1(\gamma) =\left \{\gamma' \in \swaps(\gamma) \colon  \PR(\gamma, \gamma') \geq  \frac{ p^{c_1} }{ p s_0 }  \right\}. 
\end{align*}
By Condition~\cond{c3}, there exists  $\cT(\gamma) \in \cG_1(\gamma)$ such that $\PR(\gamma, \cT(\gamma)) \geq p^{c_1}$.  
By Lemma~\ref{lm:ineq.R1}, 
\begin{align*}
- R_1(\gamma, \cT(\gamma))   \geq \frac{c_1 }{2 n \kappa_1}. 
\end{align*}
Similarly, for any $\gamma' \in \cG_1(\gamma)$, 
\begin{align*}
- R_1(\gamma, \gamma') \geq   \frac{ (c_1 - 1)  -  (\log s_0)/(\log p) }{2 n \kappa_1} \geq   \frac{ c_1 - 2  }{2 n \kappa_1} 
\geq \frac{c_1 }{4 n \kappa_1}, 
\end{align*}
since $c_1 \geq 4$. 
By Lemma~\ref{lm:underfit1}\eqref{u3}, $\Zswap( \gamma')  \leq s_0 p^{c_1 + 1}$. 
Further, since $c_1 \geq 4$, for any $\gamma' \in \cG_1(\gamma)$, $\PR(\gamma, \gamma')  \geq \wswap( \gamma' \mid \gamma)$.  Applying~\eqref{eq:mh2}, we obtain that
\begin{align*}
   \MHP(\gamma, \gamma')         \geq  \wswap( \gamma' \mid \gamma)  \min \left\{  \frac{1}{2  \Zswap(\gamma)}, \; \frac{  p s_0}{2  s_0 p^{c_1 + 1} } \right\} 
   =  \frac{ \wswap( \gamma' \mid \gamma )  }{ 2  \Zswap(\gamma) }, \quad    \forall \, \gamma' \in \cG_1. 
\end{align*} 
By a calculation  similar to~\eqref{eq:under.Z} and~\eqref{eq:similar}, we find that 
\begin{align*}
\Zswap(\gamma) \leq 2 p^{c_1} + \sum_{\gamma' \in \cG_1 \setminus \{ \cT(\gamma)\} } \wswap( \gamma' \mid \gamma), 
\end{align*}
which yields 
\begin{equation}\label{eq:swap.G1}
- \sum_{\gamma' \in \cG_1 }  R_1 (\gamma, \gamma' )   \MHP(\gamma, \gamma' )     \geq  \frac{ c_1 }{8 n \kappa_1}. 
\end{equation}

Let $\cG_2(\gamma) = \{\gamma' \in \swaps(\gamma) \colon   \PR(\gamma, \gamma')  < 1 \}$. 
We claim that any $\gamma' \in \cG_2$ is still underfitted. 
If $\gamma' \in \cG_2$ is overfitted, $\gamma' = (\gamma \cup \{j\}) \setminus \{k\}$  for some $j \in \Ga$ and $k \in \Gb$. 
By Lemma~\ref{lm:overfit1}\eqref{o4},  $\PR(\gamma', \gamma) \leq p^{- (c_0 + c_1)} $, which yields the contradiction. 
Since $\gamma' \in \cG_2$ is underfitted, $\Zswap(\gamma') \geq p^{c_1}$ by Lemma~\ref{lm:underfit1}\eqref{u3}. 
Consequently, 
\begin{align*}
 \MHP(\gamma, \gamma')     \leq \PR(\gamma, \gamma') \KLOT(\gamma', \gamma) \leq \PR(\gamma, \gamma')\frac{  \wswap(\gamma \mid \gamma') }{2  p^{c_1}}. 
\end{align*}
If $\PR(\gamma, \gamma') < (p s_0)^{-1}$, then $\PR(\gamma, \gamma')  \wswap(\gamma \mid \gamma')  \leq 1$.  By Lemma~\ref{lm:R1}\eqref{lm1.1},  $R_1(\gamma, \gamma') \leq e - 1$, which yields that 
\begin{align*}
R_1(\gamma, \gamma') \MHP(\gamma, \gamma')  \leq  \frac{e - 1}{2 p^{c_1}}. 
\end{align*}
If $\PR(\gamma, \gamma') \in [ (p s_0)^{-1}, 1)$,  $\wswap(\gamma \mid \gamma') = p s_0$, and thus $\MHP(\gamma, \gamma') \leq s_0 / 2 p^{c_1 - 1} $ by Lemma~\ref{lm:underfit1}\eqref{u3}. 
Apply Lemma~\ref{lm:ineq.R1} to get 
\begin{align*}
R_1(\gamma, \gamma') \MHP(\gamma, \gamma')  \leq \frac{2 \log (ps_0) }{ n \kappa_1 \log p}    \MHP(\gamma, \gamma')  \leq \frac{  s_0 \log (ps_0) }{ n \kappa_1  p^{c_1 - 1} \log p}      \leq  \frac{ 2 s_0  }{ n \kappa_1  p^{c_1 - 1}  }.
\end{align*}
Using the assumptions $c_1 \geq 4$, $n = O(p)$ and $\kappa = O(s_0)$,  we get 
\begin{equation}\label{eq:swap.G2}
R_1(\gamma, \gamma') \MHP(\gamma, \gamma') = O \left(    \frac{s_0}{ n \kappa_1  p^3}  \right).
\end{equation}  

If $\gamma' \in \swaps(\gamma) \setminus \cG_2(\gamma)$, then $R_1(\gamma, \gamma') \leq 0$. 
Hence, it follows from~\eqref{eq:swap.G1} and~\eqref{eq:swap.G2} that 
\begin{align*}
  \sum_{\gamma'  \in \swaps(\gamma)} R_1 (\gamma, \gamma') \MHP(\gamma, \gamma')
\leq \;&    \sum_{\gamma' \in \cG_1  }  R_1 (\gamma, \gamma' )   \MHP(\gamma, \gamma' )     
 + \sum_{\gamma' \in \cG_2} R_1 (\gamma, \gamma' )   \MHP(\gamma, \gamma' )       \\
\leq \;&  - \frac{c_1  }{8 n  \kappa_1} +  O \left(    \frac{s_0^2}{ n \kappa_1  p^2}  \right). 
\end{align*}
Finally, notice that $s_0 \log p =O(n)$ and $n = O(p)$ imply that $s_0 = o(p)$. 
Hence, the asymptotic term in the last step is $o ( (n \kappa_1)^{-1} )$. 
\end{proof}

\subsection{Proof of Theorem~\ref{th:mh.mix}}\label{proof.mix}
\begin{proof}
To apply the results provided in Section~\ref{sec:main}, we need to consider the lazy version of the transition matrix $\MHP$,  $\bP_{\rm{lazy}} = (\MHP + \bI) / 2$. But this  is equivalent to dividing all the proposal probabilities by $2$. 
Hence, by Propositions~\ref{th:overfit} and~\ref{th:underfit}, for the lazy chain we have 
\begin{align*}
\lambda_1 = 1 - \frac{c_1}{16 n \kappa_1} + o( (n \kappa_1)^{-1} ), \quad \quad
\lambda_2  = 1 - \frac{1}{8 s_0} + o ( s_0^{-1}).
\end{align*}
For sufficiently large $n$, we can assume that 
\begin{align*}
\lambda_1 \leq 1 - \frac{c_1}{20 n \kappa_1}, \quad \quad
\lambda_2  \leq 1 - \frac{1}{10 s_0}. 
\end{align*}
Let $q$ be the probability of the chain escaping from the set $\cO(\gamma^*, s_0)$. 
By Condition~\cond{c2} and Lemma~\ref{lm:overfit1}\eqref{o4}, $q \leq p^{-c_0} \leq p^{-2}$. 
Since $\kappa_1 \leq \kappa = O(s_0) = o(p)$, we have $q = o(  (1 - \lambda_1)  \wedge (1  - \lambda_2) )$. 
By Lemma~\ref{lm:R1}\eqref{lm1.1},  $ \sup_{\gamma \in \cO} V_2(\gamma) \leq e$ and $\sup_{\gamma \in \cM(s_0)} V_1(\gamma) \leq e$. 
Thus, we may assume $\bP_{\rm{lazy}}$ satisfies that assumptions of Corollary~\ref{th:hd} with $K = e$, $M = 2e$ and $C = 6 / e$ (other values of $C$ will yield slightly different constants in the bound).
The asserted upper bound on the mixing time then follows from a routine calculation using Corollary~\ref{coro:hd}. 
\end{proof}

\subsection{Proof of Lemma~\ref{lm:general} } 
\label{sec:proof.general}
\begin{proof}
Assume $\pi(x') / \pi(x) \geq b $ and $x' \in \cN(x)$. 
To show the acceptance probability of the proposal move from $x$ to $x'$ is $1$, it suffices to prove that 
\begin{align*}
    \frac{ \pi(x') }{ \pi(x)}\frac{ \tilde{\bK}_{\rm{lb}}(x', x) }{ \tilde{\bK}_{\rm{lb}}(x, x')}  \geq 1. 
\end{align*}  
We simply bound $\tilde{\bK}_{\rm{lb}}(x, x')$ by $1$. 
Since $f$ is non-decreasing and $\pi(x) / \pi(x') \leq b^{-1}$, the assumption $f( b^{-1} ) \leq \underline{f}$ implies that 
$f \left( \frac{\pi(x)}{\pi(x')}  \right) \leq \underline{f}.$
Hence, by the definition of $\tilde{\bK}_{\rm{lb}}$, we have 
\begin{align*}
    \tilde{\bK}_{\rm{lb}}(x', x) =\;& \frac{ \underline{f} }{ \tilde{Z}_f(x')  } \geq \frac{ \underline{f} }{ |\cN(x')| \overline{f} }. 
\end{align*}
The claim then follows from the assumption that $\underline{f} b \geq \overline{f}  |\cN(x')| $. 
\end{proof}

\section{Data and code availability}\label{sec:data}
The two GWAS data sets used in Section~\ref{sec:gwas} are the Primary Open-Angle Glaucoma Genes and Environment (GLAUGEN) Study (accession no. phs000308.v1.p1) and the National Eye Institute Glaucoma Human Genetics Collaboration (NEIGHBOR) Consortium Glaucoma Genome-Wide Association Study (accession no. phs000238.v1.p1). 
Both can be obtained from dbGaP (\url{https://www.ncbi.nlm.nih.gov/gap/}). The genotype data of both studis were generated using the Illumina Human660W-Quad\_v1\_A beadchip. 
The code used for simulation studies described in Section~\ref{sec:sim} is available at \url{https://web.stat.tamu.edu/~quan/code/lit\_mh.tgz}. 

\bibliographystyles{plainnat}
\bibliographys{ref.bib}

\end{document}